\newtheorem{Defn}{Definition}
\newtheorem{lem}{Lemma}
\newtheorem{thm}{Theorem}
\begin{document}
	
	\title{Bridge the Present and Future:\\ A Cross-Layer Matching Game in Dynamic Cloud-Aided Mobile Edge Networks}
	
	\author{Houyi Qi*, Minghui Liwang*, \IEEEmembership{Member}, \IEEEmembership{IEEE}, Xianbin Wang,~\IEEEmembership{Fellow}, \IEEEmembership{IEEE}, Li Li, \IEEEmembership{Member}, \IEEEmembership{IEEE}, \\Wei Gong, \IEEEmembership{Member}, \IEEEmembership{IEEE}, Jian Jin, Zhenzhen Jiao
		\thanks{H. Qi (qihouyi@stu.xmu.edu.cn) and M. Liwang (minghuilw@xmu.edu.cn) are with the School of Informatics, Xiamen University, Fujian, China. X. Wang (xianbin.wang@uwo.ca) is with the Department of Electrical and Computer Engineering, Western University, Ontario, Canada. L. Li (lili@tongji.edu.cn) and W. Gong (weigong@tongji.edu.cn) are with the Department of Control Science and Engineering, and also with the Shanghai Research Institute for Intelligent Autonomous Systems, Tongji University, Shanghai, China. J. Jin (jin.jian@caict.ac.cn) is with the Research Institute of Industrial Internet of Things, CAICT, China. Z. Jiao (jiaozhenzhen@teleinfo.cn) is with the iF-Labs, Beijing Teleinfo Technology Co., Ltd., CAICT, China.
			
		Corresponding author: Minghui Liwang
		
		*H. Qi and M. Liwang contributed equally to this work.
	}}

	\IEEEtitleabstractindextext{
		\begin{abstract}
			\justifying
\textbf{C}loud-\textbf{a}ided \textbf{m}obile \textbf{e}dge \textbf{n}etworks (CAMENs) allow edge servers (ESs) to purchase resources from remote cloud servers (CSs), while overcoming resource shortage when handling computation-intensive tasks of mobile users (MUs). Conventional trading mechanisms (e.g., onsite trading) confront many challenges, including decision-making overhead (e.g., latency) and potential trading failures. This paper investigates a series of cross-layer matching mechanisms to achieve stable and cost-effective resource provisioning across different layers (i.e., MUs, ESs, CSs), seamlessly integrated into a novel hybrid paradigm that incorporates futures and spot trading. In futures trading, we explore an \textbf{o}verbooking-driven \textbf{a}forehand \textbf{c}ross-\textbf{l}ayer \textbf{m}atching (OA-CLM) mechanism, facilitating two future contract types: contract between MUs and ESs, and contract between ESs and CSs, while assessing potential risks under historical statistical analysis. In spot trading, we design two backup plans respond to current network/market conditions: determination on contractual MUs that should switch to local processing from edge/cloud services; and an \textbf{o}n\textbf{s}ite \textbf{c}ross-\textbf{l}ayer \textbf{m}atching (OS-CLM) mechanism that engages participants in real-time practical transactions. We next show that our matching mechanisms theoretically satisfy stability, individual rationality, competitive equilibrium, and weak Pareto optimality. Comprehensive simulations in real-world and numerical network settings confirm the corresponding efficacy, while revealing remarkable improvements in time/energy efficiency and social welfare.
		\end{abstract}
		\begin{IEEEkeywords}
			 Matching game, cloud-aided mobile edge networks, futures and spot trading, overbooking, cross-layer, risk control
		\end{IEEEkeywords}
	}
	
\maketitle
\IEEEdisplaynontitleabstractindextext
%
\IEEEpeerreviewmaketitle

\section{Introduction}
\IEEEPARstart{T}{he} past decade has witnessed a leap in the proliferation of Internet of Things (IoT), where the ever-growing loT applications that are computation-intensive and delay-sensitive have raised great challenges to resource- and capability-constrained mobile users (MUs, e.g., intelligent vehicles, smartphones) \cite{survey1}. To alleviate heavy workloads of MUs, applying cloud computing (CC) technology \cite{CC1,CC2,CC3,R2_2} can facilitate the migration of certain pressure of MUs to remote clouds, which, however, may incur unacceptable delays and heavy burdens on backhaul networks \cite{CCbad}. To this end, mobile edge computing (MEC) is emerging as a popular solution to offer responsive and cost-effective computing services near the network edge \cite{MEC1,MEC2,R2_3,R2_4,R2_5,R1}. Nevertheless, limited resources of edge servers (ESs) impose significant challenges in meeting the ever-growing service demands, especially during rush hour periods of data traffic \cite{8626532,9763875}. Fortunately, the \textbf{c}loud-\textbf{a}ided \textbf{m}obile \textbf{e}dge \textbf{n}etworks (CAMENs) offer a hybrid computing platform that flexibly leverages both CC and MEC, where cloud servers (CSs) can provide reliable backup resources to ESs, thereby attracting more MUs and thus financial incomes \cite{9763875}.

\subsection{Motivation}
	Ensuring the needed resources in CAMENs often calls for a certain resource trading mechanism \cite{9154594} with incentives, where \textit{requestors} (e.g., MUs and ESs) can offload a certain amount of task data to \textit{servers} (e.g., ESs and CSs) for processing, while paying for the acquired computing services. Nowadays, mainstream cloud and edge service providers, such as Amazon Web Services and Google Cloud Platform, commonly employ fixed-price strategies \cite{R2_6}, which offer reliable and available services to meet users' demands. Nevertheless, the rapid development of innovative technologies in computing, communication, and artificial intelligence calls for a more flexible manner of service provisioning (e.g., negotiable service prices and trading resources) for numerous mobile devices. Among the recent literature, fluctuant and variable service prices have raised wide attention \cite{9763875, 9154594}.
	When it comes to getting the resources needed for task execution, existing solutions on resource provisioning have primarily centered on one of the following trading strategies\cite{9771321,9453820}: \textit{i)} \textit{spot resource trading}, where requestors and servers reach an agreement relying on the current network/market conditions (e.g., an onsite trading mode); \textit{ii)} \textit{futures resource trading}, where servers and requestors pre-sign \textit{forward contracts}, which can be fulfilled for guiding the resource provisioning procedure in the future.
Note that a practical \textit{transaction} involves a trading event between resource servers and requestors, where servers provide computing services to requestors in exchange for payment. To bridge the gap between personalized service demands and supplies, matching theory offers an effective technique\cite{1990Two}, which has been extensively explored in existing literature, e.g., \cite{9154594} and \cite{9416305}.

\subsubsection{Hybrid resource trading: bridge present and future} 
To facilitate resource provisioning with proper incentives, spot trading has been widely deployed, allowing resource buying and selling among requestors and servers \cite{9453820,9416305} according to the current network/market information during each transaction. However, such an onsite mode can face significant drawbacks, especially for excessive overhead \cite{9763875,8854324,7476840,9357934} and potential trading failures \cite{9763875,8854324}. For example, the onsite decision-making procedure, e.g., looking for stable matching groups, can consume excessive time and energy, and thereby leave a negative impact on the quality of experience (QoE) of computing services. Moreover, limited computing resources can prevent some requesters from obtaining their required service, although they have spent time/energy during the trading decision-making process.

To cope with the aforementioned drawbacks, futures trading has been introduced to encourage participants to negotiate forward contracts ahead of future practical transactions, relying on analyzing past data/information \cite{9771321}. Nevertheless, various issues may still be caused. For example, improper contract terms can result in negative participants' utilities (e.g., an overlarge price can bring losses to MUs), and unsatisfying trading experience (e.g., an ES fails to deliver on its promise of computing service due to limited resource supply)\cite{9763875}. The above considerations collectively form the key motivation behind this work, in which we investigate a hybrid trading market that encompasses the advantages of both futures and spot trading. More importantly, to confront the dynamic resource demand and supply in CAMENs, we further introduce the concept of \textit{overbooking}, which plays an interesting and valuable role in facilitating seamless computing services.

\subsubsection{Overbooking: a helpful tool in dealing with dynamics}
Futures trading generally relies on the presale of resources, e.g., resources can be booked before practical demands. Conventional booking strategy encourages the amount of booked resources not to exceed the theoretical resource supply, which, however, confronts difficulties in handling fluctuant resource demand/supply. To immigrate such challenges, various commercial sectors, including airlines\cite{MA2019192}, hotels\cite{haynes2020perceptions}, and telecom companies\cite{9149184}, employ the strategy of \textit{overbooking} their resources (e.g., flight tickets, hotel rooms, and spectrum). For instance, in an effort to maximize their revenue, airlines frequently sell more tickets than the available seats on a flight. This practice helps ensure that flights depart with minimal vacant seats, as otherwise, a flight often takes off with 15\% or more unoccupied seats, resulting in economic losses\cite{6175013}. Thus, overbooking offers an effective strategy for handling dynamic and unpredictable resource demand/supply in a trading market \cite{9763875}. Accordingly, we adopt overbooking in CAMENs, where each server can pre-sign contracts with more requestors than its actual resource supply, to support timely service provisioning while improving resource utilization.

Motivated by the above discussions, we explore a hybrid service provisioning market that integrates both futures and spot trading modes in dynamic CAMENs with multiple MUs, ESs, and CSs. To facilitate responsive and cost-effective resource trading among them, we investigate a cross-layer matching game to bridge diverse resource demands and supplies, by obtaining proper mappings among MUs, ESs, and CSs.
	For futures trading mode, we employ the \textbf{a}forehand \textbf{c}ross-\textbf{l}ayer \textbf{m}atching (OA-CLM) mechanism which allows two contract types: \textit{i)} contract between MUs and ESs, and \textit{ii)} contract between ESs and CSs, which will be fulfilled during future resource transactions. To better assess the dynamics of CAMENs, we adopt interesting concepts of overbooking and risk management that are popularly concerned with daily economic behaviors, which is essential for successfully navigating the market dynamics over a long-term view. In the designed spot trading mode (i.e., during each practical transaction), we implement two contingency plans according to the present network/market conditions, including:
	\textit{i)} volunteer selection, where ESs make choices regarding \textit{volunteers}, who should temporarily waive their services from ESs/CSs due to resource shortage. These volunteers, in return, can receive compensation based on pre-signed forward contracts; and \textit{ii)} when there exist MUs with unreached demands, while some ESs/CSs still have surplus resources, we develop an \textbf{o}n\textbf{s}ite \textbf{c}ross-\textbf{l}ayer \textbf{m}atching (OS-CLM) mechanism. This mechanism facilitates resource trading among MUs, ESs, and CSs by evaluating the current network/market conditions, e.g., the remaining resource supply.

\subsection{Investigation}

This section conducts an extensive investigation into the research background and questions, focusing on the resource sharing in a multi-layer network architecture and the application of matching theory to facilitate these mechanisms.

\textbf{Matching game.} Our research seeks to obtain effective mappings a between requestors and servers to ensure responsive and cost-effective resource provisioning. This research topic is popular and can be found in various computing contexts, including MEC, and IoT domains. Specifically, the application of matching theory becomes popular in building a bridge between resource requestors and servers \cite{9154594,9127810,9851813,9108577,9200548}.
In \cite{9154594}, \textit{Wang et al.} modeled a distributed many-to-many matching model between mobile tasks and edges under diverse resource requirements and availabilities. 
In \cite{9127810}, \textit{Sharghivand et al.} considered quality of service in terms of service response time and proposed a two-sided matching between cloudlets and IoT applications.
In \cite{9851813}, \textit{Fang et al.} proposed a many-to-one matching algorithm between low earth orbit (LEO) satellites and geostationary orbit satellites to minimize delay and energy overhead in dual-layer satellite networks.
In \cite{9108577}, \textit{Fantacc et al.} matched ESs with applications, to minimize both the average system response time and the number of dropping requests in industrial IoT.
In \cite{9200548}, \textit{Raveendran et al.} investigated a many-to-many matching model to schedule resources of fog nodes according to the requirements of subscribers in IoT fog computing networks.
Although the aforementioned works have made certain contributions, they mainly study matching models regarding single-layer architectures, i.e., resource sharing between two parties. Apparently, involving more parties will further complicate the problem, e.g., a three-tier CAMEN. More importantly, having significant properties of conventional matching hold, e.g., stability, the problem becomes even more challenging \cite{8815852}. Thus, it is urgent and critical to design feasible cross-layer matching mechanisms when involving more parties into the market, where the amount of trading resources between MUs and ESs, as well as that between ESs and CSs, can impact each other.
 
\textbf{Resource allocation in three-tier network architectures.} Our study focuses on resource provisioning within three-tier network architectures, a field buoyed by the advancements in communication and computing technologies. To this end, concerted efforts have been made to refine task and resource scheduling processes, as highlighted by literature \cite{9344666,9214500,9195499,9838921,9616429}.
In \cite{9344666}, \textit{Tang et al.} solved an energy-aware task offloading problem in a cloud-edge integrated computing architecture in LEO satellite networks.
In \cite{9214500}, \textit{Aazam et al.} proposed a three-tier IoT-fog-cloud model to support the high scalability of IoT services and global energy efficiency.
In \cite{9195499}, \textit{Li et al.} jointly optimized data caching and task offloading in a three-tier MEC system.
In \cite{9838921}, \textit{Zhang et al.} considered a network with a remote CS, multiple mobile ESs, and users, aiming to minimize the overall cost in terms of energy and delay.
In \cite{9616429}, \textit{Tian et al.} conducted a response ratio offloading strategy centered on user preference and real-time nature to optimize the number of MUs that can be served by ESs or the CS. 
While the aforementioned studies have made valuable contributions, they primarily focus on onsite decision-making methods (e.g., spot trading). Nevertheless, such approaches can be susceptible to prolonged delays, heavy energy consumption, and the potential of trading failures. We are thereby motivated to develop a hybrid resource trading market within the framework of a three-tier network architecture in this paper. Our methodology offers several key unique features over existing approaches as summarized below.
	
\noindent $\bullet$ \textbf{Our considered resource trading mode:} Our study delves into a dynamic and uncertain market over CAMENS, and explores a resource trading paradigm that combines both futures and spot trading modes, which greatly distinguish our consideration with the traditional single mode. By implementing futures trading (i.e., our proposed OA-CLM game), limitations (e.g., extended delays, excessive energy consumption, and potential trading inaccuracies) associated with spot trading can be effectively mitigated, which enhances trading efficiency while benefiting environmental sustainability. Since uncertainties can impose risks, we further introduce spot trading (i.e., OS-CLM) as a backup plan to ensure reliable and timely resource provisioning. These two modes can greatly complement each other.

\noindent $\bullet$ \textbf{Overbooking and risk evaluation:} Our paper considers significant uncertainties in CAMENs, imposing fluctuations in resource demand/supply that can further impact the utilities of MUs, ESs and CSs. To this end, we adopt an interesting concept of overbooking, allowing ESs to reserve resources for more MUs than their practical resource supplies to cope with the dynamic resource demands, while improving resource utilization. Our analysis also includes an estimation of the risks that MUs, ESs and CSs may confront during practical transactions, with efforts made to keep these risks within an acceptable level. These two concepts are among the key and unique highlights of our paper, as they offer commendable advantages in supporting responsive and cost-effective services in dynamic CAMENs.

\vspace{-0.2cm}
\subsection{Novelty and Contribution}
This paper establishes a hybrid service trading market that integrates both futures and spot trading modes in dynamic CAMENs. Here, we investigate a set of cross-layer matching games to bridge diverse resource demands and supplies. For futures trading mode, to better assess CAMEN dynamics, we adopt economic concepts of overbooking and risk management. Specifically, an OA-CLM mechanism is deployed in advance to facilitate the contracts for future resource transactions. In spot trading, relying on current network/market conditions, we implement two plans: \textit{i)} selecting some MUs as volunteers based on pre-signed contracts; and \textit{ii)} developing the OS-CLM mechanism to modify/conduct the current resource trading during each practical transaction. To the best of our knowledge, this paper makes a pioneering effort in designing cross-layer and hybrid matching mechanisms in dynamic CAMENs. Key contributions are summarized below:

\noindent
$\bullet$ This paper introduces a hybrid market by integrating futures and spot trading modes over the CAMEN architecture, to build a bridge between the present and future resource provisioning. In particular, we involve a cross-layer matching game to establish stable connections among multiple MUs, ESs, and CSs, with an emphasis on optimizing the corresponding social welfare. Interestingly, overbooking is introduced to cope with market dynamics, e.g., uncertain MUs' resource demand, time-varying channel qualities, and ever-changing resource supply. Such a strategy can greatly support timely resource provisioning and commendable resource utilization.

\noindent
$\bullet$ For futures market, we develop an efficient OA-CLM mechanism that facilitates two types of forward contracts: contracts between MUs and ESs, as well as that between ESs and CSs, upon determining contract terms such as service prices and default clauses. Both contracts are pre-signed among participants under risk analysis, e.g., negative utility and unsatisfying trading experience, which will be fulfilled accordingly in each practical transaction. More importantly, our incorporation of overbooking allows the market to book more resources than the theoretical resource supply, thereby achieving good resource usage. We show that OA-CLM can ensure key properties, e.g., matching stability, individual rationality, and weak Pareto optimality.

\noindent
$\bullet$ For spot market, we engage two cases during each practical transaction: \textit{i)} when the overall resource demand of an ES surpasses its supply (including resources booked from CSs), the ES determines some contractual MUs as volunteers to give up the promised services; \textit{ii)} in cases where MUs without pre-signed forward contracts have unmet resource requirements, and there are surplus resources available from either ESs or CSs, OS-CLM mechanism is designed as a backup plan, helping to enhance the social welfare. We also show that OS-CLM can support crucial properties similar to those held by OA-CLM.

\noindent
$\bullet$ Comprehensive simulations are conducted that encompass both numerical settings and real-world datasets to validate the remarkable performance of our proposed cross-layer matching game within the hybrid service trading market, involving criteria such as social welfare, as well as time and energy efficiency.


\section{System Overview and Modeling}
\subsection{Overview}
\begin{figure*} \centering
	\vspace{-0.2cm}
	\subfigbottomskip=-1pt
	\subfigcapskip=-15cm
	\setlength{\abovecaptionskip}{-0cm}
	\setlength{\belowcaptionskip}{-0.5cm}
	\subfigure[] {
		\includegraphics[width=2.04\columnwidth]{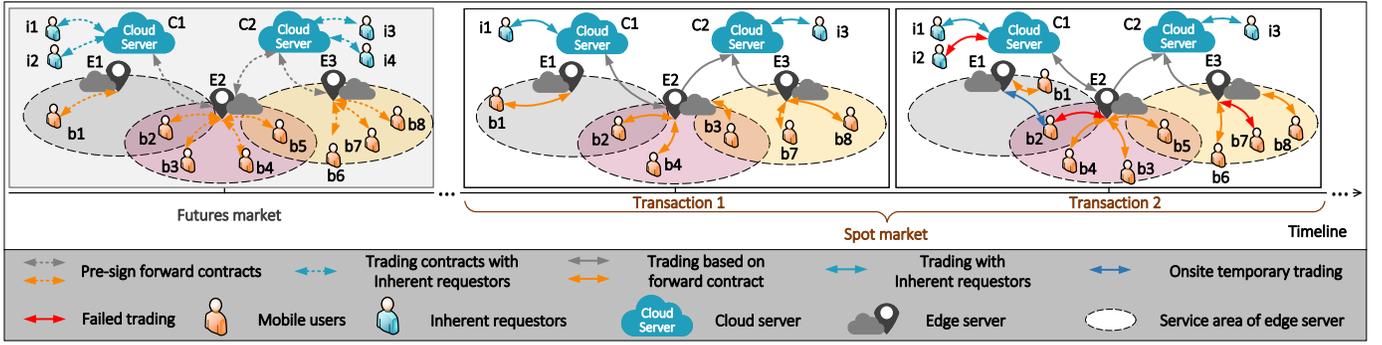} 
	} 
	
	\caption{Framework and procedure in terms of a timeline associated with our proposed cross-layer matching game in dynamic CAMENs. } 
	\label{fig} 
	\vspace{-0.3cm}
\end{figure*}
The considered hybrid service trading market over CAMENs involves three key parties (layers): \textit{i)} multiple MUs denoted by $ \bm{B}=\left\{b_1,...,b_i,...,b_{\bm{|B|}}\right\} $, where each of which carries a periodically generated computation-intensive task; \textit{ii)} multiple ESs represented by $ \bm{S^{E}}=\left\{s_1^E,...,s_j^E,...,s_{\bm{|\bm{S^{E}}|}}^E\right\} $, and \textit{iii)} multiple CSs modeled as $ \bm{S^{C}}=\left\{s_1^C,...,s_k^C,...,s_{\bm{|\bm{S^{C}}|}}^C\right\} $. Specifically, ESs can serve MUs, while borrowing resources from CSs when confronting resource shortages (e.g., when the resource supply of an ES fails to cover the overall demand from its MUs). In particular, non-negligible uncertain factors should be counted in a dynamic market: \textit{i)} uncertain MUs' resource demand, \textit{ii)} time-varying wireless channel qualities, and \textit{iii)} dynamic demand of \textit{inherent requestors (of CSs)}, since we consider a general and practical scenario where CSs can also serve other requestors (apart from the studied MUs)\cite{9763875}.
To ease the analysis, resources of ESs and CSs are quantized by virtual machines (VMs)\footnote{ In practice, resource owners can run independent tasks that are offloaded from resource requesters on a single physical server, and these tasks can access the underlying physical resources while being isolated from each other \cite{8289317}. For instance, the ES independently computes different tasks through different VMs\cite{8016573}. Such a setting represents the key reason of our assumption in which an VM in ES can only serve one MU's task \cite{8815852}.}, where Each VM is dedicated to serving a single task. Namely, each MU can be matched to a VM on an ES or a CS.

In this paper, we are interested in integrating: \textit{i)} futures trading, which facilitates participants to sign forward contracts under acceptable risks in advance to practical transactions, by implementing OA-CLM; and \textit{ii)} spot trading, where resources can be traded among participants by following the pre-signed forward contracts, or the well-designed OS-CLM. 
Specifically, a forward contract contains two key terms: service price and default clause. For the contract between a MU $ b_i $ and an ES $ s^E_j $, our model considers $ p_{i,j}^{U\rightarrow E} $ as the payment of ES $ s_j^E $ gets from MU $ b_i $, while $ q^{E\rightarrow U} $ and $ q^{U\rightarrow E} $ denote the penalty when MUs, or ESs break the contract, respectively. For the contract\footnote{Our market allows an ES to sign a contract with a CS for each task, which may further be offloaded to the CS.} between an ES $ s_j^E $ and a CS $ s_k^C $, let $ p_{i,j,k}^{E\rightarrow C} $ indicate the payment that CS $ s_k^C $ gets from ES $ s_j^E $ for further offloading $ b_i $'s task data to $ s_k^C $, and $ q^{E\rightarrow C} $ denote the penalty that an ES breaks the contract with a CS.
In general, lending resources to ES can incur resource shortages for CSs. In this case, a CS may have to pay refunds to their inherent requestors when their resources fail to afford the corresponding requirements, for which we use $ q_k^{C\rightarrow I} $ to describe the unit compensation from $ s_k^C $ to its inherent requestors.

Building a bridge to connect different parties calls for addressing a cross-layer matching problem. Our major goal in this paper is \textit{to obtain proper mappings among MUs, ESs, and CSs, to facilitate responsive and cost-effective resource provisioning in dynamic CAMENs.}

Fig. 1 depicts a schematic of our proposed cross-layer matching game\footnote{Note that our proposed cross-layer matching game is referred to as a distributed decision-making system. In this system, MUs, ESs, and CSs independently make decisions to determine their prices and select their interested traders.}. Regarding the procedure shown in Fig. 1 in terms of a timeline, participants first negotiate contracts according to the past data/information of historical transactions, before future practical transactions (futures market). Then, in each practical transaction (spot market), contractual participants will perform their pre-signed contracts, while others can implement onsite transactions if needed. For example, considering the coexistence of two CSs (i.e., C1 and C2), three ESs (i.e., E1-E3), and eight MUs (i.e., b1-b8) in Fig. 1, participants first sign forward contracts through our proposed OA-CLM game. Then, in each practical transaction (e.g., Transactions 1 and 2 in Fig. 1), ESs can select some contractual MUs as volunteers when the overall resource demand exceeds their supply (e.g., b2 and b7 are selected as volunteers by E2 and E3, respectively). Also, the uncertain resource demand of MUs associated with an ES may enable surplus resources and thus can serve other MUs through our designed OS-CLM game (e.g., E1 serves b2 in Transaction 2). Besides, a CS may have to compensate for some inherent requestors when its resources are insufficient to support the demand of its inherent requestors (e.g., C1 can not provide service to i2 in Transaction 2, since a certain amount of VMs has been booked by E2).

\subsection{Preliminary Modeling}
We next introduce key participated parties, i.e., MUs, ESs, and CSs, along with significant correlative uncertainties.

\textbf{Modeling of MUs.} Each MU $ b_i\in\bm{B} $ is modeled by an 8-tuple $ b_i=\left\{f_i^U,e_i^{t},e_i^U,r_i^U,d_i^U,\bm{C_i},\alpha_i,\gamma_{i,j}\right\} $, where $ f_i^U $ denotes its local computing capability (e.g., CPU cycles/s); $ e_i^{t} $ (Watt) and $ e_i^U $ (Watt) indicate the transmission power, and local computing power of $ b_i $, respectively; $ r_i^U $ refers to the required computing resources of MU $ b_i $ (CPU cycles), $ d_i^U $ denotes the corresponding data size (bit), and $ \bm{C_i} $ represents the set of ESs that can serve $ b_i $, i.e., ESs that are within the communication range of $ b_i $. To capture the dynamic nature of the network, we consider two uncertainties for MUs: \textit{i)} the uncertain participation of MUs is denoted by $ \alpha_i $, describing whether $ b_i $ attends a transaction and follows a Bernoulli distribution, $ \alpha_i \sim \mathbf{B}\left\{(1,0),(a_i,1-a_i)\right\} $. Specifically, $ \alpha_i=1 $ indicates that $ b_i $ takes part in the current transaction; while $ \alpha_i=0 $ otherwise; and \textit{ii)} the time-varying network condition among ESs and MUs is modeled as $ \gamma_{i,j} $, obeying an uniform distribution\footnote{This paper assumes that MUs are randomly moving within a region. In addition, $ \gamma_{i,j} $ denotes a function of multiple factors such as channel fading, path loss, and noise\cite{9763875,8854324}.} $ \gamma_{i,j} \sim \mathbf{U}(\mu_1,\mu_2) $, where fluctuations can be incurred due to multiple factors such as users' mobility and obstacles \cite{9763875,8854324}. 

\textbf{Modeling of ESs.}
Each ES $ s_j^E\in\bm{{S}^{E}} $ can be described by a 4-tuple $ s_j^E=\left\{f_j^E,e_j^E,G_j^E,K_j^E\right\} $, where $ f_j^E $ is the computing capability of $ s_j^E $ (e.g., CPU cycles/s), $ e_j^E $ (Watt) represents the local computing power. Each ES $ s_j^E $ owns $ G_j^E $ VMs, while a single VM can handle one task. In addition, each ES $ s_j^E $ has a set of orthogonal subcarriers\footnote{We assume that each ES connecting to an access point (AP, e.g., base station and roadside unit), has a certain number of orthogonal sub-carriers. This limits the number of MUs that can simultaneously access the ES, as each orthogonal sub-carrier can only serve one MU\cite{8815852}. This approach acknowledges the finite nature of bandwidth by limiting it through the number of orthogonal sub-carriers.} that can serve at most $ K_j^E $ MUs simultaneously\footnote{We make a general assumption that the number of VMs owned by ESs does not exceed the number of orthogonal subcarriers they have.} 

\textbf{Modeling of CSs.}
A CS $ s_k^C\in\bm{{S}^{C}} $ is denoted by a 3-tuple $ s_k^C=\left\{f_k^C,e_k^C,G_k^C\right\} $, where $ f_k^C $ denotes the computing capability of $ s_k^C $ (e.g., CPU cycles/s), $ e_k^C $ (Watt) indicates the computing power, and $ s_k^C $ owns $ G_k^C $ VMs. To better assess fluctuating resource demand, our model also involves inherent requestors of CSs, for which a random variable $ \varepsilon_k $ is considered that follows a Poisson distribution, denoted by $ \varepsilon_k \sim \mathbf{P}(\sigma_k) $, where $ \varepsilon_k\in \left\{0,1,..., G_k^C\right\} $. Apparently, $ \varepsilon_k $ reflects the resource demand of other requestors (without considering demands from the studied ESs and MUs).

\section{OA-CLM Game in Futures Market}
We next introduce OA-CLM game, which is a unique cross-layer matching tailored to the characteristics of the futures market. Our purpose is to achieve stable and reliable mappings among MU, ES, and CS, and reach two types of forward contracts (e.g., contract terms such as prices and default clauses) according to the mapping results.
We first define the following assignment $ \omega(.) $ between MUs and ESs, which represents a significant notation in the following model designs.

\noindent
$\bullet$ $ \omega\left(s_j^E\right) $: the set of MUs served by ES $ s_j^E $, where $ \omega\left(s_j^E\right)\subseteq\bm{B} $;

\noindent
$\bullet$ $ \omega\left(b_i\right) $: an ES in $ \bm{C_i} $ that serves MU $ b_i $, where $ \omega\left(b_i\right)\subseteq\bm{C_i} $, and $ \bm{C_i} \subseteq\bm{{S}^{E}} $;

\subsection{Utility, Expected Utility and Risk}
\subsubsection{Utility, expected utility and risk of MUs}
The local task completion time of $ b_i $ can be calculated by $ \frac{r_i^U}{f_i^U} $, while that of edge computing (e.g., $ b_i $ offloads its task to ES $ s_j^E $) is computed as $ \frac{r_i^U}{f_j^E}+\frac{d_i^U}{W\text{log}_2(1+e_i^{t}\gamma_{i,j})} $. Specifically, $ W $ is the bandwidth of MU-to-ES communication links\footnote{We primarily focuses on the allocation of computing resources (i.e., VMs) among MUs, ESs, and CSs in dynamic CANEMs. For analytical simplicity, we assume that the bandwidth of the communication link between each MU and an ES is consistent. Namely, we do not spend efforts on solving the bandwidth allocation problem among MUs, ESs, and CSs, aligning with existing related literature, such as \cite{9763875,8815852}.}, and $ e_i^{t}\gamma_{i,j} $ indicates the received signal noise ratio (SNR) of ES $ s^E_j $. Thus, for $b_i$, we calculate the amount of time that can be saved from enjoying edge service\footnote{The model of MUs only concerns delay incurred by data transmission to an ES, and its upload and data processing time, due to the following reasons: \textit{i)} the ES-CS transmission delay through wired connections can be much smaller than that encountered by MUs; then \textit{ii)} given data privacy concerns, MUs are typically unaware of whether their tasks are offloaded to CSs. Therefore, from the MUs' perspective, they can ignore the transmission delay between an ES and a CS.},
\begin{equation}\label{key}{\small
		\begin{aligned}
		t_{i,j}^{save}=\frac{r_i^U}{f_i^U}-\left(\frac{r_i^U}{f_j^E}+\frac{d_i^U}{W\text{log}_2(1+e_i^{t}\gamma_{i,j})}\right),
	\end{aligned}}
\end{equation}
and that of energy consumption, as given by
\begin{equation}\label{key}{\small
		\begin{aligned}
				c_{i,j}^{save}=\frac{r_i^Ue_i^U}{f_i^U}-\frac{d_i^Ue_i^{t}}{W\text{log}_2(1+e_i^{t}\gamma_{i,j})}.
		\end{aligned}}
\end{equation}

Accordingly, we define the valuation (e.g., the profit on saving time and energy) that ES $ s_j^E $ can bring to MU $ b_i $, as shown by
\begin{equation}\label{key}{\small
		\begin{aligned}
v_{i,j}=\mathbb{V}_1t_{i,j}^{save}+\mathbb{V}_2c_{i,j}^{save},
	\end{aligned}}
\end{equation}
where $ \mathbb{V}_1 $ and $ \mathbb{V}_2 $ are positive weighting coefficients.

\textbf{MU's utility and its expectation.} The utility that MU $ b_i $ can obtain from trading with ESs covers three key parts: \textit{i)} the valuation that ES $ s_j^E \in\omega(b_i) $ can bring to MU $ b_i $ minus its payment to $ s_j^E $, \textit{ii)} the penalty when $ b_i $ breaks the contract (e.g., $ \alpha_i=0 $ in a practical transaction), and \textit{iii)} the compensation if $ b_i $ is selected as a volunteer, expressed as
\begin{equation}\label{key}{\small
		\begin{aligned} 
	&u^U(b_i,\omega\left(b_i\right))=\left(1-\lambda_{i,j}\right)\alpha_i\left(v_{i,j}-p_{i,j}^{U\rightarrow E}\right)\\&-(1-\alpha_i)q^{U\rightarrow E}+{\alpha_i\lambda}_{i,j}q^{E\rightarrow U},
		\end{aligned} }
\end{equation}
where $ \lambda_{i,j}=1 $ denotes that $ b_i $ is selected by $ s_j^E $ as a volunteer in a practical transaction, and $ \lambda_{i,j}=0 $ otherwise. Since the uncertain factors impose great challenges to directly maximize the practical value of (4) in our designed futures market, we opt to involve its expected value as shown below
\begin{equation}\label{key}{\small
			\begin{aligned}
	&\overline{u^U}(b_i,\omega\left(b_i\right))=\left(1-\text{E}[\lambda_{i,j}])\text{E}[\alpha_{i}](\text{E}[v_{i,j}]-p_{i,j}^{U\rightarrow E}\right)\\&-(1-\text{E}[\alpha_{i}])q^{U\rightarrow E}+\text{E}[\alpha_{i}]\text{E}[\lambda_{i,j}]q^{E\rightarrow U},
			\end{aligned}}
\end{equation}
where we have $ \text{E}[\alpha_{i}]=\text{Pr}(\alpha_i=1)\times 1+\text{Pr}(\alpha_i=0)\times 0=a_i$, and the derivations of $ \text{E}[v_{i,j}]$ and $ \text{E}[\lambda_{i,j}] $ are detailed in Appendix B.1.

\textbf{Risk assessment of MUs.} A futures market is generally a coexistence of benefits and risks, as the uncertainties can bring losses to participants. Thus, we assess two risks for each MU $ b_i \in \bm{B} $.
First, a MU $ b_i $ is risking an unsatisfying utility (e.g., the value of $u^U(b_i,\omega\left(b_i\right))$ turns negative) during a practical transaction, which is defined by the probability that the utility of $ b_i $ falls below a tolerable value $ u_{min} $, as given by
\begin{equation}\label{key}{\small
		\begin{aligned}
			R^{U}_1\left( b_{i},\omega \left( b_{i} \right) \right)=\text{Pr}\left(u^U(b_i,\omega\left(b_i\right))<u_{min}\right) ,	
	\end{aligned}}
\end{equation}
where $ u_{min} $ is a positive value approaching to 0.
Then, as resource overbooking is allowed in our designed market, a contractual MU (e.g., a MU who has signed a forward contract with an ES) may be selected as a volunteer and thus forced to process its task locally. Correspondingly, such a risk can be concerned as the probability when $ \lambda_{i,j}=1 $. 
\begin{equation}\label{key}{\small
		\begin{aligned}
		R^{U}_2\left( b_{i},\omega \left( b_{i} \right) \right)=\text{Pr}(\lambda_{i,j}=1) .	
	\end{aligned}}
\end{equation}

Both of the above-discussed risks should be acceptable when designing forward contracts; otherwise, each $b_i$ will opt for spot trading instead, meaning that they will not enter into forward contracts with any ES in the futures market.

\subsubsection{Utility, expected utility and risk of ESs}
 We next define the assignment $ \varphi(.) $ between ESs and CSs:
 
 \noindent
 $\bullet$ $ \varphi\left(s_j^E\right) $: the set of CSs that provide resources to ES $ s_j^E $, where $ \varphi\left(s_j^E\right)\subseteq\bm{{S}^{C}} $;
 
 \noindent
 $\bullet$ $ \varphi\left(s_k^C\right) $: the set of ESs that purchase resources from CS $ s_k^C $, where $ \varphi\left(s_k^C\right)\subseteq\bm{{S}^{E}} $;
 
 Processing MUs' tasks can consume certain costs to ESs, where the monetary cost incurred by energy consumption for $ s_j^E $ contributing services to MU $ b_i $ can be calculated as
 \begin{equation}\label{key}{\small
 		\begin{aligned}
 		c_{i,j}^E=\mathbb{V}_3\frac{r_i^Ue_j^E}{f_j^E}+c^{H}_j,		
 	\end{aligned}}
 \end{equation}
 where $ \mathbb{V}_3 $ represents the cost coefficient to enable a unified unit for energy, e.g., dollars; while $ c^{H}_j $ describes the hardware cost, which is assumed to be a constant and the same across all the VMs owned by $ s_j^E $ \cite{8815852}.
 
	 As the resources of an ES is generally limited, it can borrow resources from CSs to attract more MUs, which mainly depends on whether the overall gathered resource demand (e.g., $ |\omega\left(s_j^E\right)| $) exceeds its supply (e.g., $ (1+\tau)G_j^E $, where $ \tau $ denotes an overbooking rate\footnote{It is critical to apply overbooking strategies with carefulness to avoid buyer dissatisfaction and potential risks to both parties (e.g., resource sellers and buyers). Therefore, our overbooking rate setting is meticulously considered, detailed in Sec. 5.3.3.}). Thus, we use $ x_j=|\omega\left(s_j^E\right)|-(1+\tau)G_j^E $ to describe the necessity for purchasing resources from CSs. Apparently, $ x_j>0 $ means that $ s_j^E $ is confronting local resource shortage and needs a certain amount of cloud resources; while $ x_j\le0 $ otherwise. To answer the question of which tasks (namely, MUs) should be migrated to CSs, we use $ \mathbb{B}_{j,k} $ to describe the set of MUs, while their tasks will be offloaded to CS $ s^C_k $ for further processing.
 Let $ \beta_{i,j,k} $ indicate whether $ s_j^E $ fulfills a contract with CS $ s_k^C \in \varphi\left(s_j^E\right) $ about a task of $b_i$ in $ \mathbb{B}_{j,k} $ during a practical transaction, as defined by,
\begin{equation}\label{key}	{\small
	\beta_{i,j,k} = \left\{ \begin{matrix}
		{1,~~~ s_j^E \text{ fulfills a contract with } s_k^C \text{ for }b_i} \\
		{0,~~~ s_j^E \text{ breaks a contract with } s_k^C \text{ for }b_i} \\
	\end{matrix} \right .}
\end{equation}
where $ \beta_{i,j,k} =1 $ denotes that $ s_j^E $ will purchase cloud resources from $ s_k^C $ for $b_i$'s task, and $ \beta_{i,j,k} =0 $, otherwise. 

\textbf{ES's utility and its expectation.} The utility that ES $ s_j^E $ obtains from MUs involves three key parts: \textit{i)} the overall payment from MUs (e.g., $ b_i\in \omega\left(s_j^E\right) $) minus the cost for offering services to them, \textit{ii)} the penalty from MUs who break contracts (e.g., $ \alpha_i=0 $ in a practical transaction), and \textit{iii)} the compensation to volunteers, which is calculated by
\begin{equation}\label{key}{\small
\begin{aligned}
	&u^{U\leftrightarrow E}\left(s_j^E,\omega\left(s_j^E\right)\right)=\sum_{b_i\in\omega\left(s_j^E\right)}\alpha_i\left(1-\lambda_{i,j}\right)\left(p_{i,j}^{U\rightarrow E}-c_{i,j}^E\right)\\&+\sum_{b_i\in\omega\left(s_j^E\right)}\left(\left(1-\alpha_i\right)q^{U\rightarrow E}-{\alpha_i\lambda}_{i,j}q^{E\rightarrow U}\right).
\end{aligned}}
\end{equation}
As ESs implement two-way trading in such a three-tier network, an ES can obtain profit from CSs, where the corresponding utility includes three key aspects: \textit{i)} the payment to CSs for purchasing their resources, \textit{ii)} the saved service cost of ES through offloading MUs' tasks to CSs, and \textit{iii)} the compensation for breaking contracts (e.g., $ \beta_{i,j,k}=0 $), as given by
\begin{equation}\label{key}{\small
\begin{aligned}
	&u^{E\leftrightarrow C}\left(s_j^E,\varphi\left(s_k^E\right)\right)=-\sum_{s_k^C\in\varphi\left(s_j^E\right)}\sum_{b_i\in \mathbb{B}_{j,k}} \beta_{i,j,k}\left(p_{i,j,k}^{E\rightarrow C}-c_{i,j}^E\right)\\&-\sum_{s_k^C\in\varphi\left(s_j^E\right)}\sum_{b_i\in \mathbb{B}_{j,k}}\left(1-\ \beta_{i,j,k}\right)q^{E\rightarrow C}.
\end{aligned}}
\end{equation}


 Accordingly, we can compute the overall utility of $ s_j^E $ as,
\begin{equation}\label{key}{\small
\begin{aligned}
	&u^E\left(s_j^E,\omega\left(s_j^E\right),\varphi\left(s_j^E\right)\right)\\&=u^{U\leftrightarrow E}\left(s_j^E,\omega\left(s_j^E\right)\right)+u^{E\leftrightarrow C}\left (s_j^E,\varphi\left(s_j^E\right)\right ).
\end{aligned}}
\end{equation}

As the uncertainties stop us to obtain the practical value of (12) directly in futures market, the corresponding expectation of $ u^E\left(s_j^E,\omega\left(s_j^E\right),\varphi\left(s_j^E\right)\right) $ can be considered below
\begin{equation}\label{key}\small{
\begin{aligned}
	&\overline{u^E}\left(s_j^E,\omega\left(s_j^E\right),\varphi\left(s_j^E\right)\right)=
	\\&\sum_{b_i\in\omega\left(s_j^E\right)}a_i\left(1-\text{E}[\lambda_{i,j}]\right)\left(p_{i,j}^{U\rightarrow E}-c_{i,j}^E\right)\\&+\sum_{b_i\in\omega\left(s_j^E\right)}\left(\left(1-a_i\right)q^{U\rightarrow E}-{a_i\text{E}[\lambda}_{i,j}]q^{E\rightarrow U}\right)\\&-\sum_{s_k^C\in\varphi\left(s_j^E\right)}\sum_{b_i\in \mathbb{B}_{j,k}} \text{E}[\beta_{i,j,k}]\left(p_{i,j,k}^{E\rightarrow C}-c_{i,j}^E\right)\\&-\sum_{s_k^C\in\varphi\left(s_j^E\right)}\sum_{b_i\in \mathbb{B}_{j,k}}\left(1-\text{E} [\beta_{i,j,k}]\right)q^{E\rightarrow C},
\end{aligned}}
\end{equation}
where the derivation of $ \text{E}[\beta_{i,j,k}] $ is given in Appendix B.2.

\textbf{Risk assessment of ESs.} As the middle layer of CAMENs, ESs have to gather two-way information/data from MUs and CSs, which are both uncertain, e.g., dynamic MUs' resource demand and CSs' resource supply. To this end, an ES should estimate its risks involving: \textit{i)} the risk of breaking a contracts with a CS, as given by 
\begin{equation}\label{key}
\small{
\begin{aligned} 
	R_1^E\left(s_j^E,s_k^C\right)=\text{Pr}\left(\beta_{i,j,k}=0\right),~ \forall s_k^C\in\varphi\left(s_j^E\right), \forall b_i \in \mathbb{B}_{j,k}
\end{aligned}}
\end{equation}
and \textit{ii)} the risk of breaking the contract with MUs due to overbooking (i.e., the demand of ES exceeds its resource supply), is expressed as
\begin{equation}\label{key}\small{
\begin{aligned} 
	&R_2^E\left(s_j^E,\omega\left(s_j^E\right),\varphi\left(s_j^E\right)\right)=\\& \text{Pr}\left(\sum_{b_i\in\omega\left(s_j^E\right)}\alpha_i>\sum_{s_k^C\in\varphi\left(s_j^E\right)} |\mathbb{B}_{j,k}|+G_j^E\right),
\end{aligned}}
\end{equation} 
where $ \sum_{s_k^C\in\varphi\left(s_j^E\right)}|\mathbb{B}_{j,k}| $ denotes the number of tasks offloaded to CSs in $ \varphi\left(s_j^E\right) $.

For an ES, it is essential to effectively manage the risks mentioned above; otherwise, they may prefer participating in the spot trading market rather than entering into forward contracts in the futures market.

\vspace{-0.2cm}
\subsubsection{Utility, expected utility and risk of CSs}
The overhead of a CS $ s_k^C $ for serving the task of MU $ b_i $ offloaded by $ s_j^E $ can be computed as
\begin{equation}\label{key}{\small
		\begin{aligned}
		c_{i,j,k}^C=\mathbb{V}_3\frac{r_{i}^{U}e_j^{C}}{f_k^C}+c^{H}_k,	
	\end{aligned}}
\end{equation}
where $ c^{H}_k $ describes the hardware cost, which is assumed to be a constant and the same across all the VMs in $ s_k^C $ \cite{8815852}.
In addition, inherent requestors of $ s^C_k $ can lead to an imbalance between its demand and supply. Let $ \vartheta_k $ describe whether the resource demand of $ s_k^C $ exceeds its supply during practical transactions, as given by
\begin{equation}\label{key}	{\small
	\vartheta_k = \left\{ \begin{matrix}
		{0~~~, \sum_{s_j^E\in\varphi\left(s_k^C\right)}\sum_{b_i\in\mathbb{B}_{j,k}}\beta_{i,j,k}+\varepsilon_k \leq G_k^C} \\
		{1~~~, \sum_{s_j^E\in\varphi\left(s_k^C\right)}\sum_{b_i\in\mathbb{B}_{j,k}}\beta_{i,j,k}+\varepsilon_k > G_k^C} , \\
	\end{matrix} \right.}
\end{equation}
where $ \sum_{s_j^E\in\varphi\left(s_k^C\right)}\sum_{b_i\in\mathbb{B}_{j,k}}\beta_{i,j,k} $ denotes the number of VMs offering to ESs in set $ \varphi\left(s_k^C\right) $. When $ \vartheta_k=1 $, we use $ N_k $ to denote the number of inherent requestors who fail to enjoy cloud services, computed as
\begin{equation}\label{key}{\small
		\begin{aligned}
			N_k=\sum_{s_j^E\in\varphi\left(s_k^C\right)}\sum_{b_i\in\mathbb{B}_{j,k}}\beta_{i,j,k}+\varepsilon_k - G_k^C.
	\end{aligned}}
\end{equation}

\textbf{CS's utility and its expectation.} The utility of CS $ s_k^C $ trading with ESs consists of four key parts: \textit{i)} the overall payment from ESs (e.g., $ s_j^E\in\varphi\left(s_k^C\right) $) minus the cost for for offering services to them; \textit{ii)} the penalty from ES for breaking contracts (e.g., $ \beta_{i,j,k}=0 $ in a practical transaction); \textit{iii)} the payment from inherent requestors; and \textit{iv)} possible compensation to inherent requestors. Accordingly, we have
\begin{equation}\label{key}\small{
\begin{aligned}
	&u^C\left(s_k^C,\varphi\left(s_k^C\right)\right)=\\&\sum_{s_j^E\in\varphi\left(s_k^C\right)}\sum_{b_i\in\mathbb{B}_{j,k}}\left(\beta_{i,j,k}\left(p_{i,j,k}^{E\rightarrow C}-c_{i,j,k}^C\right)+(1-\beta_{i,j,k})q^{E\rightarrow C}\right)\\&+\vartheta_k[(\varepsilon_k-N_k)p_k^{I\rightarrow C}+N_kq_k^{C\rightarrow I}]+(1-\vartheta_k)\varepsilon_kp_k^{I\rightarrow C}.
\end{aligned}}
\end{equation}

Since the uncertainties bring challenges to obtain the practical value of (19), we consider its corresponding expectation instead, as given by
\begin{equation}\label{key}{\small
	\begin{aligned}
	&\overline{u^C}\left(s_k^C,\varphi\left(s_k^C\right)\right)=\\&\sum_{s_j^E\in\varphi\left(s_k^C\right)}\sum_{b_i\in\mathbb{B}_{j,k}}\left(\text{E}[\beta_{i,j,k}]\left(p_{i,j,k}^{E\rightarrow C}-c_{i,j,k}^C\right)+(1-\text{E}[\beta_{i,j,k}])q^{E\rightarrow C}\right)\\&+ \text{E}[\vartheta_k]\left\{( \text{E}[\varepsilon_k]- \text{E}[N_k])p_k^{I\rightarrow C}+ \text{E}[N_k]q_k^{C\rightarrow I}\right\}\\&+\left(1- \text{E}[\vartheta_k]\right) \text{E}[\varepsilon_k]p_k^{I\rightarrow C}
	\end{aligned}}
\end{equation}
where the derivations of $ \text{E}[N_k] $, $  \text{E}[\varepsilon_k] $, and $  \text{E}[\vartheta_k] $ are given in Appendix B.3.

\textbf{Risk assessment of CSs.} Due to the dynamic resource requirements from ESs and its inherent requestors, $ s_k^C $ also faces a risk reflected by the probability that $ s^C_k $'s resource demand exceeds its supply $ G_k^C $, is given by
\begin{equation}\label{key}{\small
		\begin{aligned}
			R^C\left(s_k^C,\varphi\left(s_k^C\right)\right)=\text{Pr}(\vartheta_k=1)
	\end{aligned}}
\end{equation}

Apparently, for any CS, $ R^C $ should be well estimated and controlled; otherwise, it will attend the spot market instead. 

\subsection{Key Definitions in OA-CLM Game}
Service trading in our designed futures market over CAMENs can be described as an OA-CLM game, comprising of two matching types: \textit{i)} MU-ES many-to-one (M2O) matching, and \textit{ii)} ES-CS many-to-many (M2M) matching, helping with guiding participants to sign forward contracts. Such matching mechanisms work good for handling the uncertainties (and thus control potential risks) in CAMENs, which differentiates them from conventional matching mechanisms (e.g., matching decisions are made based on the current network/market conditions). Our OA-CLM game can be formalized based on a series of key definitions given below.
\begin{Defn}(Many-to-one matching in the futures market)
	A M2O matching $ \omega $ in the futures market constitutes a mapping between MU set $ \bm{B} $ and ES set $\bm{{S}^{{E}}} $, while satisfying the following properties:
	
	\noindent
	$\bullet$ for each MU $ b_i\in\bm{B} $, $ \omega\left(b_i\right){\subseteq\bm{C_i}} $, $\bm{C_i}\subseteq\bm{{S}^{{E}}} $, $ \left|\omega\left(b_i\right)\right|=1 $;
	
	\noindent
	$\bullet$ for each ES $ s_j^E\in\bm{C_i} $, $ \omega\left(s_j^E\right)\subseteq\bm{B} $;
	
	\noindent
	$\bullet$ for MU $ b_i $ and ES $ s_j^E $, $ b_i\in\omega\left(s_j^E\right) $ if and only if $ \left\{s_j^E\right\}=\omega\left(b_i\right) $.
\end{Defn}

We next define the concept of \textit{blocking pair}\footnote{Blocking pairs refer to any two entities (such as individuals, schools, job positions, etc.) that are not matched to each other but would prefer to be paired together rather than remaining in their current matches. This concept is a critical factor in assessing the stability of a given set of matching solutions\cite{9154594}.}, representing a significant factor that may lead to instability of a M2O matching. 

\begin{Defn}(Blocking pair of MU-ES matching)
	Under a given M2O matching $ \omega $, MU $ b_i $ and ES $ s_j^E $ form a blocking pair $ \left(b_i;s_j^E\right) $, for which we consider two types.
	
	\textbf{Type 1 blocking pair}: Type 1 blocking pair satisfies the following condition:
	\begin{equation}\label{key}{\small
			\begin{aligned}
					\overline{u^{U\leftrightarrow E}}\left(s_j^E,\omega\left(s_j^E\right)\backslash \widetilde{\omega}\left(s_j^E\right)\cup \left\{b_i\right\}\right)>\overline{u^{U\leftrightarrow E}}\left(s_j^E,\omega\left(s_j^E\right)\right),
		\end{aligned}}
	\end{equation}
	which indicates that $ s_j^E $ can increase its expected utility by giving up some MUs, e.g., $ \widetilde{\omega}\left(s_j^E\right) $, while serving $ b_i $ instead.
	
	\textbf{Type 2 blocking pair}: Type 2 blocking pair satisfies the following condition:
	\begin{equation}\label{key}{\small
			\begin{aligned}
			\overline{u^{U\leftrightarrow E}}\left(s_j^E,\omega\left(s_j^E\right)\cup\left\{b_i\right\}\right)>\overline{u^{U\leftrightarrow E}}\left(s_j^E,\omega\left(s_j^E\right)\right),	
		\end{aligned}}
	\end{equation}
	which makes a matching unstable since $ s_j^E $ can serve more MUs under its resource constraint, to improve its expected utility.
\end{Defn}

The matching between ESs and CSs can be formalized relying on the following definitions.
\begin{Defn}(Many-to-many matching in the futures market)
	A M2M matching $ \varphi $ in the futures market denotes a mapping between ES set $ \bm{S^E} $ and CS set $ \bm{{S}^{ {C}}} $, while satisfying the following conditions:
	
	\noindent
	$\bullet$ for each CS $ s_k^C\in\bm{{S}^{ {C}}} $, $ \varphi\left(s_k^C\right){\subseteq\bm{S^E}} $;
	
	\noindent
	$\bullet$ for each ES $ s_j^E\in\bm{S^E} $, $ \varphi\left(s_j^E\right)\subseteq\bm{{S}^{ {C}}} $;
	
	\noindent
	$\bullet$ for CS $ s_k^C $ and ES $ s_j^E $, $ s_k^C\in\varphi\left(s_j^E\right) $ if and only if $ s_j^E\in\varphi\left(s_k^C\right) $.
\end{Defn}

We next introduce the concept of \textit{blocking coalition}, which represents a crucial factor that can make a M2M matching unstable.

\begin{Defn}(Blocking coalition of ES-CS matching)
	Given a M2M matching $ \varphi $, CS $ s_k^C $ and ES set $ \mathbb{S}\subseteq\bm{S^{E}}$ form a blocking pair $ \left(s_k^C;\mathbb{S}\right) $, for which we consider two types:
	
	\noindent	
	\textbf{Type 1 blocking coalition}: Type 1 blocking coalition can be incurred when the following conditions are met:
		\begin{equation}\label{key}{\small
			\begin{aligned}
				\overline{u^C}\left(s_k^C,\mathbb{S}\right)>\overline{u^C}\left(s_k^C,\varphi\left(s_k^C\right)\right),
		\end{aligned}}
	\end{equation}
	\begin{equation}\label{key}\small{
		\begin{aligned}
		\overline{u^{E\leftrightarrow C}}\left(s_j^E,\varphi\left(s_j^E\right)\backslash\widetilde{\varphi}\left(s_j^E\right)\cup\left\{s_k^C\right\}\right)>\overline{u^{E\leftrightarrow C}}\left(s_j^E,\varphi\left(s_j^E\right)\right).
		\end{aligned}}
	\end{equation}
	
	\noindent
	\textbf{Type 2 blocking coalition}: Type 2 blocking coalition can be incurred when the following conditions are met:
	\begin{equation}\label{key}{\small
			\begin{aligned}
				\overline{u^C}\left(s_k^C,\mathbb{S}\right)>\overline{u^C}\left(s_k^C,\varphi\left(s_k^C\right)\right),
		\end{aligned}}
	\end{equation}
		\begin{equation}\label{key}{\small
		\begin{aligned}
				\overline{u^{E\leftrightarrow C}}\left(s_j^E,\varphi\left(s_j^E\right)\cup\left\{s_k^C\right\}\right)>\overline{u^{E\leftrightarrow C}}\left(s_j^E,\varphi\left(s_j^E\right)\right).
	\end{aligned}}
\end{equation}
\end{Defn}
It can be inferred that a Type 1 blocking coalition can result in the instability of a matching, since an ES is motivated to select a different set of CSs to achieve a higher expected utility. Similarly, a Type 2 blocking coalition can also introduce instability to a matching since an ES can engage with more CSs to enhance its expected utility.

\subsection{Problem Formulation}
The key purpose of the three parties in our designed futures market is \textit{to maximize their individual expected utilities under acceptable risks}. Accordingly, each MU $ b_i\in\bm{B} $ aims to maximize its overall expected utility, given as
\begin{subequations}{\small
	\begin{align}
	\bm{\mathcal{F}^U}:~&\underset{{\omega\left(b_i\right)}}{\max}~\overline{u^U}\left(b_i,\omega\left(b_i\right)\right)\tag{28}\\
\text{s.t.}~~~
&\omega\left(b_i\right)=\left\{\left\{s_j^E\right\},\varnothing \right\},~\forall s_j^E\subseteq \bm{C_i}\tag{28a}\\
&p_{i,j}^{U\rightarrow E}\le \text{E}[v_{i,j}],~\text{if}~\omega\left(b_i\right)=\left\{s_j^E\right\}\tag{28b}\\
&R^{U}_1\left( b_{i},\omega \left( b_{i} \right) \right)\leq \rho_1\tag{28c}\\
&R^{U}_2\left( b_{i},\omega \left( b_{i} \right) \right)\leq \rho_2\tag{28d}
\end{align}}
\end{subequations}
where $\rho_1$ and $ \rho_2 $ are risk thresholds, and $\rho_1,~\rho_2\in (0, 1] $. In the optimization problem $ \bm{\mathcal{F}^U} $, constraint (28a) guarantees that a MU can only be matched to one ES, 
constraint (28b) ensures that the obtained expected valuation of $ b_i $ benefit from $ s^E_j $ can cover its corresponding payment; 
constraints (28c) and (28d) evaluate and control the risks of $ b_i $, and the derivations of the two risks constraints are conducted in Appendix B.1. Moreover, in the considered futures market, each ES $ s_j^E\in\bm{{S}^{E}} $ prefers to maximize its overall expected utility, as modeled by the following optimization problem $ \bm{\mathcal{F}^E} $:
\begin{subequations}\label{key}{\small
\begin{align}
	\bm{\mathcal{F}^E}:~~&\underset{{\omega\left(s_j^E\right),\varphi\left(s_j^E\right)}}{\max} \overline{u^E}\left(s_j^E,\omega\left(s_j^E\right),\varphi\left(s_j^E\right)\right)\tag{29}\\
	\text{s.t.}~~~~&
	\omega\left(s_j^E\right)\subseteq \bm{B}\tag{29a}\\
	&\varphi\left(s_j^E\right)\subseteq \bm{S^{C}}, ~\text{if} ~x_j>0\tag{29b}\\
	&p_{i,j}^{U\rightarrow E}\geq c_{i,j}^E, ~\forall b_i \in \omega\left(s_j^E\right)\tag{29c}\\
	&p_{i,j,k}^{E\rightarrow C}\le p_{j}^{max},~\forall s_k^C \in \varphi\left(s_j^E\right),~\text{if}~x_j>0\tag{29d}\\
	&|\omega\left(s_j^E\right)|\le\nonumber\\&(1+\tau)\left(G_j^E+\sum_{s_k^C\in\varphi\left(s_j^E\right)} |\mathbb{B}_{j,k}|\right)\le (1+\tau) K_j^E \tag{29e}\\
	&R_1^E\left(s_j^E,\omega\left(s_j^E\right)\right) \leq\rho_3\tag{29f}\\
	&R_2^E\left(s_j^E,\omega\left(s_j^E\right),\varphi\left(s_j^E\right)\right) \leq \rho_4\tag{29g}
\end{align}}
\end{subequations}
Since we are unaware of the specific tasks that can further be offloaded to CSs for process at this time, we use $ p_j^{max}=\text{min} \left\{p_{i,j}^{U\rightarrow E}~|~\forall b_i \in \omega\left(s_j^E\right)\right\} $ to estimate the maximum payment (per offloaded task) from $ s_j^E $ to a CS, which conforms the rule that the expense of $ s^E_j $ offered to $ s_k^C $ can fall below the payment from $b_i$.
In problem $ \bm{\mathcal{F}^E} $, $\rho_3$ and $ \rho_4 $ are risk thresholds within interval $ (0, 1] $. Constraints (29a) and (29b) show the affiliation of $ \omega\left(s_j^E\right) $ and $ \varphi\left(s_j^E\right) $, respectively; constraints (29c) and (29d) ensure that the payment from each MU to $ s_j^E $, and the expense of $ s^E_j $ offered to $ s_k^C $ can cover the corresponding service costs; constraint (29e) considers the resource supply of $ s^E_j $ for serving MUs in $ \omega\left(s_j^E\right) $ should not exceed its overbooked resources, while below the number of MUs that can access to it simultaneously (e.g., $  K_j^E$); constraints (29f) and (29g) management the risks that an ES may confront, and the derivations of such risks are conducted in Appendix B.2.
Then, similar to MUs and ESs, the goal of each CS $ s_k^C\in\bm{{S}^{{C}} }$ is to maximize its expected utility, as described by $ \bm{\mathcal{F}^C} $: 
\begin{subequations}\label{key}{\small
\begin{align}
	\bm{\mathcal{F}^C}:~~&\underset{{\varphi\left(s_k^C\right)}}{\max}~ \overline{u^C}\left(s_k^C,\varphi\left(s_k^C\right)\right)\tag{30}\\
\text{s.t.}~~~~&
	\varphi\left(s_k^C\right){\subseteq\bm{{S}^{E}}}\tag{30a}\\
	&p_{i,j,k}^{E\rightarrow C}\geq c_{i,j,k}^C,~\forall s_j^E \in \varphi\left(s_k^C\right)\tag{30b}\\
&R^C\left(s_k^C,\varphi\left(s_k^C\right)\right) \leq \rho_5 \tag{30c}
\end{align}}
\end{subequations}
where $\rho_5$ denotes a risk threshold falls in interval $ (0, 1] $. In $ \bm{\mathcal{F}^C} $, constraint (30a) guarantees $ \varphi\left(s_k^C\right) $ belongs to set $ \bm{{S}^{{E}}} $; constraint (30b) helps CS $ s_k^C $ to enjoy a non-negative expected utility; and constraint (30c) constrains the risk that $ s_k^C $'s resource demand may exceeds its resource supply, and the derivation of (30c) is studied in Appendix B.3.

\subsection{Algorithm Design}
\begin{algorithm*}[t!] \small{ 
		\setstretch{0.1} 
	\caption{Proposed OA-CLM in the futures market}
	\LinesNumbered 
	\textbf{Initialization:} $ m\leftarrow1 $, $ n\leftarrow1 $, $ p_{i,j}^{U\rightarrow E}\left\langle 1 \right\rangle\leftarrow p_{i,j}^{min}$, $ p_{i,j,k}^{E\rightarrow C}\left\langle 1 \right\rangle\leftarrow p_{i,j,k}^{min} $, for $\forall i,j,k$, $ flag_i\leftarrow 1 $, $ flag_j\leftarrow 1 $ \ 
	
	\textit{\textbf{\% Phase 1. MU-ES M2O matching game}}
	
	\While{$ {flag}_{i} $}{
		\textbf{Calculate} $ L^U_{i}$, where $R_1^U<\rho_1,~R_2^U<\rho_2 $
		
		\textbf{Determine} $ \mathbb{W}(b_i)\leftarrow $ choose the top ES in $ L_i^U $
		
		\textbf{$ {flag}_{i} \leftarrow 0 $}
		
		\If{$ \forall\mathbb{W}\left( b_{i} \right) \neq \varnothing $}{
			\For{$ 
				s^E_{j} \leftarrow \mathbb{W}\left( b_{i} \right) $}{$ b_{i} $ sends a proposal including $ p_{i,j}^{U\rightarrow E}\left\langle m \right\rangle $ and $ r^U_{i} $ to $ s^E_j $}
			\While{$ \Sigma_{b_{i}\in\bm{B}}{flag}_{i} > 0 $}{
				Collect proposals from the MUs in $ \bm B $, e.g., using $ \widetilde{\mathbb{W}}\left(s_j^E\right) $ to include the MUs that send proposals to $ s_j^E $
				
				$ \mathbb{W}\left(s_j^E\right)\leftarrow $ choose MUs from $ \widetilde{\mathbb{W}}\left(s_j^E\right) $ to maximize the ES's expected utility under limited $ (1+\tau)K_j^E $ VMs constraint and risks constraints $ R_2^E $
				
				$ s_j^E $ temporally accept the MUs in $ \mathbb{W}\left(s_j^E\right) $, and rejects the others, and notify each MU that whether be selected and the probability of being selected as a volunteer in this round
			}
			\For{
				$ b_i \in \mathbb{W}\left(s_j^E\right) $
			}{
				\If{$ b_i $ is rejected by $ s_j^E $, $ p_{i,j}^{U\rightarrow E}\left\langle m \right\rangle<v_{i,j} $, and $ R_1^U<\rho_1 $ }{
					$ p_{i,j}^{U\rightarrow E}\left\langle m+1 \right\rangle\leftarrow \text{min}\left\{p_{i,j}^{U\rightarrow E}\left\langle m \right\rangle+\mathrm{\Delta}p_i,\text{E}[v_{i,j}]\right\} $}
				\Else{$ p_{i,j}^{U\rightarrow E}\left\langle m+1 \right\rangle\leftarrow p_{i,j}^{U\rightarrow E}\left\langle m \right\rangle $}
			}
			
			\If{there exists $
				p_{i,j}^{U\rightarrow E}\left\langle m+1 \right\rangle\neq p_{i,j}^{U\rightarrow E}\left\langle m \right\rangle $, $\forall b_{i} \in \mathbb{W}\left(s_j^E\right) $}{
				$ {flag}_i\leftarrow 1 $,
				$ m\leftarrow m+1 $
			}
		}
	}
	\textit{\textbf{\% Phase 2. ES-CS M2M matching game}}
	
	\While{$ flag_j $}{
		\textbf{Calculate} $ L^E_{i,j}$\\
		\textbf{Determine} $ \mathbb{Y}\left(s_j^E\right)\leftarrow$ choose each task's interested CSs from $ L^E_{i,j} $; \\
		\textbf{Calculate} the probability of ES $ s_j^E $ fulfills the contract $  \text{E}[\beta_{i,j,k}] $ to $ s_k^C $
		
		\textbf{$ {flag}_{j} \leftarrow 0 $}\
		
		\If{$s_j^E\in \left\{s_j^E~|~x_j>0,~\forall s_j^E\in \bm{{S}^{{E}}}\right\} $}{
			\For{$ s_k^C\in \mathbb{Y}\left(s_j^E\right) $}{$ s_j^E $ sends its information (e.g., payment for each task $ p_{i,j,k}^{E\rightarrow C}\left\langle n \right\rangle $, probability of fulfilling the contract $  \text{E}[\beta_{j,k}] $, and required amount of resource $ r_j^{max} $) to $ s_k^C \in \mathbb{Y} \left(s_j^E\right) $}

		\While{$ \Sigma_{s^E_{j}\in\bm{S^{E}}}{flag}_{j} > 0 $}{
			Collect proposals from the ESs in $ \left\{s_j^E~|~x_j>0,~\forall s_j^E\in \bm{{S}^{{E}}}\right\} $, e.g., using $ \widetilde{\mathbb{Y}}\left(s_k^C\right) $ to include the ESs' tasks that send proposals to $ s_k^C $\\
			$ \mathbb{Y} \left(s_k^C\right)\leftarrow $ choose ESs' tasks from $ \widetilde{\mathbb{Y}} \left(s_k^C\right) $ to maximize the CS's expected utility under limited $ G_k^C $ VMs constraint and risk constraint $ R^C $\\
			$ s_k^C $ temporally accept the ESs in $ \mathbb{Y} \left(s_k^C\right) $, and rejects the others
		}
		\For{
			$ s_j^E \in \mathbb{Y} \left(s_k^C\right) $
		}{
			\If{a task from $ s_j^E $ is rejected by $ s_k^C $ and $ p_{i,j,k}^{E\rightarrow C}\left\langle n \right\rangle<p_{j}^{max} $}{
				$ p_{i,j,k}^{E\rightarrow C}\left\langle n+1 \right\rangle\leftarrow \text{min}\left\{p_{i,j,k}^{E\rightarrow C}\left\langle n \right\rangle+\mathrm{\Delta}p_j,p_{j}^{max}\right\} $}
			\Else{$ p_{i,j,k}^{E\rightarrow C}\left\langle n+1 \right\rangle\leftarrow p_{i,j,k}^{E\rightarrow C}\left\langle n \right\rangle $}
		}
		
		\If{there exists $
			p_{i,j,k}^{E\rightarrow C}\left\langle n+1 \right\rangle\neq p_{i,j,k}^{E\rightarrow C}\left\langle n \right\rangle $, $\forall s_k^C \in \mathbb{Y} \left(s_j^E\right) $}{
			$ {flag}_j\leftarrow 1 $,
			$ n\leftarrow n+1 $
		}
	}
	}
\% \textit{\textbf{Phase 3. Cross-layer interaction}}

\If{$ |\mathbb{W}\left(s_j^E\right)|>(1+\tau)\left(G_j^E+\sum_{s_k^C\in\varphi\left(s_j^E\right)} |\mathbb{B}_{j,k}|\right),~\forall s_j^E\in \bm{{S}^{{E}}} $}{$ \mathbb{W} \left(s_j^E\right)\leftarrow $ $ s_j^E $ selects some MUs from $ \mathbb{W}\left(s_j^E\right) $ to maximize its expected utility based on ES's overbooked resource supply $ (1+\tau)\left(G_j^E+\sum_{s_k^C\in\varphi\left(s_j^E\right)} |\mathbb{B}_{j,k}|\right) $} 
\If{The risks of participants(i.e., MU, ES, and CS) are unacceptable }{The corresponding participant will give up the futures trading }
$\omega\left(s_j^E\right)\leftarrow\mathbb{W}\left(s_j^E\right)$, $\omega(b_i)\leftarrow\mathbb{W}(b_i)$, $\varphi\left(s_j^E\right)\leftarrow\mathbb{Y} \left(s_j^E\right)$, $\varphi\left(s_k^C\right)\leftarrow\mathbb{Y} \left(s_k^C\right)$\\
\textbf{Return:} $\omega\left(s_j^E\right)$, $\omega(b_i)$, $\varphi\left(s_j^E\right)$, $\varphi\left(s_k^C\right)$
}
\end{algorithm*}

Our proposed OA-CLM facilitates two-way negotiations involving MUs, ESs, and CSs, where the trading between MUs and ESs, as well as between ESs and CSs, can significantly influence each other, resulting in an iterative, cross-layer, and intricate trading process. 
To provide a clearer illustration of how the OA-CLM game operates, the entire matching procedure involves three key phases with multiple rounds. It specific details can be found in Algorithm 1.

\textit{i) MU-ES matching game (Phase 1):} We first introduce the M2O matching between MUs and ESs and adopt the Gale-Shapley algorithm\cite{9667258,9687261} to construct a stable mapping in the considered futures market.

\noindent
\textbf{Step 1. Initialization:} At the beginning of each round, each MU $ b_i $ sets its payment by $ p_{i,j}^{U\rightarrow E}\left\langle 1 \right\rangle=p_{i,j}^{min}$ (line 1), where $ p_{i,j}^{min} $ denotes the initial payment from $ b_i $ to $ s_j^E $. In addition, each MU announces its requests to ESs according to its \textit{preference list} (Definition 5).
\begin{Defn}(Preference list of MU) The preference list $ L_{i}^U $ of a MU $ b_i $ regarding ESs represents a vector of $ s^E_j \in \bm{C_i} $, sorted by the expected value of $ {u^U}\left(b_i,s_j^E\right) $ under a non-ascending order: 
\begin{equation}\label{key}{\small
	\begin{aligned}
				L_{i}^U= \left\{s_j^E~|~\text{{\rm non-ascending on }} \overline{u^U}\left(b_i,s_j^E\right),\forall s_j^E\in \bm{C_i}\right\},	
	\end{aligned}}
\end{equation}
\end{Defn}
\noindent
where we use $\mathbb{W}(b_i) \in L_i^U$ to represent the favorite ES of $ b_i $ (e.g., $|\mathbb{W}(b_i)|=1$), and $\mathbb{W}\left(s_j^E\right)$ to indicate the set of MUs that are temporarily accepted by $ s_j^E $ during the matching procedure.

\noindent
\textbf{Step 2. Proposal of MUs:} At round $m$, each MU $ b_i $ chooses the top ES in $ L_i^U $ under acceptable risks, and records it in $\mathbb{W}(b_i)$ (line 5). Then, $ b_i $ sends a proposal to the ES in $\mathbb{W}(b_i)$ (denoted by $ s^E_j $ for analytical simplicity), including its payment $ p_{i,j}^{U\rightarrow E}\left\langle m \right\rangle $ and the required amount of resource $ r_i^U $ (line 9).

\noindent
\textbf{Step 3. MU selection on ESs' side:} We use set $\widetilde{\mathbb{W}}\left(s_j^E\right)$ to collect the information from MUs, each ES $ s_j^E $ then determines a collection of temporarily MUs, as recorded in set $\mathbb{W}\left(s_j^E\right)$, where $\mathbb{W}\left(s_j^E\right)\subseteq \widetilde{\mathbb{W}}\left(s_j^E\right)$ that enable the maximum expected utility under the overbooked maximum access $(1+\tau) K_j^E $ (e.g., constraint (29e)). Then, each $ s_j^E $ informs MUs in set $\widetilde{\mathbb{W}}\left(s_j^E\right)$ about its determinations, as well as the probability on being selected as a volunteer, in the current round (lines 10-13).

\noindent
\textbf{Step 4. Decision-making on MUs' side:} After obtaining decisions from ES $ s_j^E \in \mathbb{W}(b_i) $, $ b_i $ considers the following conditions:

\textbf{Condition 1.} The payment from $ b_i $ remains unchanged, when one of the following conditions is met (line 18): \textit{i)} $ b_i $ is accepted by an ES $ s_j^E $; \textit{ii)} $ b_i $'s current payment $ p_{i,j}^{U\rightarrow E}\left\langle m \right\rangle $ equals to its expected valuation $ \text{E}[v_{i,j} ]$; \textit{iii)} risks are turning intolerable control, e.g., raising the payment can lead to an unacceptable risk (i.e., fails to meet constraint (28c));

\textbf{Condition 2.} If $ b_i $ is rejected by an ES $ s_j^E $, its current payment $ p_{i,j}^{U\rightarrow E}\left\langle m \right\rangle $ stays below its expected valuation $ \text{E}[v_{i,j}] $, and the corresponding risk $ R_1^U $ is acceptable, $ b_i $ will put up its payment to $ s_j^E $ in the next round (line 16).

\noindent
\textbf{Step 5. Repeat:} If payments of all the MUs stay unchanged from the $ (m-1)^\text{th} $ round to the $ m^\text{th} $ round, the matching will be terminated at round $ m $ (e.g., $ \Sigma_{b_{i}\in\bm{B}}{flag}_{i} = 0 $, line 6). Otherwise, OA-CLM repeats the above steps (e.g., lines 3-20) in the next round.

\textit{ii) ES-CS matching game (Phase 2):}
We next introduce the M2M matching between ESs and CSs and adopt the Gale-Shapley algorithm\cite{9667258,9687261}, contributing to construct a stable mapping for purchasing cloud resources in our futures market, as follows:

\noindent
\textbf{Step 1. Initialization:} At the beginning of each round, each ES $ s_j^E $ sets its payment for a task as $ p_{i,j,k}^{E\rightarrow C}\left\langle 1 \right\rangle=p_{i,j,k}^{min}$, where $p_{i,j,k}^{min}$ denotes the initial payment from ES $ s_j^E $. We apply $\mathbb{Y}\left(s_k^C\right)$ to indicate the tasks that are temporarily accepted by CS $ s_k^C $, and $\mathbb{Y}\left(s_j^E\right)$ to represent CSs that are interested in the tasks of ES $s_j^E$, based on each task's (namely, MU's) preference list, according to the following definition.
\begin{Defn}(Preference list of ES) The preference list $ L_{i,j}^E $ of an ES $s_j^E$ in handling the task of $b_i$ regarding CSs (e.g., which CS can the task of $b_i$ be offloaded to for further processing, from $s_j^E$) is a vector of $ s^C_k \in \bm{S^C} $, sorted by the expected value of $ {u^{E\leftrightarrow C}}\left (s_j^E,\varphi\left(s_j^E\right)\right ) $ and following a non-ascending order: 	
	\begin{equation}\label{key}{\small
		\begin{aligned}
	&L_{i,j}^E=
	\\ & \left\{s_k^C~|~\text{{\rm non-ascending on }} \overline{u^{E\leftrightarrow C}}\left (s_k^C,\varphi\left(s_j^E\right)\right ),\forall s^C_k \in \bm{S^C}\right\}.
\end{aligned}}
\end{equation}
\end{Defn}

\noindent
\textbf{Step 2. Proposal of ESs:} At round $ n $, each ES $ s_j^E \in \left\{s_j^E~|~x_j>0,~\forall s_j^E\in \bm{{S}^{{E}}}\right\}$ reports its information, including each task's payment $ p_{i,j,k}^{E\rightarrow C}\left\langle n \right\rangle $, the corresponding required amount of resources $r_i^U $, and the probability on breaking the contract with $ s_k^C $ (i.e., $\text{Pr}\left(\beta_{i,j,k}=0\right)$). Since we are unaware of the specific tasks that can be offloaded to CSs for processing during this time, let $ r^{max}_j=\max \left\{r_i^U~|~\forall b_i \in \omega \left(s_j^E\right)\right\} $ to estimate the maximum amount of required resources (per offloaded task) from $ s_j^E $ to a CS (line 29). 

\noindent
\textbf{Step 3. ES selection on CSs' side:} After collecting the information from ESs in set $\widetilde{\mathbb{Y}}\left(s_k^C\right)$, each CS $ s_k^C $ determines a collection of temporarily ESs' tasks, as recorded by set $\mathbb{Y}\left(s_k^C\right)$, where $\mathbb{Y}\left(s_k^C\right)\subseteq \widetilde{\mathbb{Y}}\left(s_k^C\right)$, that enables the maximum expected utility under limited resource supply (e.g., $ G_k^C $) while supporting an acceptable risk $ R^C $. Then, each $ s_k^C $ informs ESs about its determinations in the current round (lines 30-33).

\noindent
\textbf{Step 4. Decision-making on ESs' side:} After obtaining decisions from each CS $ s_k^C \in \mathbb{Y}\left(s_j^E\right) $, ES $ s_j^E $ considers the following conditions:

\textbf{Condition 1.} If the task of a MU associated with ES $ s_j^E $ is accepted by $ s_k^C $, or its current payment $ p_{i,j,k}^{E\rightarrow C}\left\langle n \right\rangle $ equals to its maximum payment $ p_{j}^{max} $ (e.g., $ p_{j}^{max}=\text{min} \left\{p_{i,j}^{U\rightarrow E}\left\langle m \right\rangle | b_i\in \mathbb{W}\left(s_j^E\right)\right\} $), the payment from $ s_j^E $ remains unchanged, i.e., $ p_{i,j,k}^{E\rightarrow C}\left\langle n+1 \right\rangle=p_{i,j,k}^{E\rightarrow C}\left\langle n \right\rangle $
(line 38);

\textbf{Condition 2.} If the task of a MU associated with $ s_j^E $ is rejected by $ s_k^C $ and its current payment $ p_{i,j,k}^{E\rightarrow C}\left\langle n \right\rangle $ is still below its maximum payment $ p_{j}^{max} $, $ s_j^E $ increases its payment to $ s_k^C $ in the next round (line 36).

\noindent
\textbf{Step 6. Repeat:} If payments of all the ESs stay unchanged from the $ (n-1)^\text{th} $ round to the $ n^\text{th} $ round, the matching will be terminated at round $ n $ (e.g., $ \Sigma_{s^E_{j}\in\bm{S^{E}}}{flag}_{j} = 0 $, line 26). Otherwise, OA-CLM repeats the above steps (e.g., lines 22-40) in the next round.

\textit{iii) Cross-layer interaction (Phase 3):}
This step describes the mutual impacts between the above-discussed two phases.
When ES $ s_j^E $ did not purchase sufficient resources from CSs in Phase 2, i.e., $ |\mathbb{W}\left(s_j^E\right)|>(1+\tau)\left(G_j^E+\sum_{s_k^C\in\varphi\left(s_j^E\right)} |\mathbb{B}_{j,k}|\right)$, $ s_j^E $ will further choose some MUs from $ \mathbb{W}\left(s_j^E\right) $ to maximize its expected utility based on its overbooked resource supply, e.g., $ (1+\tau)\left(G_j^E+\sum_{s_k^C\in\varphi\left(s_j^E\right)} |\mathbb{B}_{j,k}|\right) $ (lines 42-43), which makes the previous two phases mutually impact each other (i.e., the quantity of matched MUs for each ES during Phase 1 determines the resource demand for each ES in Phase 2, whereas the quantity of cloud resources acquired by an ES in Phase 2 directly influences the resource supply of that ES in Phase 1). Our OA-CLM facilitates a risk-aware futures market, where participants will later engage in spot market when incurring unacceptable risks (lines 44-45). 

\subsection{Design Targets and Property Analysis}
As our OA-CLM game is pre-deployed for future practical resource transactions, our focus lies in crafting distinctive objectives that diverge from conventional matching games, considering the expected utility for participants as well as their associated risks.

\begin{Defn}(Individual rationality of MU-ES matching in futures market) For both MUs and ESs, a matching $ \omega $ is individual rational when the following conditions are satisfied:
	
	\noindent
	$\bullet$ for MUs: the risks of MU $ b_i\in\bm{B} $ on obtaining an unsatisfying utility, and being selected as a volunteer, are controlled within certain ranges, i.e., constraints (28c) and (28d) are satisfied.
	
	\noindent
	$\bullet$ for ESs: each $ s_j^E $ matched to a MU set $ \omega\left(s_j^E\right) $ can achieve a positive expected utility, i.e.,
			\begin{equation}\label{key}{\small
			\begin{aligned}
				\overline{u^{U\leftrightarrow E}}	\left(s_j^E,\omega\left(s_j^E\right)\right) > 0.
		\end{aligned}}
	\end{equation}
	
	The risks of each ES on breaking the pre-signed contracts are controlled within certain ranges, i.e., constraints (29f) and (29g) are satisfied.
	Moreover, the overall resource demand of $ s_j^E $ will not exceed its supply, as given by
\begin{equation}\label{key}{\small
	\begin{aligned}
		|\omega\left(s_j^E\right)| \leq (1+\tau)\left(G_j^E+\sum_{s_k^C\in\varphi\left(s_j^E\right)} |\mathbb{B}_{j,k}|\right).		
	\end{aligned}}
\end{equation}
\end{Defn}

\begin{Defn}(Individual rationality of ES-CS matching) For both ESs and CSs, a matching $ \varphi $ is individual rational when the following conditions are satisfied:
	
	\noindent
	$\bullet$ for CSs: each CS $ s_k^C $ matched to an ES set $ \varphi\left(s_k^C\right) $ can achieve a positive expected utility, i.e.,
	\begin{equation}\label{key}{\small
		\begin{aligned}
		\overline{u^C}\left(s_k^C,\varphi\left(s_k^C\right)\right)>0		
		\end{aligned}}
	\end{equation}
	
	Then, the risk of each CS that its resource demand exceeds supply can be controlled within a certain range, i.e., constraints (30c) is satisfied.
	
	\noindent
	$\bullet$ for ESs: the risk of each ES on breaking the contracts with CSs is controlled within a certain range, i.e., constraint (29f). Meanwhile, each ES can achieve a positive expected utility, i.e.,
	\begin{equation}\label{key}{\small
		\begin{aligned}
				\overline{u^E}	\left(s_j^E,\omega\left(s_j^E\right),\varphi\left(s_j^E\right)\right)>0
		\end{aligned}}
	\end{equation}
\end{Defn}

\begin{Defn}(Strong stability of OA-CLM) The proposed OA-CLM is strongly stable if MU-ES matching and ES-CS matching are individually rational and have no blocking pair or coalition.
\end{Defn}

Note that competitive equilibrium represents a conventional concept in economic behaviors, which plays an important role in analyzing the performance of commodity markets upon having flexible prices and multiple players\cite{8815852}. When the considered market arrives at the competitive equilibrium, there exists a price at which the number of MUs that will pay is equal to the number of ESs that will sell\cite{8815852}. Correspondingly, the competitive equilibrium in futures market is defined below.

\begin{Defn}(Competitive equilibrium associated with trading between MUs and ESs in OA-CLM) The trading between MUs and ESs reaches a competitive equilibrium if the following conditions are satisfied:
	
	\noindent
	$\bullet$ For each ES $ s_j^E \in \bm S^{E} $, if $ s_j^E $ is associated with a MU $ b_i\in \bm B $, then $ c_{i,j}^E\leq p_{i,j}^{U\rightarrow E} $,
	
	\noindent
	$\bullet$ For each MU $ b_i\in \bm B $, $ b_i $ is willing to trade with the ES that can bring it with the maximum expected utility,
	
	\noindent
	$\bullet$ For each MU $ b_i\in \bm B $, if $ b_i $ is not associated with any ES, then the payment paid by $ b_i $ is equal to its expected valuation $ \text{E}[v_{i,j}] $ or the maximum payment $ p_{i,j}^{max} $ under accepted risk $ R_1^U $ through trading with $ s_j^E $.
\end{Defn}

\begin{Defn}(Competitive equilibrium associated with trading between ESs and CSs in OA-CLM) The trading between ESs and CSs reaches a competitive equilibrium if the following conditions are satisfied:
	
	\noindent
	$\bullet$ For each CS $ s_k^C \in \bm S^{C} $, if $ s_k^C $ is associated with an ES $ s_j^E\in \bm{S^E} $, then $ c_{i,j,k}^C\leq p_{i,j,k}^{E\rightarrow C} $,
	
	\noindent
	$\bullet$ For each ES $ s_j^E\in \bm{S^E} $, $ s_j^E $ is willing to trade with the CS that can bring it with the maximum expected utility,
	
	\noindent
	$\bullet$ For each ES $ s_j^E\in \bm{S^E} $, if a task of $ s_j^E $ is not associated with any CS, then $ p_{i,j,k}^{E\rightarrow C} = p_{j}^{max} $.
\end{Defn}

For a multi-objective optimization problem (e.g., optimization problems $ \bm{\mathcal{F}^U} $, $ \bm{\mathcal{F}^E} $, and $ \bm{\mathcal{F}^C} $), a Pareto improvement occurs when the \textit{expected social welfare} can be increased with another feasible matching result. Specifically, the expected social welfare refers to a summation of expected utilities of MUs, ESs, and CSs in our designed CAMENs. Thus, a matching is weak Pareto optimal when there is no Pareto improvement \cite{8815852}.

\begin{Defn}(Weak Pareto optimality of OA-CLM) The proposed OA-CLM is weak Pareto optimal if there is no Pareto improvement.
\end{Defn}

We show that our proposed OA-CLM can support the above-discussed properties, while the corresponding analysis and proofs are given by Lemmas 1-6 in Appendix C, due to space limitation.

\section{OS-CLM Game in Spot Market}
To cope with the dynamics of our considered market (e.g., fluctuant resource demand/supply and varying communication qualities), while ensuring efficient and dependable resource provisioning, we then engage two cases during each practical resource transaction: \textit{i)} When the total resource demand of an ES surpasses its supply, including resources borrowed from CSs as specified in forward contracts, the ES identifies certain contractual MUs as volunteers to relinquish the committed services, to meet the resource constraint; \textit{ii)} In cases where certain MUs without pre-signed forward contracts have resource requirements, while some ESs or CSs have surplus resources to offer, the OS-CLM mechanism is designed as a complementary plan to assist participants in achieving improved utilities.

To distinguish the MUs, ESs, and CSs that practically participate in spot trading market, we first introduce the following notations. 

\noindent
$\bullet$ $ \bm{B^{\prime}} $: the set of MUs that need to trade with ESs to get resources in spot market, including MUs those without forward contracts and the volunteers, i.e., $\bm{B^{\prime}}\subseteq\bm{B} $;

\noindent
$\bullet$ $ \bm{{S}^{{E}\prime}} $ and $ \bm{{S}^{ {C\prime}}} $: the set of ESs, and the set of CSs with idle resources in spot market, i.e., $ \bm{{S}^{{E}\prime}}\subseteq\bm{{S}^{{E}}} $ and $ \bm{{S}^{{C}\prime}}\subseteq\bm{{S}^{{C}}} $;
%

\noindent
$\bullet$ $ \omega^{\prime}\left(s_j^E\right) $: the set of MUs served by ES $ s_j^E $ in spot market, where $ \omega^{\prime}\left(s_j^E\right)\subseteq\bm{B^{\prime}} $;

\noindent
$\bullet$ $ \omega^{\prime}\left(b_i\right) $: an ES that serves MU $ b_i $ in spot market, where $ \omega^{\prime}\left(b_i\right)\subseteq\bm{ C_i^{\prime}} $, and $ \bm{ C_i^{\prime}} \subseteq\bm{{S}^{{E}\prime}} $, $ \bm{C_i^{\prime}} $ represents the set of candidate ESs in $ \bm{{S}^{{E}\prime}} $ that $ b_i $ may get services;

\noindent
$\bullet$ $ \varphi^{\prime}\left(s_j^E\right) $: the set of CSs that lend resources to ES $ s_j^E $ in spot market, where $ \varphi^{\prime}\left(s_j^E\right)\subseteq\bm{{S}^{{C}\prime}} $;

\noindent
$\bullet$ $ \varphi^{\prime}\left(s_k^C\right) $: the set of ESs that borrow resources from CS $ s_k^C $ in spot market, where $ \varphi^{\prime}\left(s_k^C\right)\subseteq\bm{{S}^{{E}\prime}} $.

	In the spot market, we then define the practical utility of MU $ b_i\in \bm{B^\prime} $ as the difference between its practically obtained valuation and its payment to an ES, as given by
\begin{equation}\label{key}{\small
	\begin{aligned} 
		&u^{U\prime}(b_i,\omega^{\prime}\left(b_i\right))=v_{i,j}-p_{i,j}^{U\prime\rightarrow E\prime}.
	\end{aligned} }
\end{equation}

Similarly, the utility of ES $ s_j^E \in \bm{{S}^{{E}\prime}} $ can be calculated as its total received payments minus service costs and its payment to CSs, as shown by
\begin{equation}\label{key}{\small
	\begin{aligned}
		&u^{E\prime}\left(s_j^E,\omega^{\prime}\left(s_j^E\right),\varphi^{\prime}\left(s_j^E\right)\right)
		=\sum_{b_i\in\omega^{\prime}\left(s_j^E\right)}\left(p_{i,j}^{U\prime\rightarrow E\prime}-c_{i,j}^E\right)\\&-\sum_{s_k^C\in\varphi^{\prime}\left(s_j^E\right)}\sum_{b_i\in\mathbb{B}^\prime_{j,k}}\left(p_{i,j,k}^{E\prime\rightarrow C\prime}-c_{i,j}^E\right), 
	\end{aligned}}
\end{equation}
where $ \mathbb{B}^\prime_{j,k} $ refers to the set of MUs that can enjoy computing services offered by $ s_j^E $, for further purchasing resources from $ s_k^C $ in spot market.
In addition, the utility of CS is determined by the overall payment obtained from ESs minus its service cost:
\begin{equation}\label{key}{\small
	\begin{aligned}
		&u^{C\prime}\left(s_k^C,\varphi^{\prime}\left(s_k^C\right)\right)=\sum_{s_j^E\in\varphi^{\prime}\left(s_k^C\right)}\sum_{b_i\in\mathbb{B}^\prime_{j,k}}\left(p_{i,j,k}^{E\prime\rightarrow C\prime}-c_{i,j,k}^C\right).
	\end{aligned}}
\end{equation}

 Note that each MU, ES, and CS aims to \textit{maximize its overall practical utility in the designed spot market}. Since the OS-CLM game is similar to OA-CLM in some aspects, the optimization problems, the relevant definitions of matching $ \omega^{\prime}(.) $ and $ \varphi^{\prime}(.) $ as well as the algorithm design, the design targets and their corresponding analysis as well as algorithm design of OS-CLM are detailed by Appendix D.

\section{Evaluation}
\subsection{Simulation Settings}
To verify the superior performance of our proposed hybrid cross-layer matching game in dynamic CAMENs, we conduct comprehensive evaluations, involving both experiments relying on \textit{i)} numerical simulations with various parameter settings (see Sec. 5.3); as well as \textit{ii)} real-world data-based simulations (see Sec. 5.4), upon considering real-world EUA Dataset\cite{lai2018optimal}, which offers the information of base stations (as ESs in our paper) and MUs (e.g., location information) within Metropolitan Melbourne, Australia (an overall region of over 9,000 $ \text{km}^2 $). Simulations are carried out via MATLAB R2019a on the desktop with 12th Gen Intel Core i5-12400 2.5 GHz and 16 GB RAM.
For notational simplicity, our proposed cross-layer matching game in the hybrid service trading market is abbreviated as "Hybrid\_F\_S". Without loss of generality, we adopt the Monte Carlo method, where each value associated with simulation figures represents the average value over 1000 independent simulations \cite{10321730}.

The primary parameters for the simulation have been established in accordance with the supportive existing literature\cite{8815852,9763875,10321730,9682584}, as detailed below: $f_i^U\in[1,1.5]\times10^9$ CPU cycles/s, $ f_j^E\in[1,3]\times10^{12}$ CPU cycles/s, $f_k^C\in[1,3]\times10^{12}$ CPU cycles/s, $d_i^U\in[1,1.5]$ Mb, $r_i^U=600$ cycles/bit $\times~d_i^U$, $e_i^{t}\in[500,550]$ mW, $e_i^U=e_j^E=e_k^C\in[450,500] $ mW, $\mu_1=100$, $\mu_2=400$ (i.e., the received SNR thus falls within $ [17 , 23]\text{dB}$ roughly), $a_i\in[0.64,0.96]$, $G_j^E\in[4,5]$, $K_j^E\in[6,8]$, $G_k^C\in[8,12]$, $\varepsilon_k\in[0,G_k^C]$, $ \sigma_k\in[2,4] $, $W=6$ MHz, $ q^{U\rightarrow E}=3 $, $ q^{E\rightarrow U}=3 $, $ q^{E\rightarrow C}=2 $, $ q^{C\rightarrow I}=1.5 $, $c^{H}=0.05$, 
$p_{i,j}^{min}=p_{i,j,k}^{min}=1.5$, $ \tau=0.1 $, thresholds $\rho_1-\rho_5$ are set by 0.3, the number of CSs is 12.

\subsection{Benchmark Methods and Evaluation Matrices}
To better evaluate the performance of Hybrid\_F\_S, we involve a set of comparable methods as benchmarks.

\noindent
$\bullet$ \textbf{Cross-layer matching game conventional spot trading (Conventional\_S) \cite{8815852}:} Conventional\_S is implemented in a spot trading market over CAMENs, where resources are shared between MUs and ESs, and between ESs and CSs, through M2O and M2M matching.

\noindent
$\bullet$ \textbf{Cross-layer matching game in hybrid market without risk analysis (Hybrid\_F\_S(NoRiskA)):} Hybrid\_F\_S(NoRiskA) is similar to our proposed Hybrid\_F\_S, where the key difference refers to that Hybrid\_F\_S(NoRiskA) omits risk constraints(i.e., constraints (28b), (28c), (29f), (29g), and (30c)) to highlight the importance of incorporating risk constraints in trading decisions.

\noindent
$\bullet$ \textbf{MUs' utility-prioritized cross-layer matching (MU\_Prioritized) \cite{9127160}:} MU\_Prioritized focuses on maximizing MUs' utility by allowing them to select the ES that offers the greatest utility under a spot trading mode, as inspired by greedy algorithms.

\noindent
$\bullet$ \textbf{ESs' utility-prioritized cross-layer matching (ES\_Prioritized) \cite{9127160}:} ES\_Prioritized puts the ESs' utility at the first place, under spot trading mode, as inspired by greedy algorithms.

\noindent
$\bullet$ \textbf{Random matching (Random\_M) \cite{9127160}:} Random\_M offers a fast approach wherein ESs randomly choose MUs and CSs, serving as a baseline to demonstrate the trade-off between time efficiency and resource allocation performance.

To conduct an exhaustive analysis of the factors encompassed by our considered baseline methods, we involve key factors into distinct categories, as summarized and contrasted in Table 1. Apparently, our considered baseline methods have covered most of the existing and popular algorithms with different features.
\begin{table}[htb] 
	\vspace{-0.2cm}
	{\footnotesize
		\caption{\footnotesize{A summary of baseline methods (Sta.: Stable, Gre.: Greedy, Ran.: Random, Ove.: Overbooking, Ris.: Risk evaluation)}} \vspace{-0.6cm} 
		\begin{center}
			\setlength{\tabcolsep}{0.5mm}{
				\begin{tabular}{|c|c|c|c|c|c|c|c|}
					\hline
					\multirow{2}{*}{\textbf{Algorithm}} & \multicolumn{2}{c|}{\textbf{Trading mode}}&\multicolumn{3}{c|}{\textbf{Matching property}} & \multicolumn{2}{c|}{\makecell[c]{\textbf{Innovative}\\ \textbf{attributes}}}\\  \cline{2-8} 
					&\makecell[c]{Spot}&\makecell[c]{Futures}&Sta.&\makecell[c]{Gre.}&Ran.&Ove.&\makecell[c]{Ris.}\\ \hline
					\makecell[c]{Conventional\_S\cite{8815852}} &$\surd$& &$\surd$& & & &\\ \hline
					\makecell[c]{Hybrid\_F\_S(NoRiskA)} &$\surd$&$\surd$&$\surd$& & &$\surd$&\\ \hline
					\makecell[c]{MU\_Prioritized\cite{9127160}} & $\surd$& & &$\surd$& & &\\ \hline
					\makecell[c]{ES\_Prioritized\cite{9127160}} &$\surd$& & &$\surd$& & &\\ \hline
					\makecell[c]{Random\_M\cite{9127160}} &$\surd$& & & &$\surd$ & & \\ \hline
					our work (Hybrid\_F\_S)&$\surd$&$\surd$&$\surd$& & &$\surd$&$\surd$\\ \hline
			\end{tabular}}
	\end{center}}
\end{table}

To conduct quantitative performance evaluations, we focus on the following performance indicators:

\noindent
$\bullet$ \textbf{Social welfare:} The summation of utilities of MUs, ESs, and CSs.

\noindent
$\bullet$ \textbf{Running time (RT, millisecond):} RT can be obtained by MATLAB 2019a, which reflects the delay cost consumed by looking for matching results, and thereby the time efficiency.

\noindent
$\bullet$ \textbf{Number of interactions (NI):} NI estimates the total number of interactions between MUs and ESs consumed by obtaining matching results\footnote{This paper assumes the communications between can be supported by ESs and CSs wired links, we thus omit the corresponding communication overhead.}, which further captures the overhead of decision-making.

\noindent
$\bullet$ \textbf{Practical task completion time (PTCT, millisecond):} The PTCT of a task should consider the latency caused by matching decision-making, which is estimated by the end-to-end (E2E) delay of communication links between MUs and ESs falls in $ [1, 15] $ ms\cite{10321730} in our simulation. Accordingly, we calculate the PTCT of each task as the summation of data transmission and execution time, as well as the decision-making latency.

\subsection{Numerical Simulations}
Having comprehensive numerical simulations helps us to better illustrate the superior performance of our proposed Hybrid\_F\_S, as comparing to benchmark methods upon considering diverse problem scales.
\subsubsection{Performance evaluation on social welfare, RT, NI, and PTCT}
\begin{figure*} \centering 
	\vspace{-2.3cm}
	\subfigtopskip=2pt
	\subfigbottomskip=10pt
	\subfigcapskip=-2cm
	\setlength{\abovecaptionskip}{-1.65cm}
	\subfigure[] {
		\label{fig:a}   
		\includegraphics[width=0.52\columnwidth]{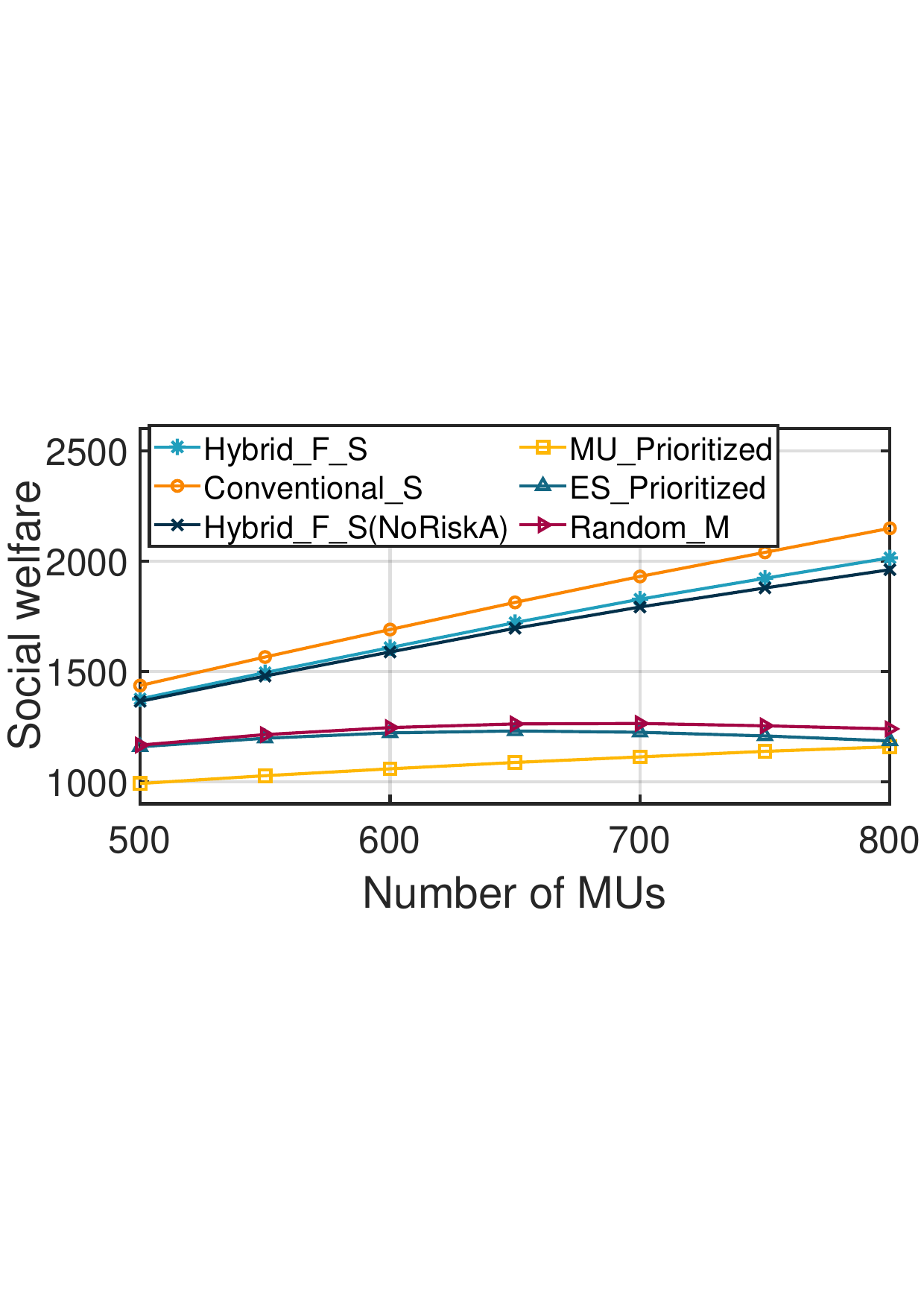} 
	}  \hspace{-5.3mm} 
	\subfigure[] { 
		\label{fig:b}   
		\includegraphics[width=0.52\columnwidth]{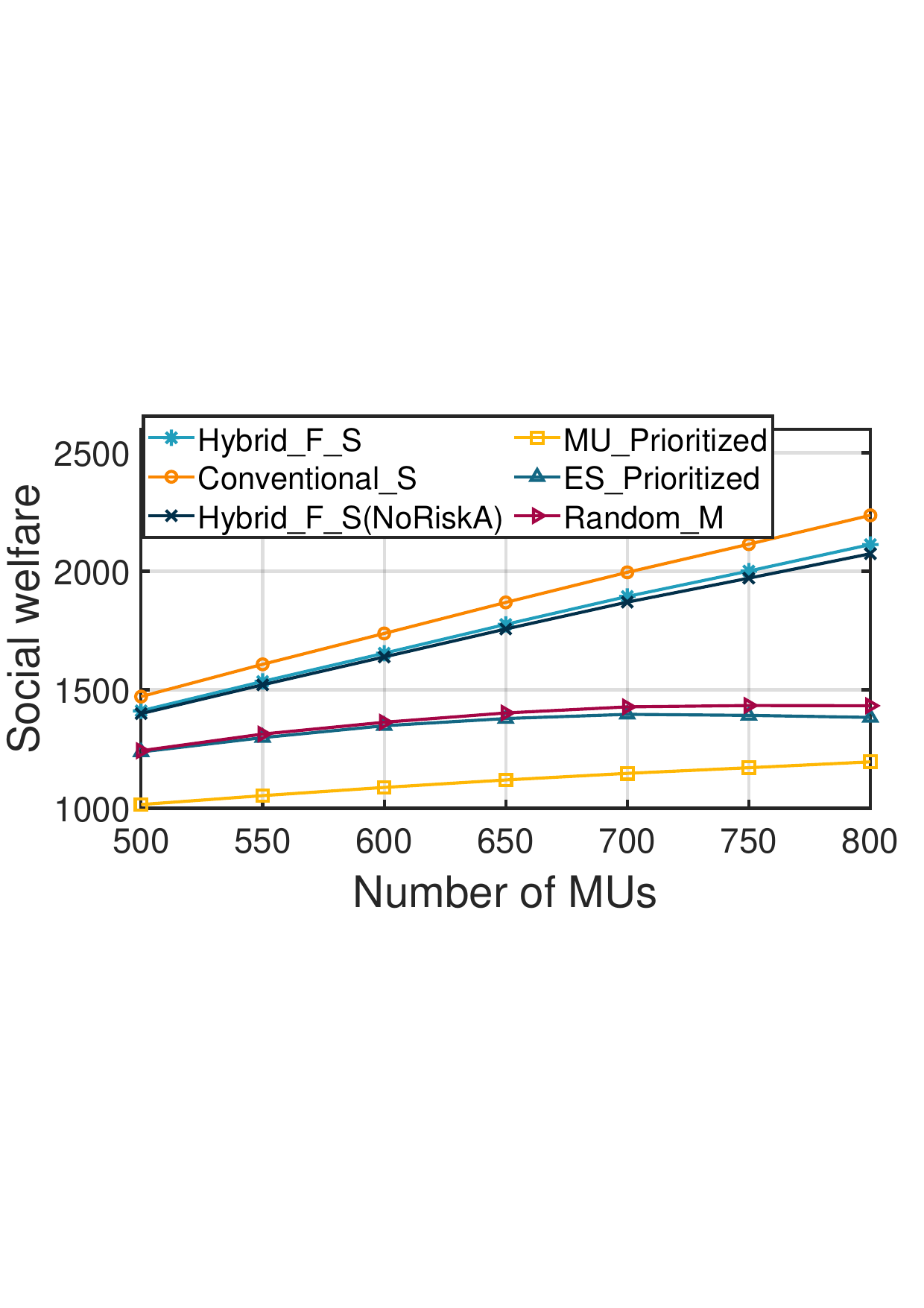}   
	}   \hspace{-5.7mm}
	\subfigure[] {
		\label{fig:c}   
		\includegraphics[width=0.52\columnwidth]{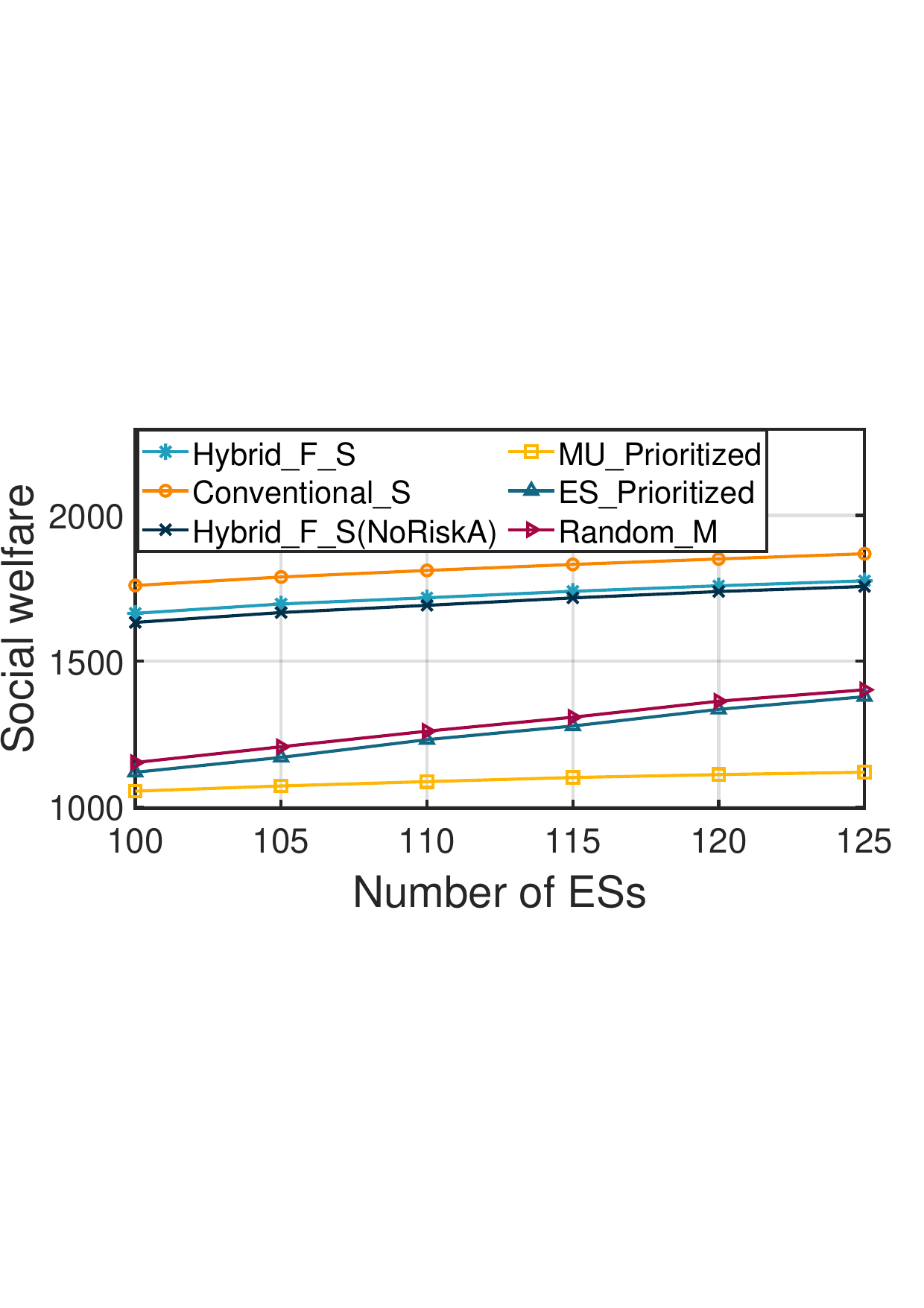} 
	}  \hspace{-6mm} 
	\subfigure[] { 
		\label{fig:d}   
		\includegraphics[width=0.52\columnwidth]{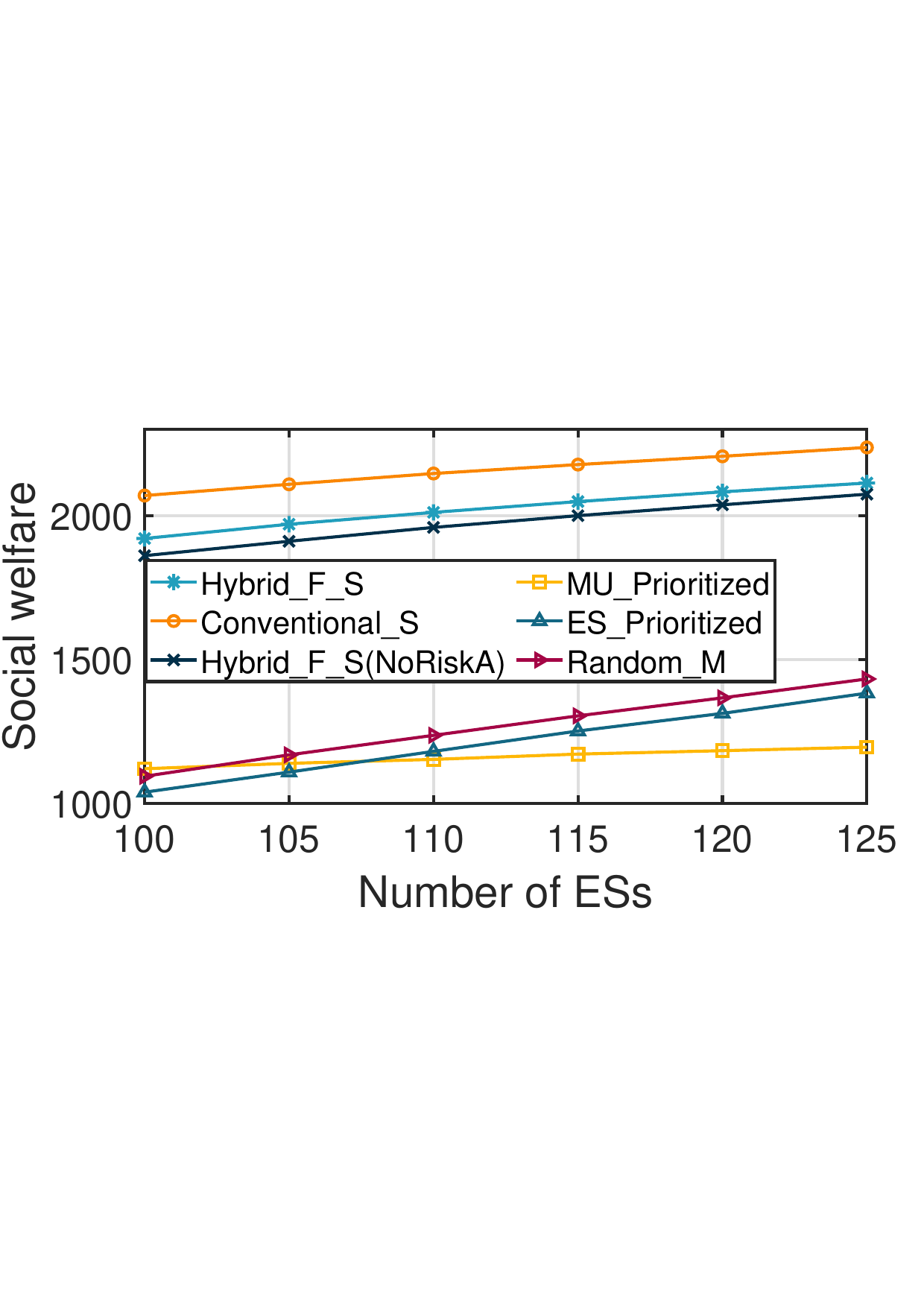}   
	}   
	\caption{Performance comparisons in terms of social welfare, where (a) 110 ESs, (b) 125 ESs, (c) 650 MUs, and (d) 800 MUs.}   
	\label{fig}  
	\vspace{-0.2 cm} 
\end{figure*}
\begin{figure*}[t!] \centering 
	\subfigtopskip=2pt
	\subfigbottomskip=10pt
	\subfigcapskip=-0.2cm
	\setlength{\abovecaptionskip}{-0.3cm}
	\subfigure[] {
		\label{fig:a}   
		\includegraphics[width=0.56\columnwidth]{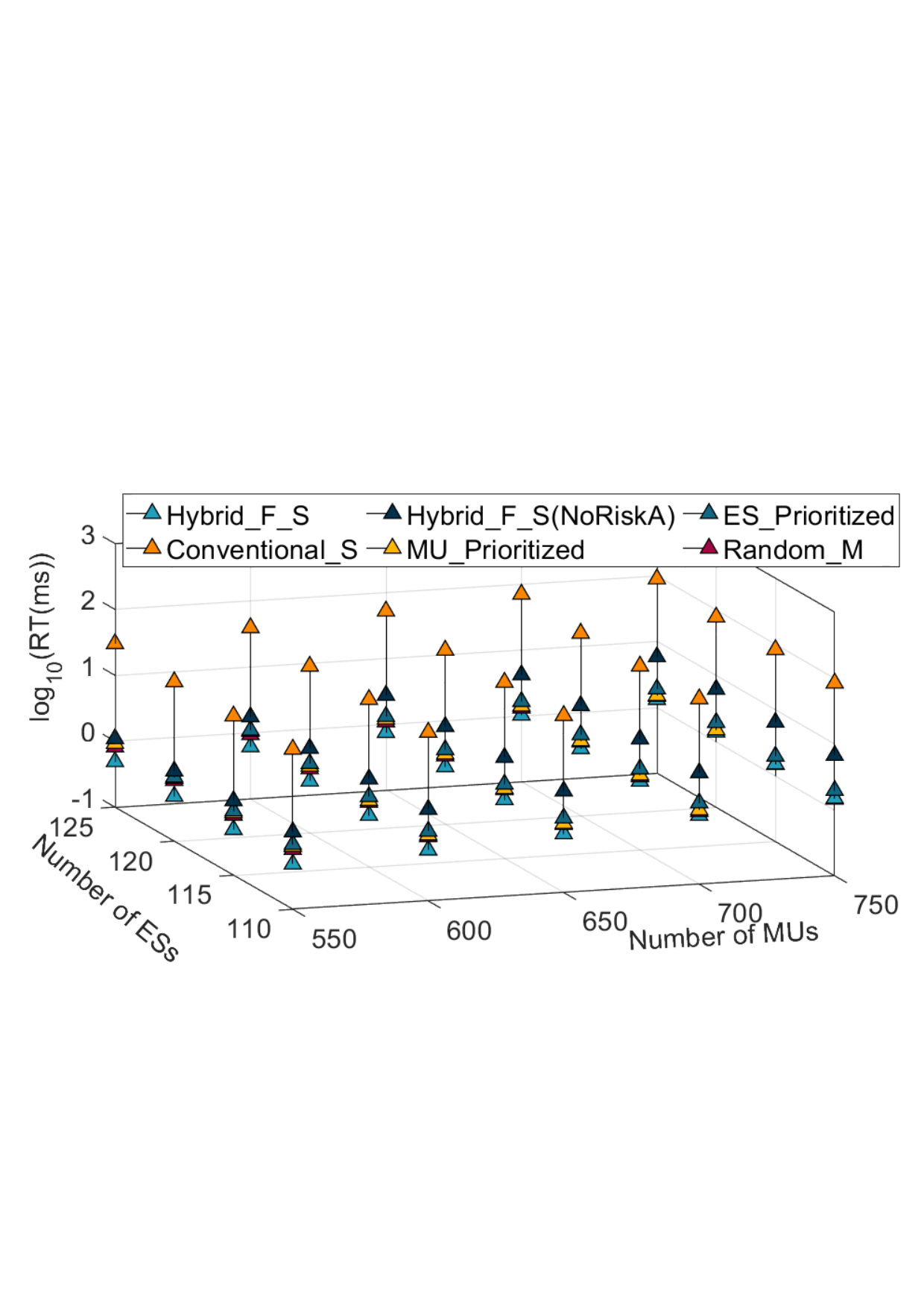} 
	}  \hspace{-4mm} 
	\subfigure[] { 
		\label{fig:b}   
		\includegraphics[width=0.5\columnwidth]{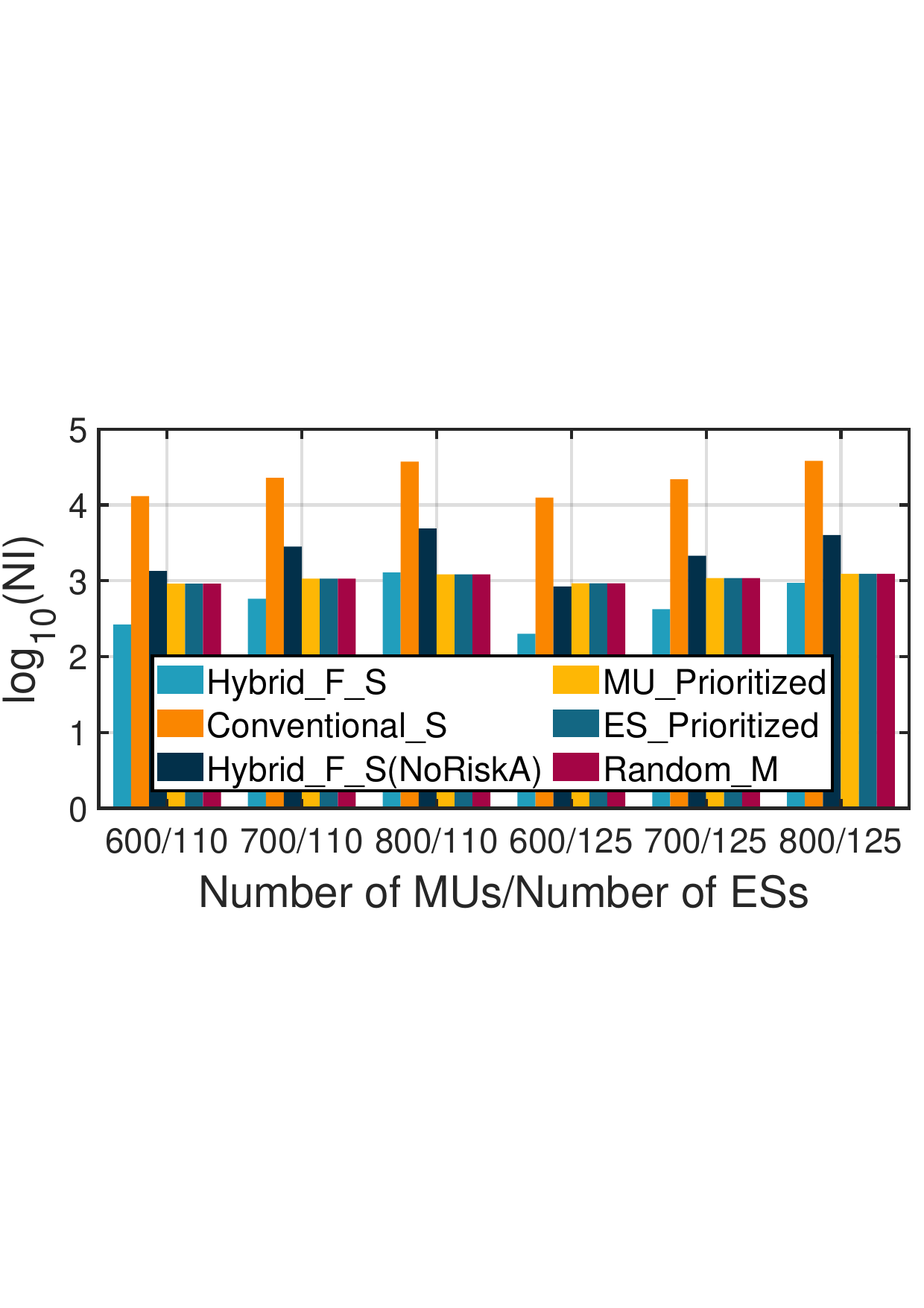}   
	}  \hspace{-4mm} 
	\subfigure[] {
		\label{fig:c}   
		\includegraphics[width=0.48\columnwidth]{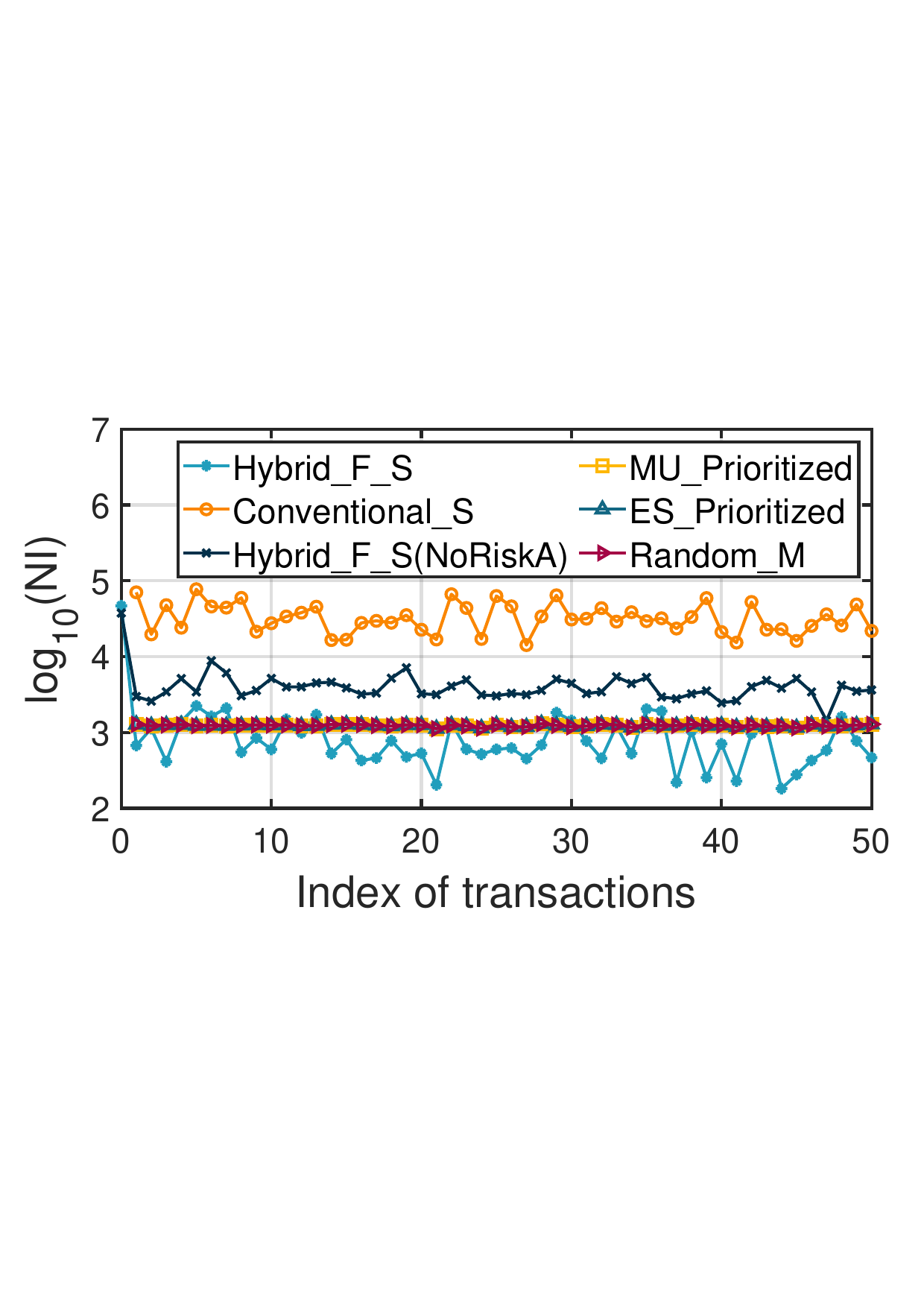} 
	}  \hspace{-4mm} 
	\subfigure[] { 
		\label{fig:d}   
		\includegraphics[width=0.5\columnwidth]{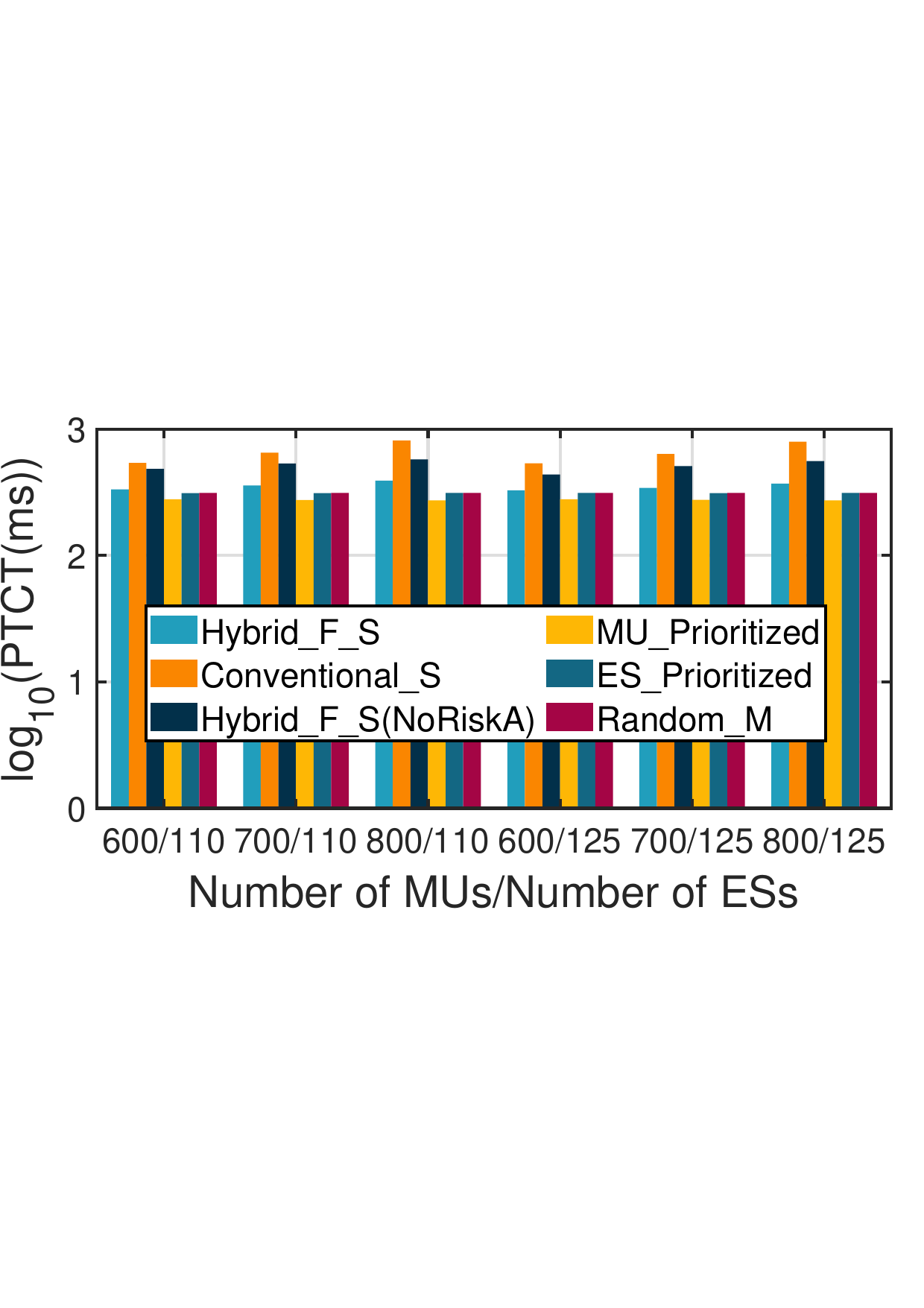}   
	}   
	\caption{Performance comparisons in terms of running time, the number of interactions and practical task completion time.}   
	\label{fig}  
	\vspace{-0.2 cm}
\end{figure*} 

As social welfare represents a crucial factor in evaluating the mutually beneficial performance of our proposed cross-layer matching game to all the three parties, we conduct Fig. 2 upon having various market scales.

Fig. 2(a) and Fig. 2(b) involve 110 and 125 ESs, respectively, to estimate the resource supply in the considered market. In Fig. 2(a), the curves of our proposed Hybrid\_F\_S, as well as benchmark methods Conventional\_S and Hybrid\_F\_S(NoRiskA) show a rising trend since the existence of more MUs implies an increase in resource demand and thus the utility of both MUs and ESs. The proposed Hybrid\_F\_S outperforms Hybrid\_F\_S(NoRiskA) thanks to the well-designed careful risk analysis, thus supporting the individual rationality of all the three parties. Conventional\_S achieves the best performance in terms of social welfare due to its analysis of the current network/market conditions during each practical transaction, which, in turn, suffers from excessive overhead (see Fig. 3). Such a drawback makes it inapplicable in real-world dynamic networks particularly when handling moving and battery-constrained users as well as delay-sensitive mobile tasks. Moreover, our proposed Hybrid\_F\_S greatly outperforms the other three benchmark methods, since their basic matching logic can bring disadvantages. For example, MU\_Prioritized and ES\_Prioritized methods only care about the utility of one of the three parties, which thus sacrificing the profits of others. Also, Random\_M method is risking the uncertainty brought by its randomness, and thus causes unsatisfying social welfare. Note that in Fig. 2(a), the curves of ES\_Prioritized and Random\_M methods slightly go down after 700 MUs, the reason behind which is that each ES is given more options on MU selection, but constrained by limited resources. For example, when the number of MUs reaches a certain level, ESs may have to pay more penalties to volunteers. Fig. 2(b) illustrates a similar performance to Fig. 2(a), while thanks to more sufficient resource supply, the curves rise with an increasing number of MUs, although ES\_Prioritized still takes the risk that the social welfare may fall down. 

 Fig. 2(c) and Fig. 2(d) consider 650 and 800 MUs to reflect the growing resource demand in our designed hybrid market, upon having various numbers of ESs (namely, resource supply). The social welfare raises with a growing number of ESs owing to the existence of a bigger resource pool. Also, as can be seen from Figs. 2(c)-(d), our proposed Hybrid\_F\_S reaches far better performance than MU\_Prioritized, ES\_Prioritized, Random\_M, and Hybrid\_F\_S(NoRiskA), mainly because of the well-designed cross-layer matching mechanism, achieving risk-aware and mutually beneficial utilities for all the parties. Although in these two figures, the value of social welfare of Hybrid\_F\_S stays below that of Conventional\_S, Conventional\_S is undergoing unacceptable decision-making overhead. Additionally, the detailed analysis of individual utilities of MUs, ESs, and CSs are provided in Appendix F, due to space limitation.

Time/energy efficiency plays one significant role in evaluating the performance of a resource trading market over CAMENs. To achieve better assessments, we consider three indicators in Fig. 3, reflecting the overhead incurred by obtaining matching decisions, which are RT (see Fig. 3(a)), NI (see Figs. 3(b)-(c)), and PTCT (see Fig. 3(d)). Note that the logarithmic representation is utilized for the y-axis of each figure in Fig. 3 to visually enlarge the gap among different methods. 

Fig. 3(a) illustrates the running time performance under different market scales. It is obvious that the value of RT of Conventional\_S stays dramatically high in comparison with other methods since a large amount of time should be consumed for looking for matching results during each transaction, this case appears to be more severe when confronting increasing resource demands, e.g., a raising number of ESs and MUs. Our proposed Hybrid\_F\_S achieves far lower RT than Conventional\_S due to that many of the ESs and MUs will not engage in spot trading thanks to resource overbooking and pre-signed forward contracts, which thus no longer have to spend time/energy in negotiating matching results in practical transactions. Moreover, the well-designed risk analysis mechanism makes Hybrid\_F\_S reach a better performance on RT than Hybrid\_F\_S(NoRiskA), by ensuring the fulfillment of most contracts. Namely, the number of participants of Hybrid\_F\_S(NoRiskA) that join in spot trading market can generally be larger than that of Hybrid\_F\_S. Although the other three methods (MU\_Prioritized, ES\_Prioritized, and Random\_M) achieve similar RT with Hybrid\_F\_S owing to no bargain of service price among different parties, they suffer from unsatisfying performance on social welfare (see Fig. 2). 

We are then interested in capturing the overhead (e.g., time and energy cost) caused by matching decision-making, for which Figs. 3(b)-(c) are conducted. Apparently, a large value of NI can reveal a heavy overhead for obtaining matching decisions, e.g., excessive time and energy can be consumed during the interaction/communication process among ESs and MUs. Fig. 3(b) shows an overall performance on NI upon considering diverse resource demand/supply settings (e.g., various numbers of ESs and MUs), where our proposed Hybrid\_F\_S shows better values than other methods. First, the NI of Hybrid\_F\_S falls far below that of Conventional\_S since its participants should bargain for the amount of trading resources and service prices in each transaction, which definitely imposes time and energy overhead. Fortunately, Hybrid\_F\_S encourages many participants to take part in futures trading first, and thus greatly improves time/energy efficiency. Then, as risk analysis presents an important part in futures trading, our Hybrid\_F\_S works better than Hybrid\_F\_S(NoRiskA) thanks to our design on risk management. In addition, Hybrid\_F\_S slightly outperforms MU\_Prioritized, ES\_Prioritized, and Random\_M on NI although they do not care about bargains among participants, simple interactions such as ESs should inform each MU about its willingness to offer services, as well as MUs should report their intention, which also consumes certain overhead, especially when the market scale becomes large. To detail the gaps of NI in every transaction, Fig. 3(c) depicts the performance on NI in 50 specific transactions, upon having 800 MUs and 125 ESs as a general example. As can be seen from Fig. 3(c), our Hybrid\_F\_S always outperforms Conventional\_S and Hybrid\_F\_S(NoRiskA), while achieving lower NI than MU\_Prioritized, ES\_Prioritized, and Random\_M in most transactions.

Generally, the processing of a task can only start after the final decision of resource trading, e.g., a MU can offload its task data only after receiving the notification of a certain ES. 
To this end, our simulation takes a fresh look at the time efficiency performance of the market, by involving the decision-making overhead in to task completion time, which distinguishes it from existing literature. Specifically, the value of PTCT is a summation of \textit{i)} the delay on bargain, which can be estimated by the E2E delay between MUs and ESs, as well as their interactions \cite{9763875}, \textit{ii)} the theoretical task completion time according to Sec. 3.1.1. We can see from Fig. 3(d), our Hybrid\_F\_S obtains superior performance on PTCT than other methods, benefit from overbooking, risk analysis, and the hybrid mechanism of trading. In summary, Hybrid\_F\_S achieves commendable social welfare performance, while outperforming other methods on significant evaluation indicators such as RT, NI, and PTCT, offering a good reference for improving time/energy efficiency.

\subsubsection{Property Analysis}
\begin{figure} \centering 
	\vspace{-1.9999cm}
	\subfigtopskip=2pt
	\subfigbottomskip=10pt
	\subfigcapskip=-2.0cm
	\setlength{\abovecaptionskip}{-1.6cm}
	\subfigure[] {
		\label{fig:a}   
		\includegraphics[width=0.50\columnwidth]{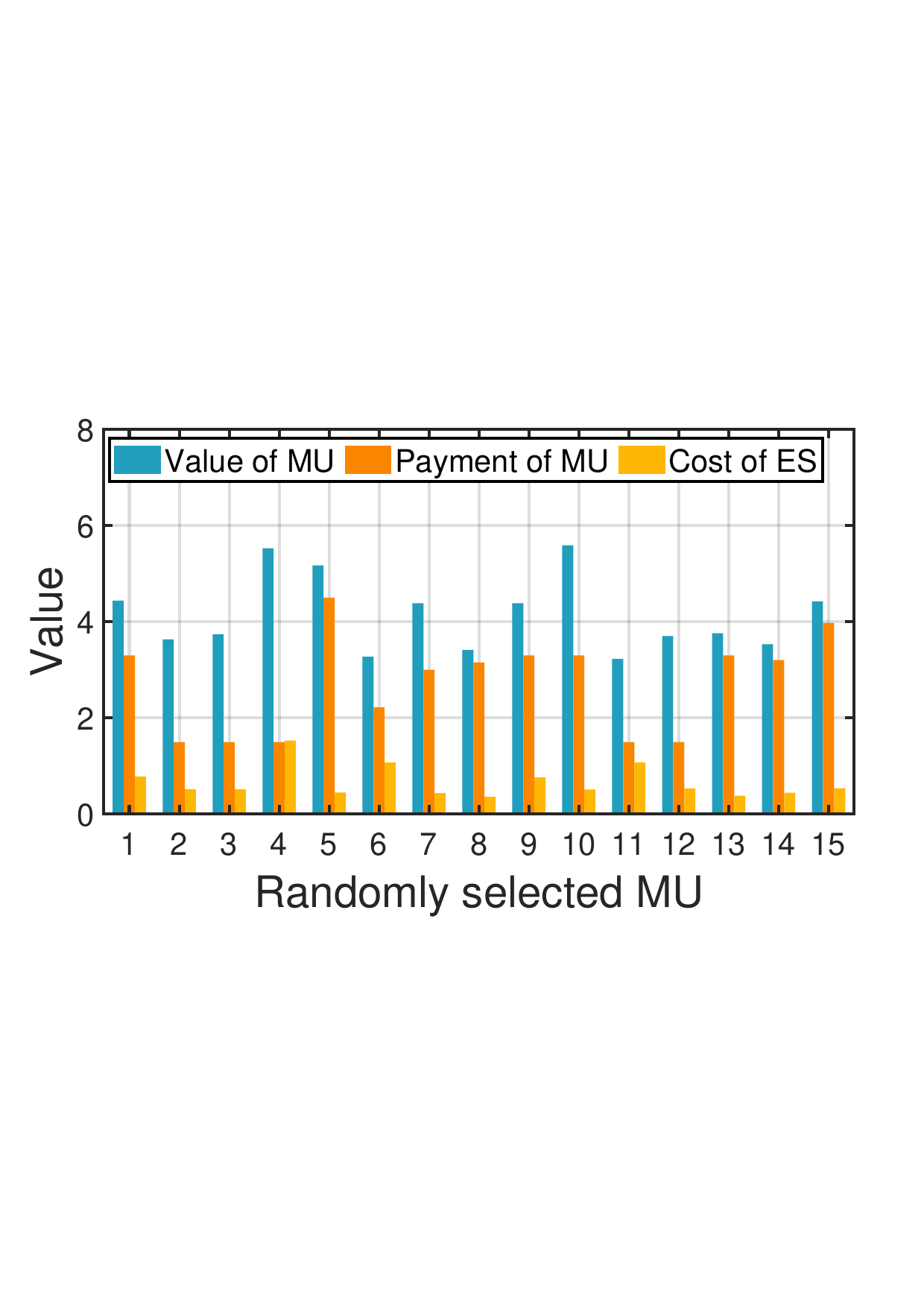} 
	}  \hspace{-6mm}
	\subfigure[] { 
		\label{fig:b}   
		\includegraphics[width=0.50\columnwidth]{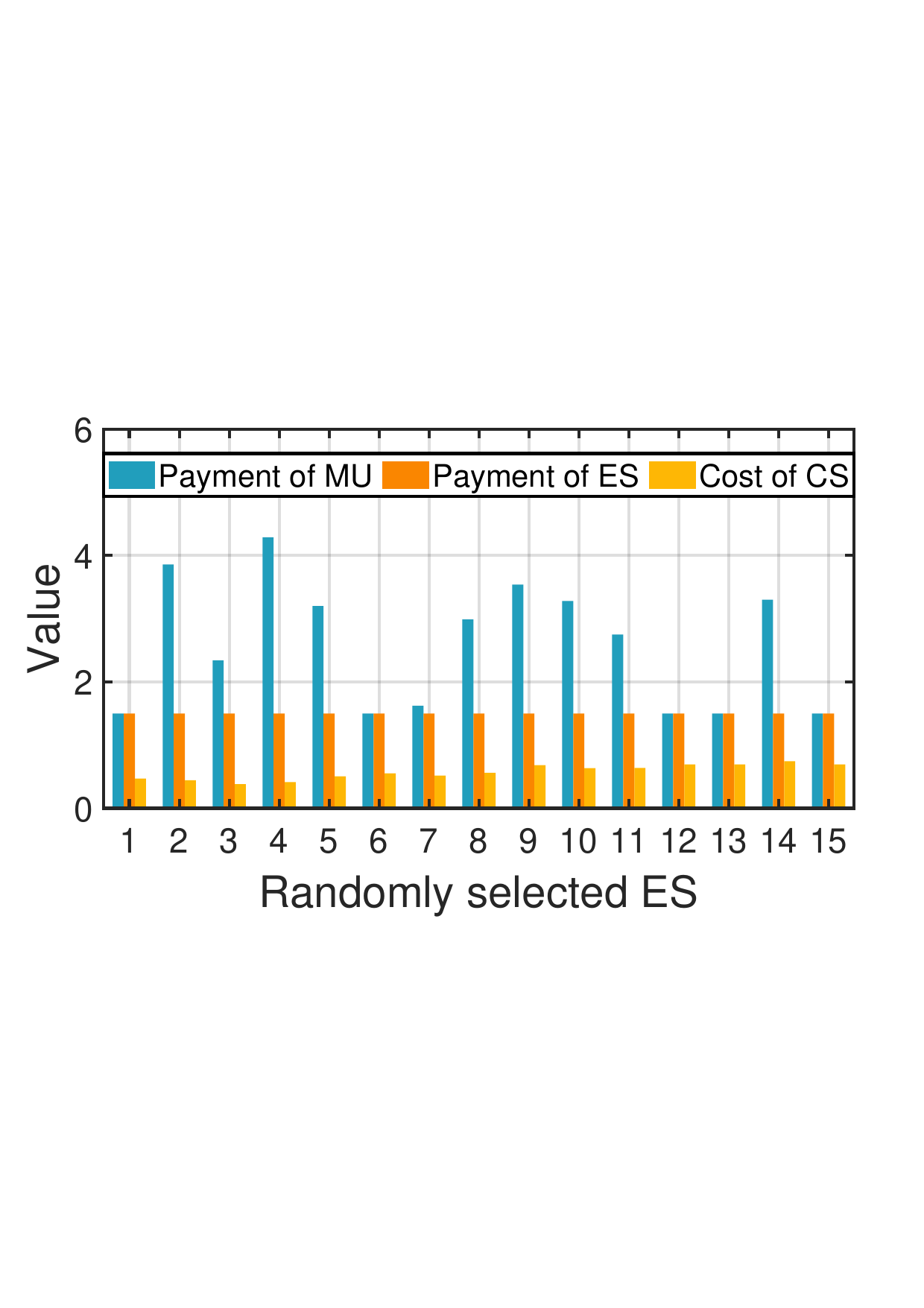}   
	}     
	\caption{Individual rationality in terms of utilities.}   
	\label{fig}  
	\vspace{-0.2 cm}
\end{figure} 
\begin{table}[]
	{\small
		\caption{Individual rationality in terms of risks, in which the first, fourth, and seventh columns denote the index of the selected MUs, ES, and CSs, respectively.}\vspace{-0.5cm}
		\begin{center}
			\setlength{\tabcolsep}{1.2mm}{
				\begin{tabular}{|cccccccc|}
					\hline
					MU & $ R_1^U $& $ R_2^U $& ES &$ R_1^E $ &$ R_2^E $ & CS & $ R^C $\\ \hline
					65&0.0541&$ 1.53\times10^{-5} $&12&0.1432&0.1830&1&0.1428 \\ \hline
					122 &0.0628&$ 3.02\times 10^{-3} $&79&0.2523&0.1789&5&0.0527\\ \hline
					403 &0.1232&$ 3.43\times 10^{-4} $&99&0&0.2104&9&0.1847\\ \hline
					634 &0.2464&$ 2.53\times 10^{-4} $&116&0.02520&0.2586&11&0.2067\\ \hline
			\end{tabular}}
	\end{center}}
\vspace{-0.5cm}
\end{table}
Since our proposed cross-layer matching game designs the property of individual rationality from a rather different view as comparing conventional one, this section conducts Fig. 4 and Table 2 to verify individual rationality from the following aspects: \textit{i)} valuations, payments, and costs; and \textit{ii)} risk analysis. Specifically, we randomly select 15 MUs (among 800 ones) and show their valuations (calculated by equation (3) in Sec. 3.1), payments, as well as the service costs of the corresponding ESs in Fig. 4(a). Apparently, Fig. 4(a) shows that the payments of MUs never exceed their valuations, and while the payment received by ESs will definitely cover their costs, verifying that our proposed Hybrid\_F\_S guarantees the individual rationality of MUs and ESs. Fig. 4(b) considers 15 randomly selected ESs out of 125 ones, and shows their payment from MUs, paid payments to CSs, and service costs of the corresponding CSs. As can be seen from this figure, the payment obtained by ESs from MUs will not exceed their payments to CSs. Meanwhile, the service cost of CSs can be covered by their asked payments, which further verifies that ESs and CSs in our proposed Hybrid\_F\_S are individually rational. 

To describe the risks that the three parties may confront, we randomly consider 4 MUs (among 800 ones), 4 ESs (among 125 ones), and 4 CSs (among 12 ones) in Table 2. This table shows that the risks of different participants are always controlled within their acceptable ranges (e.g., $ R_1^U $, $ R_2^U $, $ R_1^E $, $ R_2^E $, and $ R^C$ will stay below $ 0.3$), proving that our Hybrid\_F\_S greatly supports individual rationality property from the perspective of risk management.
\subsubsection{Impact brought by overbooking}
We next illustrate the advantages brought by overbooking in coping with uncertainties and dynamics in CAMENs. We evaluate the performances in terms of RT and NI under different overbooking rates (i.e., $\tau$) in Fig. 5, involving 800 MUs, 125 ESs, and 12 CSs. Notably, since Conventional\_S, MU\_Prioritized, ES\_Prioritized, and Random\_M are implemented according to the spot trading mode, different overbooking rates leave no impact on these methods (thus explaining why they have rather stable curves).

\begin{figure} \centering  
	\vspace{-1.9999cm}
	\subfigtopskip=2pt
	\subfigbottomskip=10pt
	\subfigcapskip=-2.0cm
	\setlength{\abovecaptionskip}{-1.6cm}
	\subfigure[] {
		\label{fig:a}   
		\includegraphics[width=0.50\columnwidth]{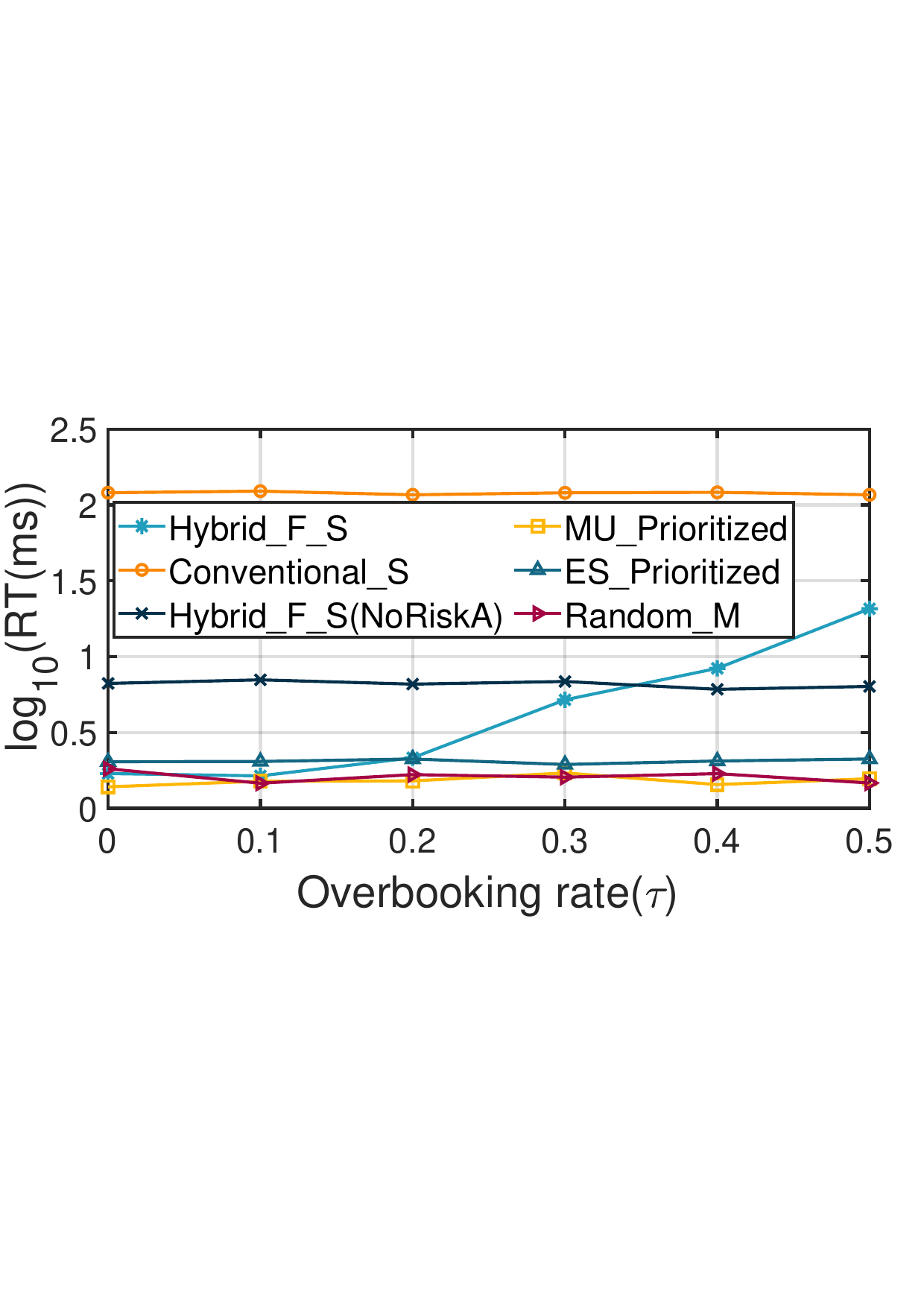} 
	}  \hspace{-5.8mm}
	\subfigure[] { 
		\label{fig:b}   
		\includegraphics[width=0.50\columnwidth]{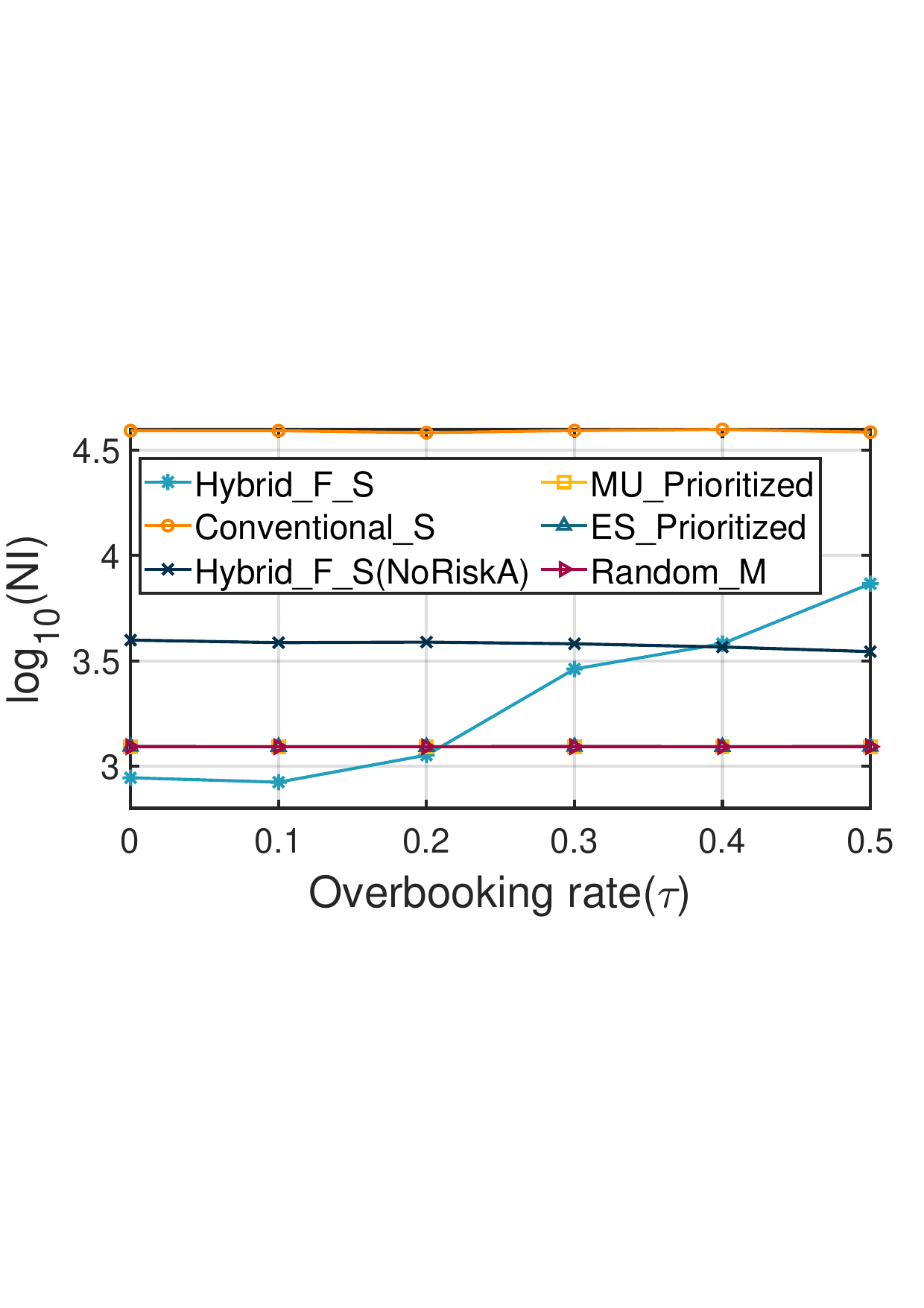}   
	}     
	\caption{Performance comparisons in terms of running time and number of interaction under different overbooking rates ($\tau$).}   
	\label{fig}  
	\vspace{-0.2 cm}
\end{figure} 

Fig. 5(a) evaluates the performance on RT upon raising the value of overbooking rate from 0 (e.g., the market does not allow resource overbooking) to 0.5. Our Hybrid \_F\_S greatly outperforms Conventional\_S on RT (similar results can be found in previous Fig. 3); while although MU\_Prioritized, ES\_Prioritized, and Random\_M can reach a low value of RT, they are undergoing poor social welfare, as depicted in Fig. 2. Note that RT of Hybrid\_F\_S decreases from $\tau=0$ to $\tau=0.1$, since MUs are given more chances to trade with ESs in the designed futures market, which further reducing the number of participants in spot trading market during practical transactions, and thus the time spent on matching decision-making. 
However, it is obvious that RT of Hybrid\_F\_S raises with an increasing overbooking rate after $\tau=0.1$, where the key reasoning behind which is that a larger amount of overbooked resources enables each ES to sign contracts with more MUs in the futures market, further leading to a large burden on risk analysis and volunteer selection during practical transactions, as well as an increases RT. Although Hybrid\_F\_S(NoRiskA) facilitates a lower RT after $\tau=0.35$ (roughly) rather than that of our approach, it sustains from the case where a large number of participants will break the pre-signed contracts due to the lack of risk management. For example, excessive penalties can be incurred among different parties. 

Fig. 5(b) illustrates the performance on NI by applying different overbooking rates, where the value of NI of Hybrid\_F\_S slightly falls down from $\tau=0$ to $\tau=0.1$, since more MUs can take part in the futures market and no longer have to be engaged in onsite decision-making. Similar to Fig. 5(a), the value of NI of Hybrid\_F\_S begins to rise after $\tau=0.1$ in Fig. 5(b) due to that more workload on risk analysis should be considered during practical transactions. Even MU\_Prioritized, ES\_Prioritized, and Random\_M methods can get better performance on NI than Hybrid\_F\_S when increasing $\tau$, they suffer from unexpected social welfare. Moreover, the curve of Hybrid\_F\_S(NoRiskA) decreases as $\tau$ grows since without risk analysis, and more MUs will join in the futures market rather than spot market. However, Hybrid\_F\_S(NoRiskA) confronts undesired performance on PTCT and social welfare, as illustrate in Fig. 3. 

All in all, our proposed Hybrid cross-layer matching game can achieve commendable performance in terms of social welfare, while outperforming benchmark methods on decision-making overhead.

\subsection{Simulations on Real-World Datasets}
To facilitate better evaluations, we consider real-world EUA Dataset \cite{lai2018optimal}, regarding the Melbourne central business district area as our simulated region (shown in Fig. 6, Appendix E). Table 3 also verifies that our proposed Hybrid\_F\_S can achieve commendable performance in terms of social welfare, and outperforms baseline methods in terms of RT, NI, and PTCT.
\begin{table}[h!]
	{\small \vspace{-0.4cm}
		\caption{Performance Evaluations in EUA Datasets (Alg 1: Hybrid\_F\_S, Alg 2: Conventional\_S, Alg 3: Hybrid\_F\_S(NoRiskA), Alg 4: MU\_Prioritized, Alg 5: ES\_Prioritized, Alg 6: Random\_M)}
		\begin{center}
			\setlength{\tabcolsep}{0.5mm}{\vspace{-0.2cm}
				\begin{tabular}{|ccccccc|}
					\hline
					\textbf{Performance} & \textbf{Alg 1}& \textbf{Alg 2}&\textbf{Alg 3}&\textbf{Alg 4} & \textbf{Alg 5} & \textbf{Alg 6}\\ \hline
					Social welfare &2037.84&2172.81&2004.93&1188.37&1344.06&1392.38
					\\ \hline
					RT (ms) &1.8&90.1&7.0 &1.6&2.1&1.5\\ \hline
					NI &1031.36&27143.51&3823.26&1239.08&1239.08&1239.08
					\\ \hline
					PTCT (ms) &379.26&662.41&543.84&271.67&312.55&312.64	\\ \hline
			\end{tabular}}
	\end{center}}
\end{table}

\vspace{-0.3cm}
\section{Conclusion}
\noindent
This paper delves into an interesting cross-layer matching game within dynamic CAMENs, focusing on a hybrid computing resource trading market that encompasses futures and spot trading. Initially, we developed an OA-CLM game tailored for futures trading, fostering pre-signed forward contracts among multiple MUs, ESs, and CSs by meticulously analyzing historical statistics linked to uncertainties.
Given concerns regarding fluctuations in dynamic resource demand and supply, we introduced an OS-CLM mechanism for spot trading as a robust contingency plan. Theoretical exploration demonstrates that these matching mechanisms can uphold essential properties including individual rationality, strong stability, competitive equilibrium, and weak Pareto optimality.
Extensive simulations underscore the commendable performance of our methodology, showcasing superior results in running time, minimizing interactions among participants, and enhancing social welfare compared to benchmark methods, under both numerical and real-world datasets. We are also interested in exploring potential future directions, such as the design of smart contracts for resource trading and potential collaborations among ESs.

\begin{spacing}{0.96}
\bibliographystyle{IEEEtran}
\bibliography{Reference}

\begin{thebibliography}{10}
\providecommand{\url}[1]{#1}
\csname url@samestyle\endcsname
\providecommand{\newblock}{\relax}
\providecommand{\bibinfo}[2]{#2}
\providecommand{\BIBentrySTDinterwordspacing}{\spaceskip=0pt\relax}
\providecommand{\BIBentryALTinterwordstretchfactor}{4}
\providecommand{\BIBentryALTinterwordspacing}{\spaceskip=\fontdimen2\font plus
\BIBentryALTinterwordstretchfactor\fontdimen3\font minus
  \fontdimen4\font\relax}
\providecommand{\BIBforeignlanguage}[2]{{%
\expandafter\ifx\csname l@#1\endcsname\relax
\typeout{** WARNING: IEEEtran.bst: No hyphenation pattern has been}%
\typeout{** loaded for the language `#1'. Using the pattern for}%
\typeout{** the default language instead.}%
\else
\language=\csname l@#1\endcsname
\fi
#2}}
\providecommand{\BIBdecl}{\relax}
\BIBdecl

\bibitem{survey1}
S.~Duan, D.~Wang, J.~Ren, F.~Lyu, Y.~Zhang, H.~Wu, and X.~Shen, ``Distributed
  artificial intelligence empowered by end-edge-cloud computing: A survey,''
  \emph{IEEE Commun. Surveys Tut.}, vol.~25, no.~1, pp. 591--624, 2023.

\bibitem{CC1}
X.~Li, Z.~Xu, F.~Fang, Q.~Fan, X.~Wang, and V.~C.~M. Leung, ``Task offloading
  for deep learning empowered automatic speech analysis in mobile edge-cloud
  computing networks,'' \emph{IEEE Trans. Cloud Comput.}, vol.~11, no.~2, pp.
  1985--1998, 2023.

\bibitem{CC2}
A.~E. Eshratifar, M.~S. Abrishami, and M.~Pedram, ``Jointdnn: An efficient
  training and inference engine for intelligent mobile cloud computing
  services,'' \emph{IEEE Trans. Mobile Comput.}, vol.~20, no.~2, pp. 565--576,
  2021.

\bibitem{CC3}
P.~Bellavista, A.~Corradi, A.~Edmonds, L.~Foschini, A.~Zanni, and T.~M.
  Bohnert, ``Elastic provisioning of stateful telco services in mobile cloud
  networking,'' \emph{IEEE Trans. Serv. Comput.}, vol.~14, no.~3, pp. 710--723,
  2021.

\bibitem{R2_2}
S.~Chen, L.~Jiao, F.~Liu, and L.~Wang, ``Edgedr: An online mechanism design for
  demand response in edge clouds,'' \emph{IEEE Trans. Parallel Distrib. Syst.},
  vol.~33, no.~2, pp. 343--358, 2022.

\bibitem{CCbad}
F.~Shirin~Abkenar, P.~Ramezani, S.~Iranmanesh, S.~Murali, D.~Chulerttiyawong,
  X.~Wan, A.~Jamalipour, and R.~Raad, ``A survey on mobility of edge computing
  networks in iot: State-of-the-art, architectures, and challenges,''
  \emph{IEEE Commun. Surveys Tut.}, vol.~24, no.~4, pp. 2329--2365, 2022.

\bibitem{MEC1}
G.~Zhu, D.~Liu, Y.~Du, C.~You, J.~Zhang, and K.~Huang, ``Toward an intelligent
  edge: Wireless communication meets machine learning,'' \emph{IEEE Commun.
  Mag.}, vol.~58, no.~1, pp. 19--25, 2020.

\bibitem{MEC2}
T.~Qiu, J.~Chi, X.~Zhou, Z.~Ning, M.~Atiquzzaman, and D.~O. Wu, ``Edge
  computing in industrial internet of things: Architecture, advances and
  challenges,'' \emph{IEEE Commun. Surveys Tut.}, vol.~22, no.~4, pp.
  2462--2488, 2020.

\bibitem{R2_3}
B.~Gao, Z.~Zhou, F.~Liu, F.~Xu, and B.~Li, ``An online framework for joint
  network selection and service placement in mobile edge computing,''
  \emph{IEEE Trans. Mobile Comput.}, vol.~21, no.~11, pp. 3836--3851, 2022.

\bibitem{R2_4}
Q.~Chen, Z.~Zheng, C.~Hu, D.~Wang, and F.~Liu, ``On-edge multi-task transfer
  learning: Model and practice with data-driven task allocation,'' \emph{IEEE
  Trans. Parallel Distrib. Syst.}, vol.~31, no.~6, pp. 1357--1371, 2020.

\bibitem{R2_5}
L.~Pan, L.~Wang, S.~Chen, and F.~Liu, ``Retention-aware container caching for
  serverless edge computing,'' in \emph{IEEE Conf. Comput. Commun.}, 2022, pp.
  1069--1078.

\bibitem{R1}
J.~Wu, L.~Wang, Q.~Pei, X.~Cui, F.~Liu, and T.~Yang, ``Hitdl: High-throughput
  deep learning inference at the hybrid mobile edge,'' \emph{IEEE Trans.
  Parallel Distrib. Syst.}, vol.~33, no.~12, pp. 4499--4514, 2022.

\bibitem{8626532}
Z.~Xu, L.~Zhou, S.~C.-K. Chau, W.~Liang, H.~Dai, L.~Chen, W.~Xu, Q.~Xia, and
  P.~Zhou, ``Near-optimal and collaborative service caching in mobile edge
  clouds,'' \emph{IEEE Trans. Mobile Comput.}, vol.~22, no.~7, pp. 4070--4085,
  2023.

\bibitem{9763875}
M.~Liwang and X.~Wang, ``Overbooking-empowered computing resource provisioning
  in cloud-aided mobile edge networks,'' \emph{IEEE/ACM Trans. Netw.}, vol.~30,
  no.~5, pp. 2289--2303, 2022.

\bibitem{9154594}
X.~Wang, J.~Wang, X.~Zhang, X.~Chen, and P.~Zhou, ``Joint task offloading and
  payment determination for mobile edge computing: A stable matching based
  approach,'' \emph{IEEE Trans. Veh. Technol.}, vol.~69, no.~10, pp.
  12\,148--12\,161, 2020.

\bibitem{R2_6}
S.~Chen, L.~Wang, and F.~Liu, ``Optimal admission control mechanism design for
  time-sensitive services in edge computing,'' in \emph{IEEE Conf. Comput.
  Commun.}, 2022, pp. 1169--1178.

\bibitem{9771321}
M.~Liwang, X.~Wang, and R.~Chen, ``Computing resource provisioning at the edge:
  An overbooking-enabled trading paradigm,'' \emph{IEEE Wireless Commun.},
  vol.~29, no.~5, pp. 68--76, 2022.

\bibitem{9453820}
M.~Liwang, Z.~Gao, and X.~Wang, ``Let's trade in the future! a futures-enabled
  fast resource trading mechanism in edge computing-assisted uav networks,''
  \emph{IEEE J. Select. Areas Commun.}, vol.~39, no.~11, pp. 3252--3270, 2021.

\bibitem{1990Two}
A.~E. Roth and M.~A.~O. Sotomayor, ``Two-sided matching: A study in
  game-theoretic modeling and analysis,'' no.~3, pp. 54--77, 1990.

\bibitem{9416305}
J.~Ren, F.~Xia, X.~Chen, J.~Liu, M.~Hou, A.~Shehzad, N.~Sultanova, and X.~Kong,
  ``Matching algorithms: Fundamentals, applications and challenges,''
  \emph{IEEE Trans. Emerg. Topics Comput.}, vol.~5, no.~3, pp. 332--350, 2021.

\bibitem{8854324}
S.~Sheng, R.~Chen, P.~Chen, X.~Wang, and L.~Wu, ``Futures-based resource
  trading and fair pricing in real-time iot networks,'' \emph{IEEE Wireless
  Commun. Lett.}, vol.~9, no.~1, pp. 125--128, 2020.

\bibitem{7476840}
Q.~Li, B.~Niu, and L.-K. Chu, ``Forward sourcing or spot trading? optimal
  commodity procurement policy with demand uncertainty risk and forecast
  update,'' \emph{IEEE Syst. J.}, vol.~11, no.~3, pp. 1526--1536, 2017.

\bibitem{9357934}
R.~Chen, X.~Wang, and X.~Liu, ``Smart futures based resource trading and
  coalition formation for real-time mobile data processing,'' \emph{IEEE Trans.
  Serv. Comput.}, vol.~15, no.~5, pp. 3047--3060, 2022.

\bibitem{MA2019192}
``Examining customer perception and behaviour through social media research –
  an empirical study of the united airlines overbooking crisis,'' \emph{Trans.
  Res. Part E: Logistics Transp. Rev.}, vol. 127, pp. 192--205, 2019.

\bibitem{haynes2020perceptions}
N.~Haynes and D.~Egan, ``The perceptions of frontline employees towards hotel
  overbooking practices: exploring ethical challenges,'' \emph{J. Revenue
  Pricing Manage.}, vol.~19, pp. 119--128, 2020.

\bibitem{9149184}
A.~Adebayo, D.~B. Rawat, and M.~Song, ``Prediction based adaptive rf spectrum
  reservation in wireless virtualization,'' in \emph{IEEE Int. Conf. Commun.
  (ICC)}, 2020, pp. 1--6.

\bibitem{6175013}
K.~Chard and K.~Bubendorfer, ``High performance resource allocation strategies
  for computational economies,'' \emph{IEEE Trans. Parallel Distrib. Syst.},
  vol.~24, no.~1, pp. 72--84, 2013.

\bibitem{9127810}
N.~Sharghivand, F.~Derakhshan, L.~Mashayekhy, and L.~Mohammadkhanli, ``An edge
  computing matching framework with guaranteed quality of service,'' \emph{IEEE
  Trans. Cloud Comput.}, vol.~10, no.~3, pp. 1557--1570, 2022.

\bibitem{9851813}
H.~Fang, Y.~Jia, Y.~Wang, Y.~Zhao, Y.~Gao, and X.~Yang, ``Matching game based
  task offloading and resource allocation algorithm for satellite edge
  computing networks,'' in \emph{Int. Symp. Netw., Comput. and Commun.
  (ISNCC)}, 2022, pp. 1--5.

\bibitem{9108577}
R.~Fantacci and B.~Picano, ``A matching game with discard policy for virtual
  machines placement in hybrid cloud-edge architecture for industrial iot
  systems,'' \emph{IEEE Trans. Ind. Informat.}, vol.~16, no.~11, pp.
  7046--7055, 2020.

\bibitem{9200548}
N.~Raveendran, H.~Zhang, L.~Song, L.-C. Wang, C.~S. Hong, and Z.~Han, ``Pricing
  and resource allocation optimization for iot fog computing and nfv: An epec
  and matching based perspective,'' \emph{IEEE Trans. Mobile Comput.}, vol.~21,
  no.~4, pp. 1349--1361, 2022.

\bibitem{8815852}
Y.~Du, J.~Li, L.~Shi, T.~Liu, F.~Shu, and Z.~Han, ``Two-tier matching game in
  small cell networks for mobile edge computing,'' \emph{IEEE Trans. Serv.
  Comput.}, vol.~15, no.~1, pp. 254--265, 2022.

\bibitem{9344666}
Q.~Tang, Z.~Fei, B.~Li, and Z.~Han, ``Computation offloading in leo satellite
  networks with hybrid cloud and edge computing,'' \emph{IEEE Internet Things
  J.}, vol.~8, no.~11, pp. 9164--9176, 2021.

\bibitem{9214500}
M.~Aazam, S.~u. Islam, S.~T. Lone, and A.~Abbas, ``Cloud of things (cot):
  Cloud-fog-iot task offloading for sustainable internet of things,''
  \emph{IEEE Trans. Sustain. Comput.}, vol.~7, no.~1, pp. 87--98, 2022.

\bibitem{9195499}
L.~Li and H.~Zhang, ``Delay optimization strategy for service cache and task
  offloading in three-tier architecture mobile edge computing system,''
  \emph{IEEE Access}, vol.~8, pp. 170\,211--170\,224, 2020.

\bibitem{9838921}
R.~Zhang and C.~Zhou, ``A computation task offloading scheme based on
  mobile-cloud and edge computing for wbans,'' in \emph{IEEE Int. Conf.
  Commun.(ICC)}, 2022, pp. 4504--4509.

\bibitem{9616429}
S.~Tian, C.~Chang, S.~Long, S.~Oh, Z.~Li, and J.~Long, ``User preference-based
  hierarchical offloading for collaborative cloud-edge computing,'' \emph{IEEE
  Trans. Serv. Comput.}, vol.~16, no.~1, pp. 684--697, 2023.

\bibitem{8289317}
G.~Premsankar, M.~Di~Francesco, and T.~Taleb, ``Edge computing for the internet
  of things: A case study,'' \emph{IEEE Internet Things J.}, vol.~5, no.~2, pp.
  1275--1284, 2018.

\bibitem{8016573}
Y.~Mao, C.~You, J.~Zhang, K.~Huang, and K.~B. Letaief, ``A survey on mobile
  edge computing: The communication perspective,'' \emph{IEEE Commun. Surveys
  Tut.}, vol.~19, no.~4, pp. 2322--2358, 2017.

\bibitem{9667258}
Z.~Gao, W.~Hao, and S.~Yang, ``Joint offloading and resource allocation for
  multi-user multi-edge collaborative computing system,'' \emph{IEEE Trans.
  Veh. Technol.}, vol.~71, no.~3, pp. 3383--3388, 2022.

\bibitem{9687261}
H.~Lee, H.~Lee, S.~Jung, and J.~Kim, ``Stable marriage matching for
  traffic-aware space-air-ground integrated networks: A gale-shapley
  algorithmic approach,'' in \emph{Int. Conf. Inform. Netw. (ICOIN)}, 2022, pp.
  474--477.

\bibitem{lai2018optimal}
P.~Lai, Q.~He, M.~Abdelrazek, F.~Chen, J.~Hosking, J.~Grundy, and Y.~Yang,
  ``Optimal edge user allocation in edge computing with variable sized vector
  bin packing,'' in \emph{Int. Conf. Service-Oriented Comput. (ICSOC)}.\hskip
  1em plus 0.5em minus 0.4em\relax Springer, 2018, pp. 230--245.

\bibitem{10321730}
H.~Qi, M.~Liwang, S.~Hosseinalipour, X.~Xia, Z.~Cheng, X.~Wang, and Z.~Jiao,
  ``Matching-based hybrid service trading for task assignment over dynamic
  mobile crowdsensing networks,'' \emph{IEEE Trans. Serv. Comput.}, pp. 1--14,
  2023.

\bibitem{9682584}
M.~Liwang, R.~Chen, X.~Wang, and X.~Shen, ``Unifying futures and spot market:
  Overbooking-enabled resource trading in mobile edge networks,'' \emph{IEEE
  Trans. Wireless Commun.}, vol.~21, no.~7, pp. 5467--5485, 2022.

\bibitem{9127160}
S.~Luo, X.~Chen, Q.~Wu, Z.~Zhou, and S.~Yu, ``Hfel: Joint edge association and
  resource allocation for cost-efficient hierarchical federated edge
  learning,'' \emph{IEEE Trans. Wireless Commun.}, vol.~19, no.~10, pp.
  6535--6548, 2020.

\bibitem{9301243}
S.~Hosseinalipour, A.~Rahmati, D.~Y. Eun, and H.~Dai, ``Energy-aware stochastic
  uav-assisted surveillance,'' \emph{IEEE Trans. Wireless Commun.}, vol.~20,
  no.~5, pp. 2820--2837, 2021.

\end{thebibliography}
\end{spacing}

	\vspace{-1.0cm}
\begin{IEEEbiography}[{\includegraphics[width=1in,height=1.25in,clip,keepaspectratio]{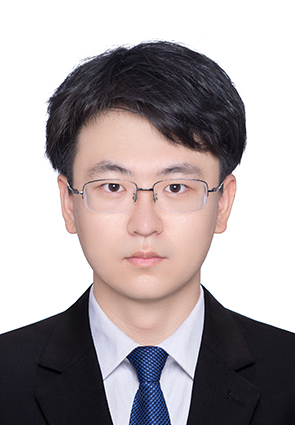}}]{Houyi Qi} received his B.S. degree in electronic information engineering from Zhengzhou University, China, in 2021. He is currently working toward the M.S. degree in School of Informatics, Xiamen University, China. His research interests include mobile crowdsensing networks, matching theory and cloud/edge/service computing.
\end{IEEEbiography}

\vspace{-0.79cm}
\begin{IEEEbiography}[{\includegraphics[width=1in,height=1.25in,clip,keepaspectratio]{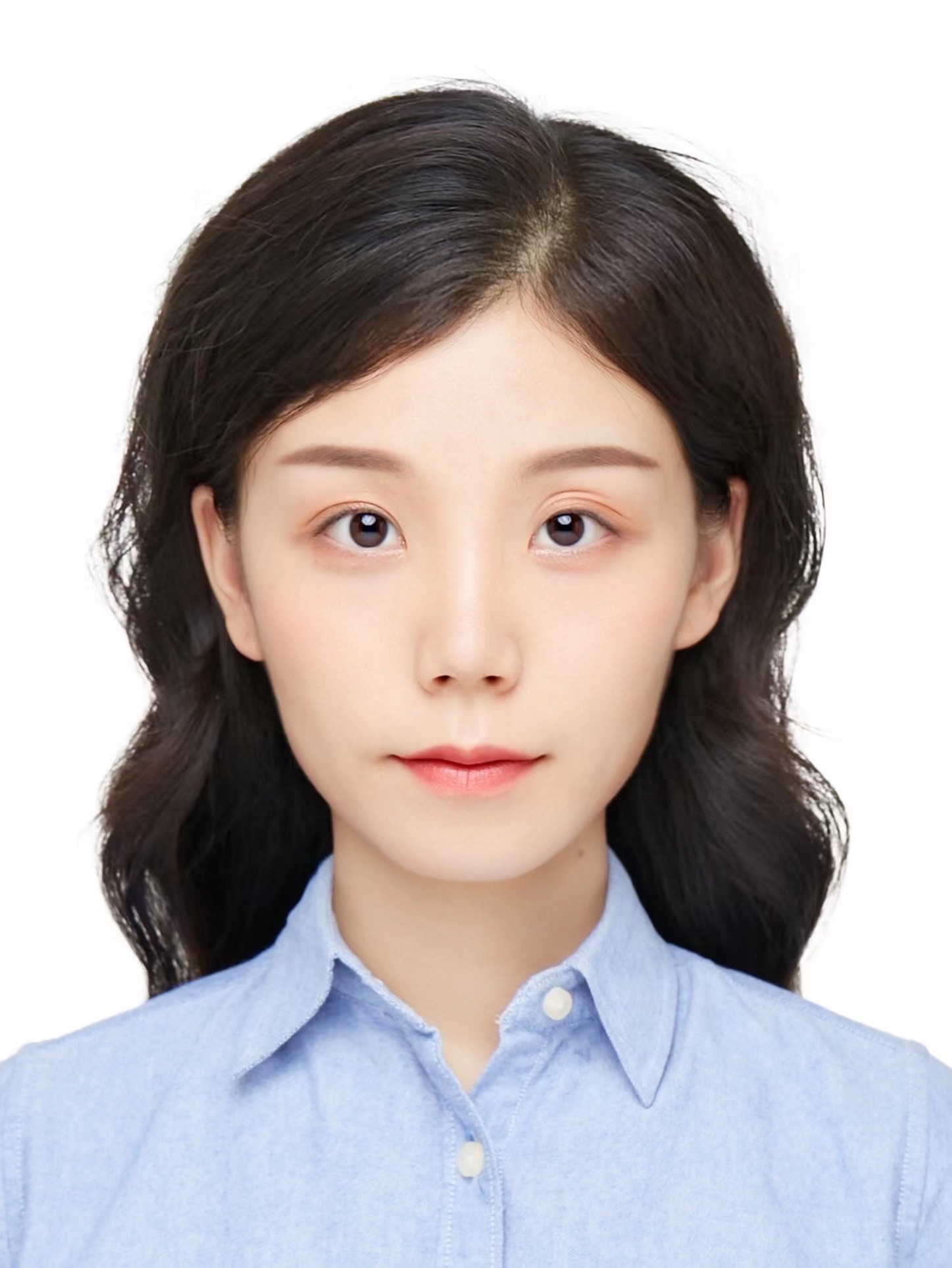}}]{Minghui Liwang (Member, IEEE)} received her Ph. D. degree in School of Informatics, Xiamen University, China, in 2019. She is currently an assistant professor in School of Informatics, Xiamen University, China. Her research interests include wireless communication systems, Internet of Things, cloud/edge/service computing, federated learning as well as economic models and applications in wireless communication networks.
\end{IEEEbiography}

\vspace{-0.79cm}
\begin{IEEEbiography}[{\includegraphics[width=1in,height=1.25in,clip,keepaspectratio]{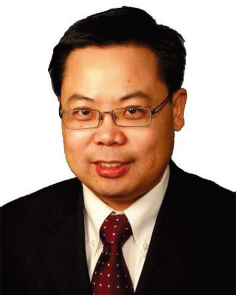}}]{Xianbin Wang (Fellow, IEEE)} is a professor and a Tier-1 Canada Research Chair in 5G and Wireless IoT Communications with Western University, Canada. His current research interests include 5G/6G technologies, Internet of Things, communications security, machine learning, and intelligent communications. He is a Fellow of the Canadian Academy of Engineering and a Fellow of the Engineering Institute of Canada. He has received many prestigious awards and recognitions, including the IEEE Canada R.A. Fessenden Award, Canada Research Chair, Engineering Research Excellence Award with Western University, Canadian Federal Government Public Service Award, Ontario Early Researcher Award, and nine Best Paper Awards. 
\end{IEEEbiography}

\vspace{-0.79cm}
\begin{IEEEbiography}[{\includegraphics[width=1in,height=1.25in,clip,keepaspectratio]{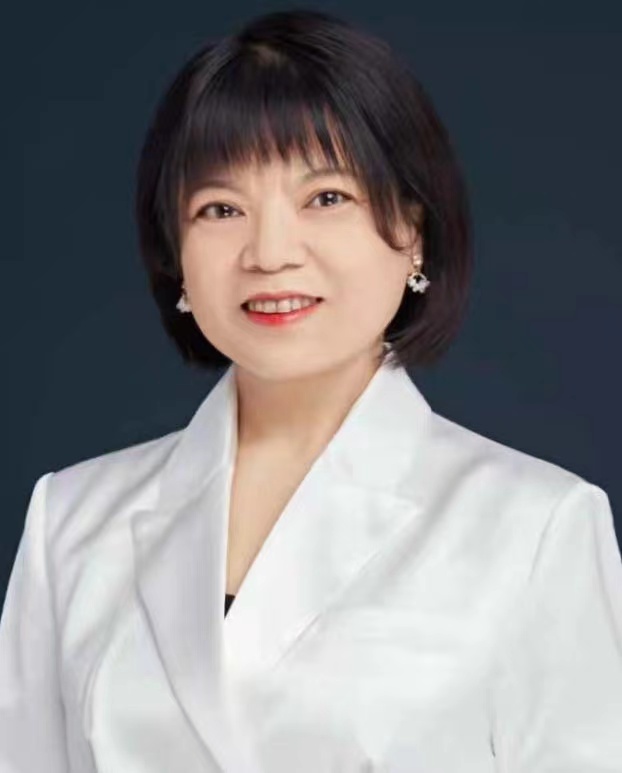}}]{Li Li (Member, IEEE)} received the B.S. and M.S. degrees in electrical automation from Shengyang Agriculture University, Shengyang, China, in 1996 and 1999, respectively, and the Ph.D. degree in mechatronics engineering from the Shenyang Institute of Automation, Chinese Academy of Science, Shenyang, in 2003. She then joined at Tongji University, Shanghai, China, where she is currently a Professor of control science and engineering. She has over 50 publications, including ﬁve books, over 30 journal articles, and two book chapters. Her current research interests include data-based modeling and optimization, computational intelligence, and machine learning.
\end{IEEEbiography}

\vspace{-0.79cm}
\begin{IEEEbiography}[{\includegraphics[width=1in,height=1.25in,clip,keepaspectratio]{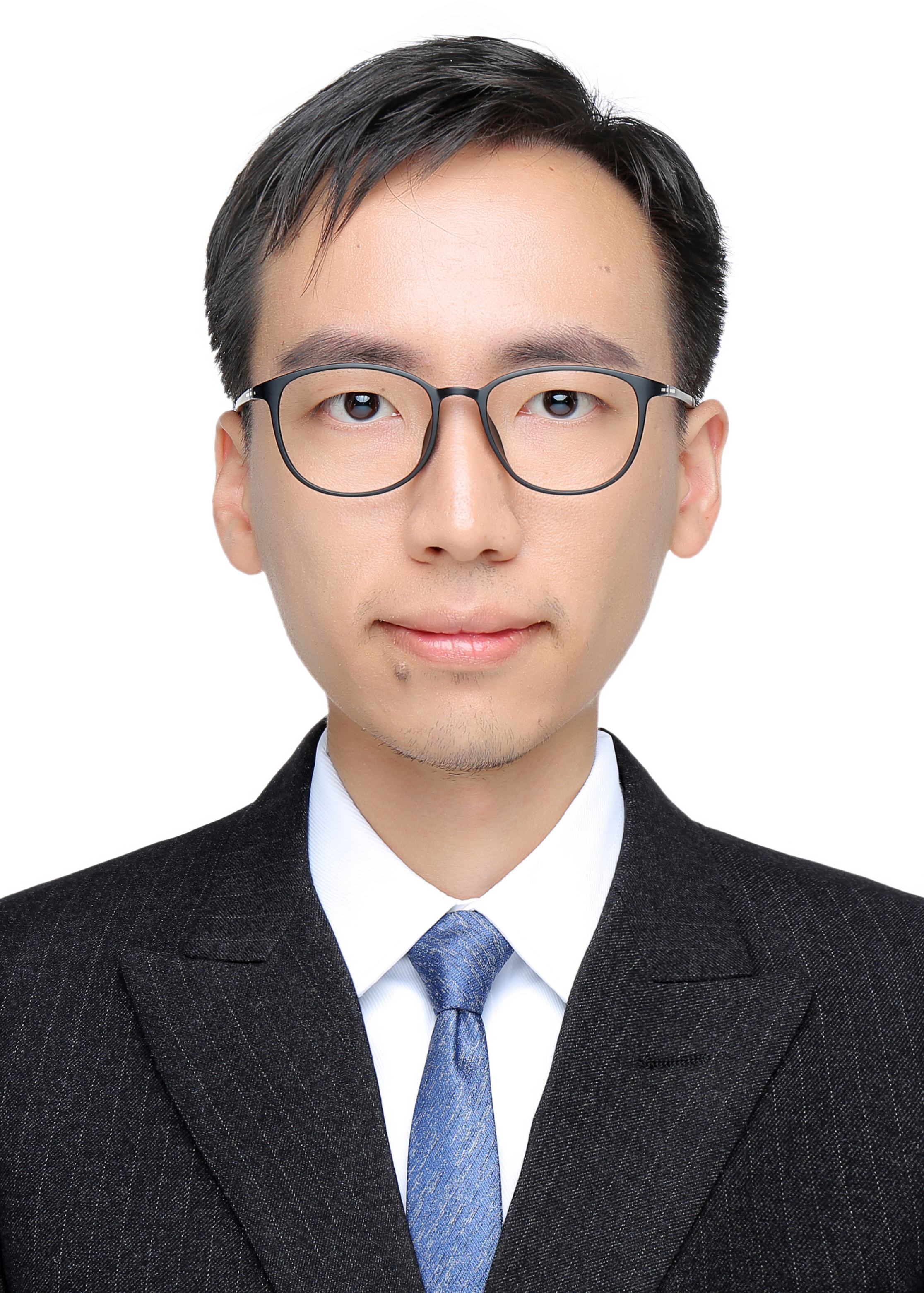}}]{Wei Gong (Member, IEEE)} is an assistant professor with the Department of Control Science and Engineering, Tongji University. He received his Ph.D. degree in computer science at the University of Chinese Academy of Sciences. He was a postdoctoral fellow with the Department of Electrical and Computer Engineering, Western University, Canada. His research interests include wireless networking, edge intelligence, and distributed learning.
\end{IEEEbiography}

\vspace{-0.79cm}
\begin{IEEEbiography}[{\includegraphics[width=1in,height=1.25in,clip,keepaspectratio]{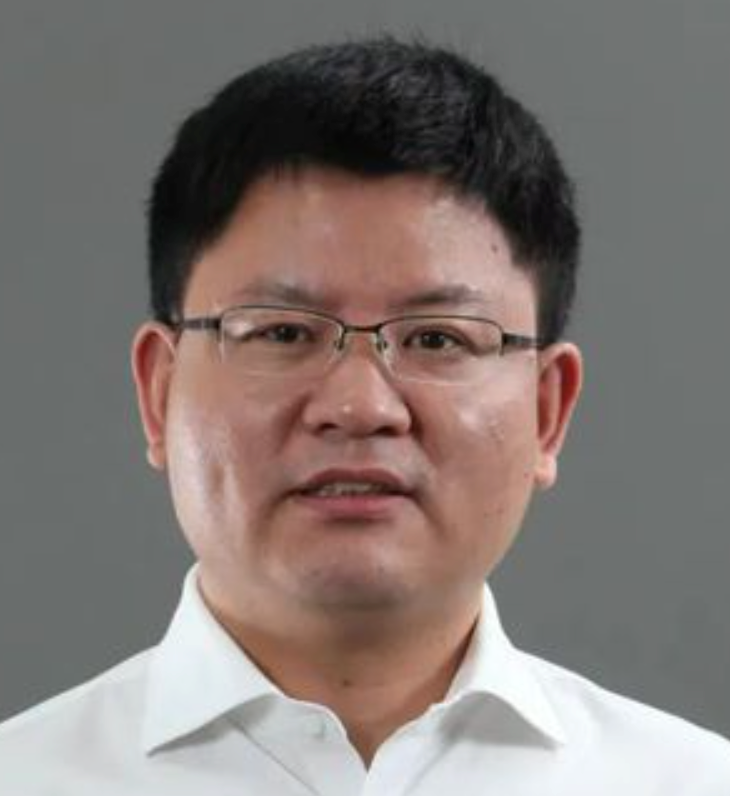}}]{Jian Jin} is working with Industrial Internet and IOT Research Institute, China Academy of Information and Communications Technology, Beijing, China. His research interests are network identity and protocol, blockchain systems, and etc.
\end{IEEEbiography}

\vspace{-0.79cm}
\begin{IEEEbiography}[{\includegraphics[width=1in,height=1.25in,clip,keepaspectratio]{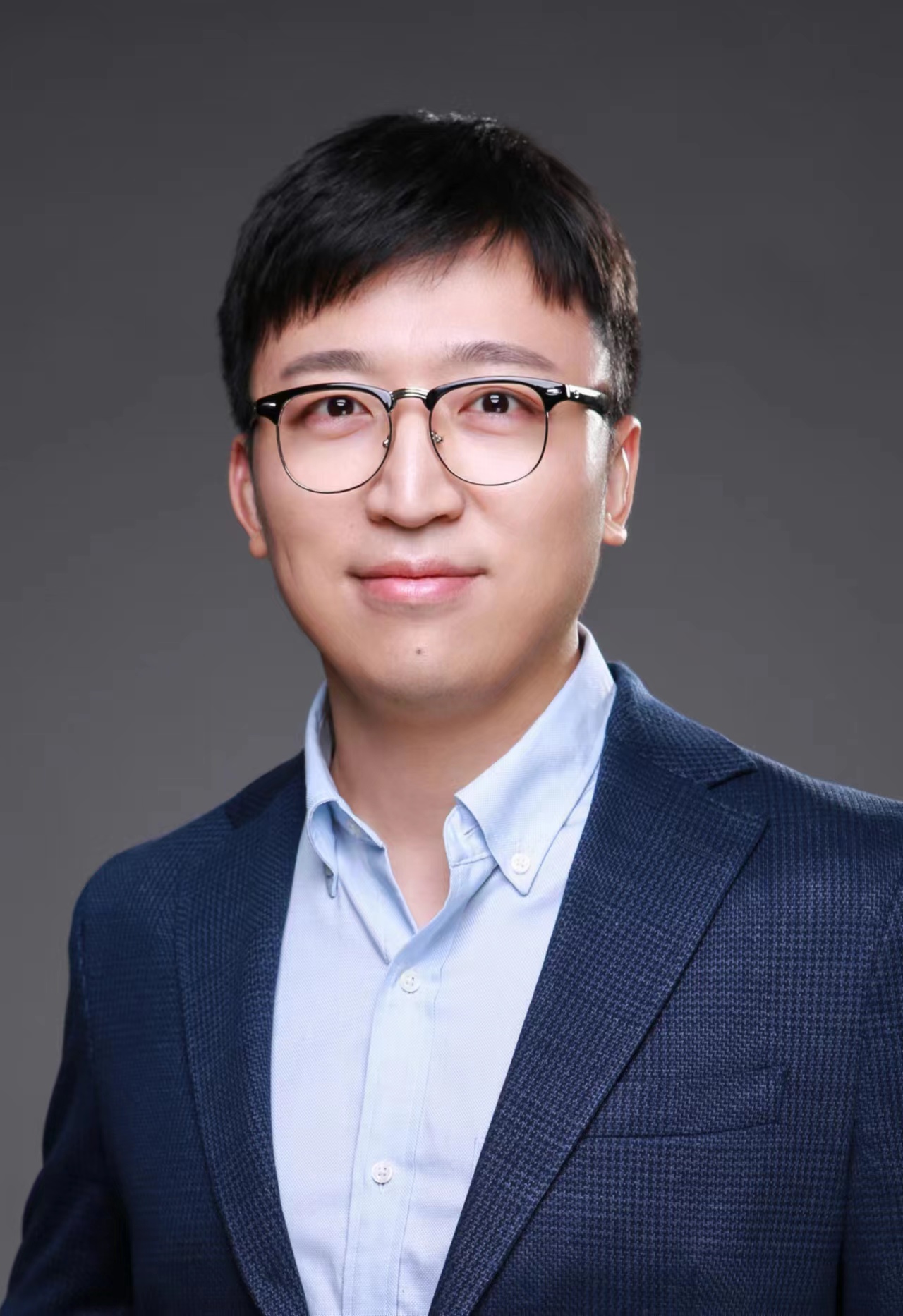}}]{Zhenzhen Jiao} received the Ph.D. degree from University of Chinese Academy of Sciences in computer science. He was an associate professor at the Institute of Computing Technology, Chinese Academy of Sciences and the director of a blockchain research laboratory. He is currently the director of the Teleinfo iF-Labs in China Academy of Information and Communications Technology. His research interests include blockchain, self-organized systems and networks, and computing power network.
\end{IEEEbiography}

\vfill

\newpage
\clearpage
\appendices
\section{Key Notations}
Key notations in this paper are summarized in Table 4.
\begin{table*}[b!]{\footnotesize\vspace{-0.3cm}
	\caption{{\footnotesize KEY NOTATIONS}}\vspace{-0.4cm}
	\begin{center}
	\begin{tabular}{ll}
		\hline
		\multicolumn{1}{l}{Notation} & \multicolumn{1}{c}{Explanation} \\ \hline
$ \bm{B} $, $ \bm{S^{E}} $, $ \bm{S^{C}} $ & The set of MUs, ESs, and CSs in OA-CLM \\
$ \bm{B^\prime} $, $ \bm{S^{E\prime}} $, $ \bm{S^{C\prime}} $ & The set of MUs, ESs, and CSs in OS-CLM \\
$ b_i $, $ s_j^E $, $ s_k^C $  &The $  {i}^\text{th} $ MU in $ \bm{B} $, $ \bm{B^\prime} $, the $ {j}^\text{th} $ ES in $ \bm{{S}^{ {E}}} $, $ \bm{S^{E\prime}} $, the $ {k}^\text{th} $ CS in $ \bm{{S}^{ {C}}} $, $ \bm{S^{C\prime}} $\\
$ \alpha_{i} $& Random variable that describes the participation of MU $ b_i $ \\
$ r_i^U $	&Required computing resources of $ b_i $ (CPU cycles) \\
$ f_i^U $, $ f_j^E $, $ f_k^C $& Computing capability of $ b_i $, $ s_j^E $, $ s_k^C $ (CPU cycles/s) \\ 
$ e_i^{t} $ & Transmission power of $ b_i $ (Watt)\\
$ e_i^U $,$ e_j^E $, $ e_k^C $ & Local power consumption of $ b_i $, $ s_j^E $, $ s_k^C $ (Watt)\\
$ \varepsilon_k $ & Random variable that describes the participation of inherent requestors of $s_k^C$\\
$ d_i^U $ & Data size of the task of $b_i$ (bit) \\
$ \gamma_{i,j} $ & Random variable that describes the time-varying channel quality between $ b_i $ and $ s_j^E $ \\
$ \lambda_{i,j} $& Indicator of whether $ b_i $ is selected as a volunteer by $ s_j^E $ \\
$G_j^E, G_k^C $& Number of resources (e.g., VMs) owned by $ s_j^E $, and $ s_k^C $ \\
$ K_j^E $ & Number of orthogonal subcarriers of $ s_j^E $ \\
$ c_{i,j}^E,c_{i,j,k}^C $ & Service cost of $ s_j^E $ for serving $b_i$, service cost of $ s_k^C $ for processing $ b_i $'s task offloaded by $ s_j^E $\\
$c_j^{H}$, $c_k^{H}$ & Unit hardware cost (per VM) of $ s_j^E $, $ s_k^C $\\
$x_j$& Necessity of $ s_j^E $ for purchasing cloud resources \\
$ \beta_{i,j,k} $& Indicator of whether $ s_j^E $ will break the contract with CS $ s_k^C $ about $b_i$ in the spot market \\
$ \varepsilon_k $ & Random variable that describes the inherent requestors' resource demand of $ s_k^C $'s \\
$\vartheta_k $ & Indicator of whether $ s_k^C $ has sufficient VMs to fulfill forward contracts in a practical transaction\\
$ p_{i,j}^{U\rightarrow E},p_{i,j,k}^{E\rightarrow C},p_{k}^{I\rightarrow C} $& Payment from $ b_i $ to $ s_j^E $, from $ s_j^E $ to CS $ s_k^C $ about the task of $b_i$, from an inherent requestor to CS $ s_k^C $ in OA-CLM\\
$ p_{i,j}^{U\prime\rightarrow E\prime},p_{i,j,k}^{E\prime\rightarrow C\prime} $& Payment from $ b_i $ to $ s_j^E $, from $ s_j^E $ to CS $ s_k^C $ about the task of $b_i$ in OS-CLM \\
$ q^{U\rightarrow E} ,q^{E\rightarrow U}$& Penalty from a MU to an ES, from an ES to a MU\\
$ q^{E\rightarrow C},q^{I\rightarrow C}$& Penalty from an ES to a CS (per task), from inherent requestor to a CS\\
$ {R_{1}^{U}},{R_{2}^{U}},{R_{1}^{E}},{R_{2}^{E}},{R^{C}} $& Risks associated with MUs, ESs, CSs \\ 
$ \mathrm{\Delta}p_i,\mathrm{\Delta}p_j$ & Step factor on payment rising regarding $ b_i $, $ s_j^E $ \\
\textbf{\text{B}},\textbf{\text{U}},\textbf{\text{P}}& Bernoulli distribution, Uniform distribution, Poisson distribution
\\
 \hline
	\end{tabular}
\end{center}}
\end{table*}

\section{Derivations associated with OA-CLM}
\subsection{Derivations associated with MUs}
\textbf{Mathematical expectation of $ v_{i,j} $.} 
 $ \gamma_{i,j} $ refers to the time-varying wireless channel quality between $ b_i $ and ES $s_j^E$, following an uniform distribution denoted by $ \gamma_{i,j} \sim \mathbf{U}(\mu_1,\mu_2) $. The expectation of $\gamma_{i,j}$ can simply computed as $  \text{E}[\gamma_{i,j}]=\frac{\mu_1+\mu_2}{2} $. According to (1), (2), and (3), we can obtain the expectation of $v_{i,j}$ (denoted by $\text{E}[v_{i,j}]$) as
\begin{equation}\label{key}{\small
	\begin{aligned}
	& \text{E}[v_{i,j}]= \mathbb{V}_1\left\{\frac{r_i^U}{f_i^U}-\left(\frac{r_i^U}{f_j^E}+\frac{d_i^U}{W\text{log}_2\left(1+e_i^{t}\frac{(\mu_1+\mu_2)}{2}\right)}\right)\right\}\\&+\mathbb{V}_2\left\{\frac{r_i^Ue_i^U}{f_i^U}-\frac{d_i^Ue_i^{t}}{W\text{log}_2\left(1+e_i^{t}\frac{(\mu_1+\mu_2)}{2}\right)}\right\}
\end{aligned}}
\end{equation}

\textbf{Mathematical expectation of $\lambda_{i,j}$.} 
As resource overbooking can lead to the case where a contractual MU (e.g., a MU who has signed a forward contract with an ES) can be selected as a volunteer, we use $\lambda_{i,j}$ to indicate whether MU $ b_i $ is determined as a volunteer by ES $ s_j^E $ in each practical transaction. For analytical simplicity, let $X_1= G_j^E+ \sum_{s_k^C\in\varphi\left(s_j^E\right)} |\mathbb{B}_{j,k}| $, which is the maximum number of MUs that $s_j^E$ can serve in spot market.

Due to the uncertain resource demand of MUs in the considered market, assessing the probability of a MU being selected as a volunteer needs a large amount of calculations. We first use notation $\bm{\mathcal{M}_j}=\left\{\bm{M_1},..,\bm{M_n},...,\bm{M_{|\mathcal{M}_j|}}\right\}$ to collect all the possible cases of the participation of MUs' in set $\omega\left(s_j^E\right)$ participation in the spot market, where $\bm{M_n}=\left\{\alpha_1,...,\alpha_i,...,\alpha_{|\omega\left(s_j^E\right)|}\right\}$ is a vector and denoted as the ${n}^\text{th}$ case of MUs' participation in a transaction. For example, suppose that $s_j^E$ has pre-signed forward contracts with two MUs, all the possible cases of these MUs taking part in the spot market can be expressed as $\bm{\mathcal{M}_j}=\left\{\bm{M_1},\bm{M_2},\bm{M_3},\bm{M_4}\right\}=\left\{\left\{0,0\right\},\left\{0,1\right\},\left\{1,0\right\},\left\{1,1\right\}\right\}$. Besides, let $X_2=\sum_{\alpha_i\in \bm{M_n}}\alpha_{i}$ refer to the number of contractual MUs who attend in a practical transaction. According to $X_2$, the calculation of $\text{E}[\lambda_{i,j}]$ can involve the following two conditions:

\noindent
$\bullet$ \textit{the number of participated MUs in a practical transaction is lower than or equal to the resource supply of $s^E_j$ (i.e., $X_2\le X_1$)}, we have $\text{E}[\lambda_{i,j}]=0$;

\noindent
$\bullet$ \textit{the number of participated MUs in a practical transaction exceeds the resource supply (i.e., $X_2>X_1$).} Due to the selfishness and resource constraint, an ES will consistently choose MUs that contribute the lowest expected utility as volunteers. Specifically, the expected utility that $ b_i\in \bm{B} $ brings to $ s_j^E $ can be calculated as
\begin{equation}\label{key}{\small
	\begin{aligned}
		\overline{u^U}(s_j^E,b_i)=a_{i}\left( \text{E}[v_{i,j}]-p_{i,j}^{U\rightarrow E}\right)+(1-a_{i})q^{U\rightarrow E},
	\end{aligned}}
\end{equation}
where $  \text{E}[v_{i,j}]$ is given by (40). For notational simplicity, we use a set $\bm{\mathcal{M}^U_{i,j}}$ to denote all the cases that $b_i$ attends in a transaction, which, unfortunately, is selected as a volunteer by $s^E_j$ (e.g., $b_i$ is one of the ($ X_2-X_1$) MUs that bring the lowest expected utility to $s_j^E$).

Accordingly, the probability of MU $ b_i $ being determined by $s_j^E$ as a volunteer is given by
\begin{equation}\label{key}{\small
	\begin{aligned}
	\text{Pr}(\lambda_{i,j}=1)=\sum_{\bm{M_n}\in\bm{\mathcal{M}^U_{i,j}}}{\prod_{\alpha_i\in \bm{M_n}}}{\text{E}[\alpha_i]},		
	\end{aligned}}
\end{equation}
and the expected value of $\lambda_{i,j}$ can be calculated by
\begin{equation}{\small
	\begin{aligned}
	&\text{E}[\lambda_{i,j}]=\text{Pr}\left(\lambda_{i,j}=0\right)\times 0+\text{Pr}\left(\lambda_{i,j}=1\right)\times 1\\&=\text{Pr}(\lambda_{i,j}=1)=\sum_{\bm{M_n}\in\bm{\mathcal{M}^U_{i,j}}}{\prod_{\alpha_i\in \bm{M_n}}}{\text{E}[\alpha_i]},
\end{aligned}}
\end{equation}
where $\text{E}[\alpha_i]=a_i$.

\textbf{Derivation associated with (5).}
The $\text{E}[v_{i,j}]$ and $\text{E}[\lambda_{i,j}]$ of (5) are given by (40) and (43), respectively.

\textbf{Derivations associated with (28c) and (28d).}
In optimization problem $ \bm{\mathcal{F}^U} $ given by (28), constraint (28c) represents a probabilistic expression, making its close form non-trivial to be obtained. To resolve such an issue, we transform (28c) into a tractable one by exploiting a set of bounding techniques. First, (28c) can be rewritten as
\begin{equation}\label{key}{\small
	\begin{aligned}
			R^{U}_1\left( b_{i},\omega \left( b_{i} \right) \right)\leq \rho_1 \Rightarrow \text{Pr}\left(u(b_i,\omega\left(b_i\right))\geq u_{min}\right)>1-\rho_1.
	\end{aligned}}
\end{equation}

To obtain a tractable form for (44), we can have the upper-bound of its left-hand side by using Markov inequality \cite{9301243}, as the following (45).
		\begin{equation}\label{key}{\small
		\begin{aligned}
		\text{Pr}\left(u(b_i,\omega\left(b_i\right)\geq u_{min}\right)\leq \frac{ \text{E}[u(b_i,\omega\left(b_i\right))]}{u_{min}}	
	\end{aligned}}
\end{equation}
Combining (44) and (45), we can then get a tractable form for (28c):
\begin{equation}\label{key}{\small
	\begin{aligned}
			\frac{ \text{E}[u(b_i,\omega\left(b_i\right))]}{u_{min}}>1-\rho_1,
	\end{aligned}}
\end{equation}
where the value of $\text{E}[u(b_i,\omega\left(b_i\right))]$ is given by (5).
Besides, constraint (28d) can be rewritten as
\begin{equation}\label{key}{\small
	\begin{aligned}
			R^{U}_2\left( b_{i},\omega \left( b_{i} \right) \right)\leq \rho_2 \Rightarrow \text{Pr}(\lambda_{i,j}=1)\leq \rho_2,
	\end{aligned}}
\end{equation}
where $\text{Pr}(\lambda_{i,j}=1)$ is detailed by (42).
\subsection{Derivations associated with ESs}
\textbf{Mathematical Expectation of $\beta_{i,j,k}$.}
Note that $ \text{Pr}(\beta_{i,j,k}=1) $ stands for the possibility of ES $ s_j^E $ fulfills the contract with CS $ s_k^C $ about the task of $b_i \in \mathbb{B}_{j,k}$ during each practical transaction (i.e., (9)). Apparently, we are unaware of the specific tasks that can be offloaded to CSs for process during this time (e.g., during forward contract design), we use $ r^{max}_j=\max \left\{r_i^U~|~\forall b_i \in \omega \left(s_j^E\right)\right\} $ to estimate the amount of resources (per offloaded task) required by $ s_j^E $ from CSs. Accordingly, the expectation of $\beta_{i,j,k}$ depends on both trading between MUs and ESs, as well as that between ESs and CSs. Accordingly, the resource demand of MUs associated with $s_j^E$ is generally uncertain and can be lower than its resource supply during a practical transaction. In this case, $ s_j^E$ will break some contracts with its matched CSs to avoid further losses and wasting resources. Given the selfishness of $ s_j^E $, it will give up some contracts with CSs that bring lower values of its expected utility.

To facilitate better analysis, we use $\bm{\mathcal{M}^E_{j,k}}$ to denote the set of all the cases that can lead to $s_j^E$ breaks the contract with CS $s_k^C$ about $b_i$'s task in a transaction. Note that tasks in $ \mathbb{B}_{j,k}$ are supposed to be homogeneous during this time (due to the same $r_j^{max}$), so we randomly give up a contract with CS $s_k^C$ regarding $b_i$ (i.e., $\beta_{i,j,k}=0$).
Accordingly, the probability of $s_j^E$ breaks the contract with CS $s_k^C$ regarding $b_i$ in a practical transaction is expressed as
		\begin{equation}\label{key}{\small
		\begin{aligned}
			\text{Pr}(\beta_{i,j,k}=0)=\sum_{\bm{M_n}\in\bm{\mathcal{M}^E_{j,k}}}{\prod_{\alpha_i\in \bm{M_n}}}{\text{E}[\alpha_i]},
	\end{aligned}}
\end{equation}
while we can thereby have the expected value of $\beta_{i,j,k}$:
\begin{equation}{\small
	\begin{aligned}
		&\text{E}[\beta_{i,j,k}]=\text{Pr}\left(\beta_{i,j,k}=0\right)\times 0+\text{Pr}\left(\beta_{i,j,k}=1\right)\times 1\\&=\text{Pr}(\beta_{i,j,k}=1)=1-\text{Pr}(\beta_{i,j,k}=0)\\&=1-\sum_{\bm{M_n}\in\bm{\mathcal{M}^E_{j,k}}}{\prod_{\alpha_i\in \bm{M_n}}}{\text{E}[\alpha_i]}
	\end{aligned}}
\end{equation}

\textbf{Derivations associated with (29f) and (29g).}
According to (48), (29f) can be rewritten as 
\begin{equation}\label{key}{\small
	\begin{aligned}
	&R_1^E\left(s_j^E,\omega\left(s_j^E\right)\right) \leq\rho_3 \Rightarrow \text{Pr}\left(\beta_{i,j,k}=0\right) \leq\rho_3 \\& \Rightarrow \sum_{\bm{M_n}\in\bm{\mathcal{M}^E_{j,k}}}{\prod_{\alpha_i\in \bm{M_n}}}{\text{E}[\alpha_i]} \leq\rho_3,
\end{aligned}}
\end{equation}
and we use $\bm{\mathcal{M}^O_{i,j}}$ to denote all the cases where the number of MUs in $\omega(s_j^E)$ who have practically participated in a transaction exceeds the resource supply of $s_j^E$.
Thus, constraint (29g) can be reformulated as
\begin{equation}\label{key}{\small
	\begin{aligned}
		&R_2^E\left(s_j^E,\omega\left(s_j^E\right)\right) \leq \rho_4 \Rightarrow \\& \text{Pr}\left(\sum_{b_i\in\omega\left(s_j^E\right)}\alpha_i>\sum_{s_k^C\in\varphi\left(s_j^E\right)} |\mathbb{B}_{j,k}|+G_j^E\right)\leq \rho_4 \\& \Rightarrow \sum_{\bm{M_n}\in\bm{\mathcal{M}^O_{i,j}}}{\prod_{\alpha_i\in\bm{M_n}}}{\text{E}[\alpha_i]} \leq\rho_4
	\end{aligned}}
\end{equation}

\subsection{Derivations associated with CSs}
\textbf{Mathematical expectation of $ \varepsilon_k$.}
$\varepsilon_k$ refers to the resource demand of $s_k^C$'s inherent requestors, following a Poisson distribution denoted by $ \varepsilon_k\sim \mathbf{P}(\sigma_k) $ with mathematical expectation $ \sigma_k $, the expected value of $\varepsilon_k$ is simply expressed as	$\text{E}[\varepsilon_k]=\sigma_k$.

\textbf{Mathematical expectations of $ N_k$.}
According to (49) and $\text{E}[\varepsilon_k]$, we can show the expectation of $N_k $ as
\begin{equation}\label{key}{\small
	\begin{aligned}
		 &\text{E}[N_k]=\sum_{s_j^E\in\varphi\left(s_k^C\right)}\sum_{b_i\in \mathbb{B}_{j,k}} \text{E}[\beta_{i,j,k}]+ \text{E}[\varepsilon_k] - G_k^C\\&=\sum_{s_j^E\in\varphi\left(s_k^C\right)}\sum_{b_i\in \mathbb{B}_{j,k}} \left(1-\sum_{\bm{M_n}\in\bm{\mathcal{M}^E_{j,k}}}{\prod_{\alpha_i\in \bm{M_n}}}{\text{E}[\alpha_i]}\right)\\&+ \sigma_k - G_k^C 
	\end{aligned}}
\end{equation}

\textbf{Mathematical Expectation of $ \vartheta_k$.}
According to (17) and (18), we rewrite (17) as
\begin{equation}\label{key}{\small
	\vartheta_k = \left\{ \begin{matrix}
		{0~~~~~, N_k \leq 0} \\
		{1~~~~~, N_k>0} \\
	\end{matrix} \right.  }
\end{equation}
while its expectation can be computed as (54), the value of $\text{Pr}(\varepsilon_k=h)$ can be calculated as (55).
\begin{figure*}[t!]
		\centering
	\begin{equation}{\small
			\begin{aligned}
				&\text{E}[\vartheta_k]=0\times \text{Pr}(\vartheta_k=0)+1\times \text{Pr}(\vartheta_k=1)=\text{Pr}(\vartheta_k=1)=\text{Pr}(N_k>0)
				\\&=\text{Pr}\left(\sum_{s_j^E\in\varphi\left(s_k^C\right)}\sum_{b_i\in \mathbb{B}_{j,k}}\beta_{i,j,k}>G_k^C\right)\text{Pr}(\varepsilon_k=0)+\text{Pr}\left(\sum_{s_j^E\in\varphi\left(s_k^C\right)}\sum_{b_i\in \mathbb{B}_{j,k}}\beta_{i,j,k}>G_k^C-1\right)\text{Pr}(\varepsilon_k=1)+...\\&+\text{Pr}\left(\sum_{s_j^E\in\varphi\left(s_k^C\right)}\sum_{b_i\in \mathbb{B}_{j,k}}\beta_{i,j,k}>G_k^C-h\right)\text{Pr}(\varepsilon_k=h)+\text{Pr}\left(\sum_{s_j^E\in\varphi\left(s_k^C\right)}\sum_{b_i\in \mathbb{B}_{j,k}}\beta_{i,j,k}>0\right)\text{Pr}(\varepsilon_k=G_k^C)
				\\&=\sum_{h=0}^{G_k^C}\text{Pr}\left(\sum_{s_j^E\in\varphi\left(s_k^C\right)}\sum_{b_i\in \mathbb{B}_{j,k}}\beta_{i,j,k}>G_k^C-h\right)\text{Pr}(\varepsilon_k=h)
		\end{aligned}}
	\end{equation}
		\hrulefill
\end{figure*}
\begin{equation}\label{key}{\small
	\begin{aligned}
			\text{Pr}(\varepsilon_k=h)=e^{-\varepsilon_k}\frac{\varepsilon_k^h}{h!} ,~h\in [0,G_k^C].
	\end{aligned}}
\end{equation}

Accordingly, we use $\bm{\mathcal{M}^C_{j,k}\left\langle h \right\rangle}$ to denote all the cases where the resource demand of ESs in $\varphi(s_k^C)$ in a transaction is equal to $G_k^C-h$. Thus, we can obtain
\begin{equation}{\small
\begin{aligned}
	&\text{Pr}\left(\sum_{s_j^E\in\varphi\left(s_k^C\right)}\sum_{b_i\in \mathbb{B}_{j,k}}\beta_{i,j,k}=G_k^C-h\right)\\&=\sum_{\bm{M_n}\in \bm{\mathcal{M}^C_{j,k}\left\langle h \right\rangle}}{\prod_{\alpha_i\in \bm{M_n}}}{\text{E}[\alpha_i]},
\end{aligned}}
\end{equation}
and $\text{Pr}\left(\sum_{s_j^E\in\varphi\left(s_k^C\right)}\sum_{b_i\in \mathbb{B}_{j,k}}\beta_{i,j,k}>G_k^C-h\right)$ can further be calculated as
\begin{equation}{\small
	\begin{aligned}
	&\text{Pr}\left(\sum_{s_j^E\in\varphi\left(s_k^C\right)}\sum_{b_i\in \mathbb{B}_{j,k}}\beta_{i,j,k}>G_k^C-h\right)\\&=\sum_{h^\prime=0}^{h-1}\text{Pr}\left(\sum_{s_j^E\in\varphi\left(s_k^C\right)}\sum_{b_i\in \mathbb{B}_{j,k}}\beta_{i,j,k}=G_k^C-h^\prime\right)\\&=\sum_{h^\prime=0}^{h-1}~\sum_{\bm{M_n}\in \bm{\mathcal{M}^C_{j,k}\left\langle h^\prime \right\rangle}}{\prod_{\alpha_i\in \bm{M_n}}}{\text{E}[\alpha_i]},
\end{aligned}}
\end{equation}

To this end, the mathematical expectations of $ \vartheta_k$ can be calculated as
\begin{equation}\label{key}\small{
	\begin{aligned}
	& \text{E}[\vartheta_k]=\text{Pr}(N_k>0)=\\&=\sum_{h=0}^{G_k^C}\text{Pr}\left(\sum_{s_j^E\in\varphi\left(s_k^C\right)}\sum_{b_i\in \mathbb{B}_{j,k}}\beta_{i,j,k}>G_k^C-h\right)\text{Pr}(\varepsilon_k=h)\\&=\sum_{h=0}^{G_k^C}e^{-\varepsilon_k}\frac{\varepsilon_k^h}{h!}\sum_{h^\prime=0}^{h-1}~\sum_{\bm{M_n}\in \bm{\mathcal{M}^C_{j,k}\left\langle h^\prime \right\rangle}}{\prod_{\alpha_i\in \bm{M_n}}}{\text{E}[\alpha_i]}.
\end{aligned}}
\end{equation}
\section{Property Analysis of OA-CLM}
\begin{lem}
	(Convergence of MU-ES matching of OA-CLM) Phase 1 of Algorithm 1 converges within finite rounds.
\end{lem}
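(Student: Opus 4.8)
The plan is to view the round counter $m$ as being driven entirely by the payment variables $p_{i,j}^{U\rightarrow E}\langle m\rangle$, and then exploit that these variables are monotone, bounded, and advance in fixed discrete steps. By the update logic, the outer loop proceeds from round $m$ to round $m+1$ (line 20) if and only if at least one payment strictly increases between the two rounds (line 19); otherwise $flag_i$ stays $0$ and the loop halts. Hence it suffices to show that the total number of strict payment increases, aggregated over all MU--ES pairs over the entire run, is finite.

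First I would establish monotonicity and boundedness of each payment sequence. Whenever a payment is modified it is reset to $\min\{p_{i,j}^{U\rightarrow E}\langle m\rangle+\Delta p_i,\ \text{E}[v_{i,j}]\}$ (line 16), and otherwise it is carried over unchanged (line 18). Since $\Delta p_i>0$, every modification is a strict increase, so $\{p_{i,j}^{U\rightarrow E}\langle m\rangle\}_m$ is non-decreasing; and the $\min$ operation guarantees it never exceeds the ceiling $\text{E}[v_{i,j}]$. Moreover, an increase is triggered only while the guard $p_{i,j}^{U\rightarrow E}\langle m\rangle<v_{i,j}$ together with $R_1^U<\rho_1$ holds (line 15), so once a payment reaches $\text{E}[v_{i,j}]$ (or its risk constraint becomes binding) it is frozen thereafter. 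Starting from $p_{i,j}^{min}$, rising by at least $\Delta p_i$ each time, and capped at $\text{E}[v_{i,j}]$, the payment for a single pair can strictly increase at most $N_{i,j}:=\big\lceil(\text{E}[v_{i,j}]-p_{i,j}^{min})/\Delta p_i\big\rceil$ times. Summing over pairs, the total number of strict increases is bounded by $\sum_{b_i\in\bm B}\sum_{s_j^E\in\bm C_i} N_{i,j}$, a finite constant fixed by the problem data; since each non-terminating outer round witnesses at least one such increase, the number of rounds is at most this constant plus one.

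The main obstacle is ruling out non-termination hidden \emph{within} a single round rather than across rounds, namely the inner proposal--acceptance loop (lines 10--13) in which each ES repeatedly re-selects its temporarily accepted set $\mathbb{W}(s_j^E)$ under the capacity bound $(1+\tau)K_j^E$ and the risk constraint $R_2^E$. Here I would observe that this inner selection operates at \emph{fixed} payments, because prices are only updated afterward in lines 14--20. It is therefore an ordinary one-shot many-to-one deferred-acceptance procedure over the finite MU set $\bm B$: once a MU is rejected by an ES at the current price it is never reconsidered by that ES at the same price, so standard Gale--Shapley reasoning forces the tentative matching to stabilize in finitely many proposal steps. Combining the finiteness of this inner selection with the finite bound on payment-driven outer rounds established above yields convergence of Phase 1 in finitely many rounds.
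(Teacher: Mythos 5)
Your proof is correct and follows essentially the same route as the paper's: the paper likewise argues that, under the Gale--Shapley proposal scheme, each payment $p_{i,j}^{U\rightarrow E}\left\langle m \right\rangle$ either leads to acceptance or climbs by fixed steps until it hits its cap (the expected valuation $\text{E}[v_{i,j}]$ or the risk-constrained maximum), after which no further rounds occur. Your version merely makes this explicit with the counting bound $N_{i,j}$ and the observation that the inner deferred-acceptance selection at fixed prices terminates, both of which the paper leaves implicit in its citation of Gale--Shapley.
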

\begin{proof}
As the MU-ES matching refers to a M2O matching, we utilize the Gale-Shapley algorithm to solve the matching game \cite{9667258,9687261}. After a finite number of rounds, each MU's payment can either be accepted or reach its maximum payment while considering an acceptable level of risk $R^U_1$ (e.g., line 16, Algorithm 1), which thereby supports the property of convergence.
\end{proof}
\begin{lem}
	(Convergence of ES-CS matching of OA-CLM) Phase 2 of Algorithm 1 converges within finite rounds.
\end{lem}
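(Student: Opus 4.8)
The plan is to mirror the convergence argument already used for Phase 1 in Lemma 1, adapting it to the many-to-many (M2M) structure of the ES-CS matching. Since Phase 2 again employs a Gale-Shapley (deferred-acceptance) procedure with the ESs acting as the proposing side and the CSs as the accepting side, I would establish finiteness by showing that the only state variable that drives the iteration, namely the per-task payment $p_{i,j,k}^{E\rightarrow C}\langle n\rangle$, evolves through a monotone and bounded sequence. The structural fact that makes this work is that a proposing ES never lowers its offer.

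First, I would isolate the payment-update rule of Phase 2 (lines 36--38 of Algorithm 1): in each round, for every triple $(s_j^E, b_i, s_k^C)$ the payment either stays unchanged or is raised by the fixed discrete step $\Delta p_j$, and it is capped from above by $p_j^{max}=\min\{p_{i,j}^{U\rightarrow E}\mid b_i\in\mathbb{W}(s_j^E)\}$. Consequently the payment attached to each task forms a non-decreasing sequence bounded above by $p_j^{max}$, and because every strict increase advances the payment by at least $\Delta p_j$, only finitely many strict increases can occur before the cap is reached. I would then tie this to the termination test of the repeat step (Step 6): the outer loop persists only while at least one payment strictly changed between two consecutive rounds, i.e. $\Sigma_{s_j^E\in\bm{S^{E}}}flag_j>0$. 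Once every task's payment has either been accepted by some CS or has reached its cap $p_j^{max}$, no payment can strictly increase any further, so every $flag_j$ is reset to $0$ and the while-loop exits. Combining the two observations bounds the total number of rounds by $\sum \lceil p_j^{max}/\Delta p_j\rceil$, which is finite.

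The main obstacle I anticipate is ruling out oscillation in the CSs' accept/reject decisions: in each round each CS $s_k^C$ re-selects a utility-maximizing subset of proposing tasks under its capacity $G_k^C$ and its risk bound $R^C$ (constraint (30c)), so a task accepted in one round could in principle be displaced later. The resolving point is exactly the monotonicity noted above, paralleling the role that line 16 played in Lemma 1: a displaced (rejected) ES only ever raises its offer and never lowers it, so the price dynamics cannot cycle and the monotone-bounded argument applies verbatim. I would therefore emphasise this one-directional price movement as the crucial fact, and keep the remainder deliberately short since it is a direct restatement of the Phase 1 reasoning in the M2M setting.
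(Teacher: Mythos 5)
Your proof is correct and follows essentially the same route as the paper's: the paper likewise argues that Phase 2 is a Gale-Shapley-style procedure in which each ES's per-task payment is only ever raised (line 36 of Algorithm 1) until it is either accepted or reaches the cap $p_j^{max}$, which forces termination in finitely many rounds. Your version merely makes the monotone-bounded price dynamics and the explicit round bound $\sum\lceil p_j^{max}/\Delta p_j\rceil$ more detailed than the paper's terse statement.
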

\begin{proof}
	Since the ES-CS matching refers to as a M2M matching, we adopt the Gale-Shapley algorithm to solve the matching game \cite{9667258,9687261}. After a finite number of rounds, each ES's payment can be either accepted or reaches its maximum payment (e.g., line 36, Algorithm 1), and thus supports the property of convergence.
\end{proof}
\begin{lem}
(Individual rationality of MU-ES matching of OA-CLM) All the MUs and ESs are individual rational in the futures market.
\end{lem}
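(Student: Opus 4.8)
The plan is to derive both halves of Definition 8 directly from the construction of Algorithm 1, exploiting the fact that every participant only ever retains configurations that already respect its own constraints, so individual rationality is guaranteed by the selection rules rather than needing to be verified after the fact.

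First I would establish individual rationality for the MUs. The risk conditions (28c) and (28d) hold by construction: in Phase 1 each MU assembles its preference list $L_i^U$ only from ESs for which $R_1^U < \rho_1$ and $R_2^U < \rho_2$ (line 4 of Algorithm 1), so every ES a MU can propose to, and hence the ES it is ultimately matched with, already meets both thresholds. Moreover, the payment update (line 16) caps $p_{i,j}^{U\rightarrow E}$ at $\text{E}[v_{i,j}]$, which is precisely constraint (28b); together with $\text{E}[\alpha_i]=a_i\geq 0$ this keeps the valuation-minus-payment term of $\overline{u^U}$ non-negative. Thus every matched MU is individually rational.

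Next I would treat the ESs, where two facts must be verified: positive expected utility and controlled risks and resource usage. The risk and resource parts are enforced explicitly by the algorithm—the MU-selection step (line 11) retains only subsets meeting the bound on $R_2^E$ in constraint (29g), Phase 2 controls $R_1^E$ via constraint (29f), and the cross-layer step (lines 42--43) trims $\omega(s_j^E)$ so its cardinality never exceeds the overbooked supply $(1+\tau)(G_j^E + \sum_{s_k^C\in\varphi(s_j^E)}|\mathbb{B}_{j,k}|)$, which is exactly condition (33). For positive utility, I would argue that each ES acts as a utility maximizer: line 11 chooses $\omega(s_j^E)$ to maximize $\overline{u^{U\leftrightarrow E}}$ subject to its constraints, while serving no MU yields utility $0$. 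Hence any non-empty matched set must deliver strictly positive expected utility, since otherwise the ES would strictly prefer to reject. Constraint (29c) additionally guarantees $p_{i,j}^{U\rightarrow E}\geq c_{i,j}^E$ for every retained MU, so each payment-minus-cost contribution is non-negative.

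The main obstacle will be the strict positivity of $\overline{u^{U\leftrightarrow E}}$, because its expression (the expectation of (11)) contains the negative volunteer-penalty term $-a_i\text{E}[\lambda_{i,j}]q^{E\rightarrow U}$ alongside the non-negative payment and default-penalty terms, so I cannot conclude positivity termwise. I would resolve this by leaning on the maximizing behavior in line 11 combined with the bound that constraint (29g) places on the overbooking probability, which controls $\text{E}[\lambda_{i,j}]$ and keeps the penalty from overwhelming the gains; because the ES can always discard any MU whose marginal contribution is negative, the retained non-empty configuration is guaranteed to clear $0$ strictly, establishing the claimed individual rationality for all ESs.
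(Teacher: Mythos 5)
Your proposal is correct and follows essentially the same route as the paper's proof of this lemma: both arguments simply read individual rationality off the construction of Algorithm~1 --- the risk-filtered preference list and the payment cap $p_{i,j}^{U\rightarrow E}\leq \mathrm{E}[v_{i,j}]$ for the MUs, and the Phase-3 trimming to the overbooked supply $(1+\tau)\bigl(G_j^E+\sum_{s_k^C\in\varphi(s_j^E)}|\mathbb{B}_{j,k}|\bigr)$ together with $p_{i,j}^{U\rightarrow E}\geq c_{i,j}^E$ for the ESs. Where you diverge, you are actually more careful than the paper on two counts. First, the paper's proof of this lemma never explicitly discharges the ES risk constraints (29f)--(29g) (it defers risk largely to the ES--CS lemma), whereas you tie them directly to the selection step of Phase~1 and to Phase~2. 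Second, the paper asserts the ES's utility bound purely termwise (``payment covers cost''), silently ignoring the volunteer-penalty term $-a_i\mathrm{E}[\lambda_{i,j}]q^{E\rightarrow U}$ in (13); you correctly identify this as the real obstacle and repair it with the maximize-over-subsets-with-empty-fallback argument plus the bound (29g) places on $\mathrm{E}[\lambda_{i,j}]$. One soft spot remains in your fix: utility maximization with the empty set available only yields $\overline{u^{U\leftrightarrow E}}\geq 0$, not the strict inequality in Definition~7 --- if the optimum were exactly zero the ES would be indifferent, not ``strictly prefer to reject,'' so strictness needs a tie-breaking convention or a genericity remark. Note, however, that the paper's own proof claims only ``non-negative expected utility'' at this step despite the strict inequality in its definition, so your argument is no weaker than the published one and is more honest about where the difficulty lies.
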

\begin{proof}
We offer the analysis on proving the individual rationality of MUs and ESs.

\textbf{Individual rationality of MUs.} For each MU $ b_i\in\bm{B} $, the payment from $ b_i $ remains unchanged when any of the following conditions is met: \textit{i)} $ b_i $ has been accepted by an ES $ s_j^E $; \textit{ii)} $ b_i $'s current payment $ p_{i,j}^{U\rightarrow E}\left\langle m \right\rangle $ (e.g., in the $m^\text{th}$ round , Algorithm 1) equals to its expected valuation $ \text{E}[v_{i,j} ]$; \textit{iii)} the risk is out of control, e.g., raising the payment can bring an unacceptable risk on utility (e.g., line 18, Algorithm 1). Such a consideration ensures that constraint (28c) can be satisfied. Moreover, thanks to the well-designed risk analysis, e.g., constraint (28d), each MU $ b_i $ can also decide whether to sign a forward contract with the matched ES, upon evaluating the acceptable risk of being selected as a volunteer (e.g., lines 44-45, Algorithm 1), which thereby guarantees that each contractual MU can be served by the corresponding ES in each practical transaction, at a high probability. 

\textbf{Individual rationality of ESs.} Owing to overbooking, each ES $s_j^E$ regards $ (1+\tau)K_j^E $ as the maximum capacity for serving MUs (e.g., the maximum number of MUs that can access to the ES simultaneously, line 12, Algorithm 1), and the actual number of matched MUs of $s_j^E$ will definitely not exceed its overbooked resource supply $  (1+\tau)\left(G_j^E+\sum_{s_k^C\in\varphi\left(s_j^E\right)} |\mathbb{B}_{j,k}|\right) $ (e.g., line 43, Algorithm 1). In addition, the payment offered from MU $ b_i $ can cover the service cost of $s^E_j$, while staying below its expense paid to a CS (i.e., lines 16 and 36, Algorithm 1), which thus supporting that its non-negative expected utility.

As a summary, MUs and ESs are individual rationality in our proposed MU-ES matching of OA-CLM.
\end{proof}
\begin{lem}
No blocking pair can exist in the MU-ES matching of OA-CLM
\end{lem}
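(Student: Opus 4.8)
The plan is to argue by contradiction using the standard deferred-acceptance (Gale–Shapley) stability machinery, adapted to the expected-utility and capacity setting of Phase 1. Suppose, contrary to the claim, that at termination of Phase 1 there exists a blocking pair $(b_i; s_j^E)$ with $b_i\notin\omega(s_j^E)$. First I would record the two structural facts supplied by Algorithm 1: (i) each MU proposes strictly down its preference list $L_i^U$, so $b_i$ ends up matched to an ES no better in $\overline{u^U}$ than any ES it ever proposed to; and (ii) in every round each ES tentatively retains the subset of its current proposer pool $\widetilde{\mathbb{W}}\left(s_j^E\right)$ that maximizes $\overline{u^{U\leftrightarrow E}}$ subject to the overbooked capacity $(1+\tau)K_j^E$ and the risk constraint $R_2^E\le\rho_4$ (lines 11--13). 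Since a blocking pair requires that $b_i$ and $s_j^E$ each strictly prefer the other to their current match, fact (i) forces $b_i$ to have proposed to $s_j^E$ at some round and then been rejected (or never subsequently displaced).

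Next I would treat the two blocking types separately. For a \textbf{Type 2} pair, condition (23) says $s_j^E$ strictly gains by adding $b_i$ without dropping anyone, which presupposes that $s_j^E$ holds fewer than its capacity of MUs. In the round in which $b_i$ proposed, $s_j^E$ therefore had a free slot and including $b_i$ raised $\overline{u^{U\leftrightarrow E}}$, so by the greedy rule (ii) $s_j^E$ would have retained $b_i$. Because the process is monotone for the proposed-to side --- an ES's held utility never decreases and a held MU is displaced only by a replacement that strictly increases utility --- $b_i$ could be dropped only in favour of a strictly better configuration, which is incompatible with the terminal state still admitting a strict improvement from re-adding $b_i$; this contradicts (23). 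For a \textbf{Type 1} pair, condition (22) asserts that swapping out some $\widetilde{\omega}\left(s_j^E\right)$ for $b_i$ strictly raises $\overline{u^{U\leftrightarrow E}}$. The key observation is that, at the terminal payment profile, the final set $\omega\left(s_j^E\right)$ is exactly the utility-maximizing subset of all MUs that ever proposed to $s_j^E$ (rule (ii) applied in the last round, when no payment changes). Since $b_i$ belongs to that proposer pool, the candidate $\omega\left(s_j^E\right)\setminus\widetilde{\omega}\left(s_j^E\right)\cup\left\{b_i\right\}$ is one of the subsets over which $s_j^E$ already optimized, so it cannot strictly beat $\omega\left(s_j^E\right)$, contradicting (22). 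Together the two cases show that no blocking pair of either type survives.

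The step I expect to be the main obstacle is reconciling the \emph{dynamic payments} with the static blocking-pair test: unlike textbook Gale--Shapley, the utilities $\overline{u^{U\leftrightarrow E}}$ and the induced preference orders shift across rounds as rejected MUs raise $p_{i,j}^{U\rightarrow E}$ toward $\text{E}[v_{i,j}]$. I would close this gap by invoking convergence (Lemma 1) to fix a terminal payment profile, and then arguing that a rejected $b_i$ sits at its maximal feasible payment under the risk cap $R_1^U\le\rho_1$. Consequently $s_j^E$'s refusal to hold $b_i$ in the final round means that including $b_i$ fails to improve $\overline{u^{U\leftrightarrow E}}$ \emph{even at $b_i$'s best admissible offer}, which is precisely what rules out both the swap-type and addition-type improvements. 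Making this monotonicity-with-bargaining argument rigorous --- namely that the ES's terminal held set is optimal with respect to the same payments that enter (22)--(23) --- is the crux, after which the contradiction in both cases is immediate and the lemma follows.
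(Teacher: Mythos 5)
Your proposal is correct and follows essentially the same route as the paper's own proof: a contradiction argument for each blocking-pair type, anchored on the fact that a rejected $b_i$'s terminal payment sits at its maximal admissible offer $\min\left\{\text{E}[v_{i,j}],\,p_{i,j}^{max}\right\}$ under the risk cap $R_1^U\le\rho_1$, so that the ES's rejection certifies no swap involving $b_i$ improves $\overline{u^{U\leftrightarrow E}}$ (ruling out Type 1), while Type 2 is excluded because rejection occurs only when $s_j^E$ lacks surplus resources, contradicting condition (23). The monotonicity-with-bargaining issue you flag as the crux is handled in the paper exactly as you suggest, by fixing the terminal payment profile via convergence and comparing utilities at that profile (inequalities (59)--(62)).
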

\begin{proof}
We offer the proof to show there is no blocking pair of either Type 1 or Type 2, as following:

\noindent 
$\bullet$ \textbf{There is no Type 1 blocking pair in the MU-ES matching of OA-CLM.} We offer the proof by considering contradiction.

 Under a given matching $ \omega $, MU $ b_i $ and ES $ s_j^E $ form a Type 1 blocking pair $ \left(b_i;s_j^E\right) $.
If MU $ b_i $ does not sign a forward contract with ES $ s_j^E $, when any of the following conditions is met: \textit{i)} the final payment offered by MU $ b_i $ (e.g., payment $ p_{i,j}^{U\rightarrow E}$ during the last round) equals to its expected valuation; and \textit{ii)} the risk is out of control (e.g., constraint (28c)). For analytical simplicity, we use $ p_{i,j}^{max} $ to denoted the maximum payment from $b_i$ to $s_j^E$ under an accepted risk $ R_1^U $. Thus, the final payment $ p_{i,j}^{U\rightarrow E}$ can only refer to $\text{E}[v_{i,j}]$ or $p_{i,j}^{max} $, shown by (59) and (60).
\begin{equation}\label{key}{\small
	\begin{aligned}
		p_{i,j}^{U\rightarrow E} =  \text{min}\left\{\text{E}[v_{i,j}],p_{i,j}^{max}\right\},
	\end{aligned}}
\end{equation}
\begin{equation}\label{key}{\small
		\begin{aligned}
			\overline{u^{U\leftrightarrow E}}\left(s_j^E,\omega\left(s_j^E\right)\backslash \widetilde{\omega^\prime}\left(s_j^E\right)\cup \left\{b_i\right\}\right)<\overline{u^{U\leftrightarrow E}}\left(s_j^E,\omega\left(s_j^E\right)\right).	
	\end{aligned}}
\end{equation}

If ES $ s_j^E $ selects MU $ b_i $, we have $ p_{i,j}^{U\rightarrow E}\left\langle m^{*} \right\rangle\leq p_{i,j}^{U\rightarrow E}\left\langle m \right\rangle =  \text{min}\left\{\text{E}[v_{i,j}],p_{i,j}^{max}\right\} $ and the following (61)
\begin{equation}\label{key}{\small
	\begin{aligned}
		&\overline{u^{U\leftrightarrow E}}\left(s_j^E,\omega\left(s_j^E\right)\backslash \widetilde{\omega^\prime}\left(s_j^E\right)\cup \left\{b_i\right\}\right) \geq\\& \overline{u^{U\leftrightarrow E}}\left(s_j^E,\omega\left(s_j^E\right)\backslash \widetilde{\omega^{\prime\prime}}\left(s_j^E\right)\cup \left\{b_i\right\}\right),\\
	\end{aligned}}
\end{equation}
where $ 
\widetilde{\omega^{\prime\prime}}\left(s_j^E\right) \subseteq \widetilde{\omega^{\prime}}\left(s_j^E\right) $. From (60) and (61), we can get
\begin{equation}\label{key}\small{
	\begin{aligned}
		\overline{u^{U\leftrightarrow E}}\left(s_j^E,\omega\left(s_j^E\right)\right)> \overline{u^{U\leftrightarrow E}}\left(s_j^E,\omega\left(s_j^E\right)\backslash \omega^{\prime\prime}\left(s_j^E\right)\cup \left\{b_i\right\}\right),
	\end{aligned}}
\end{equation}
which is contrary to (22), which thus ensures the inexistence of Type 1 blocking pairs.

\noindent 
$\bullet$ \textbf{There is no Type 2 blocking pair in the MU-ES matching of OA-CLM.}
We conduct the proof by considering cases of contradiction\footnote{In this scenario, we conduct the proof by considering cases of contradiction, that is, the ES still possesses sufficient resources (i.e., VMs). This assumption is constrained by the number of orthogonal subcarriers, thereby excluding the presumption of proof under infinite bandwidth. Moreover, we do not consider transmission outages in our analysis, which can be supported by several recent studies \cite{8815852,9154594,10321730}.}. 

Under a given matching $ \omega $, MU $ b_i $ and ES $ s_j^E $ form a Type 2 blocking pair $ \left(b_i;s_j^E\right) $, as shown by (23).
If MU $ b_i $ is rejected by ES $ s_j^E $, the final payment of $ b_i $ can be set by $ p_{i,j}^{U\rightarrow E} = \text{min}\left\{\text{E}[v_{i,j}],p_{i,j}^{max}\right\} $, where the only reason of such a rejection is that $ s_j^E $ has no surplus resources. However, the coexistence of (23) shows that ES $ s_j^E $ has adequate resource supply to serve MUs, which contradicts our previous assumption. Therefore, we prove that there is no Type 2 blocking pair.

As a summary, no blocking pair can exist in our proposed MU-ES matching in OA-CLM. 
\end{proof}

\begin{lem}
(Individual rationality of ES-CS matching of OA-CLM) All the ESs and CSs are individual rational in the futures market.
\end{lem}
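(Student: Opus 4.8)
The plan is to mirror the two-sided structure used in the proof of Lemma 3, splitting the argument according to the two requirements of Definition 8: first the individual rationality of the CSs, then that of the ESs. For each side I would show (i) a positive expected utility and (ii) the relevant risk constraint, both as direct consequences of the payment constraints in problems $\bm{\mathcal{F}^E}$, $\bm{\mathcal{F}^C}$ and the selection steps of Phase 2 in Algorithm 1.

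For the CS side, I would first establish $\overline{u^C}(s_k^C,\varphi(s_k^C))>0$. The argument invokes constraint (30b), which guarantees $p_{i,j,k}^{E\rightarrow C}\geq c_{i,j,k}^C$ for every matched ES, so each accepted task contributes a non-negative margin to the first summation of (20); together with the non-negative penalty income $q^{E\rightarrow C}$ from broken contracts and the payment from inherent requestors, and the fact that Phase 2 Step 3 (line 32) explicitly selects the ESs' tasks that maximize the CS's expected utility under limited $G_k^C$ VMs, the aggregate is strictly positive. I would then note that the same selection step operates under the risk constraint $R^C$, so constraint (30c) holds by construction, completing the CS part.

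For the ES side, the goal is $\overline{u^E}(s_j^E,\omega(s_j^E),\varphi(s_j^E))>0$ together with constraint (29f). The key is the coupling between the two phases: constraint (29c) ensures each MU's payment covers the ES's service cost on the MU-facing component $u^{U\leftrightarrow E}$, while constraint (29d) combined with $p_j^{max}=\min\{p_{i,j}^{U\rightarrow E}\}$ guarantees the ES never pays a CS more than the least it collects from any MU, i.e. $p_{i,j,k}^{E\rightarrow C}\leq p_j^{max}\leq p_{i,j}^{U\rightarrow E}$. Substituting these bounds into the expected-utility expression (14), I would show each summand of the MU-trading component and each summand of the CS-trading component is non-negative with at least one strictly positive. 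The risk constraint (29f) is enforced through the expectation $\text{E}[\beta_{i,j,k}]$ derived in Appendix B.2 and the acceptance test embedded in the selection procedure.

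The hard part will be the ES-side positivity, since the ES acts simultaneously as seller (to MUs) and buyer (to CSs), so its utility mixes both matchings. The critical link is the chained inequality $p_{i,j,k}^{E\rightarrow C}\leq p_j^{max}\leq p_{i,j}^{U\rightarrow E}$, which ties Phase 2 back to Phase 1 and prevents the offloading transaction from eroding the profit earned on the MU side; I would need to verify carefully that the saved service cost $c_{i,j}^E$ appearing in (12) does not flip the sign, and that the penalty terms $q^{E\rightarrow C}$ for broken ES-CS contracts remain bounded via the risk constraint (29f) so they cannot overwhelm the positive margin.
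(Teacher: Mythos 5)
Your proposal takes essentially the same route as the paper's proof of this lemma: a two-sided split in which the CS side follows from constraint (30b) (payment covers service cost) together with the risk control (30c) enforced in the selection step, and the ES side follows from the payment chain $p_{i,j,k}^{E\rightarrow C}\leq p_j^{max}\leq p_{i,j}^{U\rightarrow E}$ (lines 16 and 36 of Algorithm 1) combined with the risk analysis of (29f) and the fallback to spot trading when risks are unacceptable. Your treatment is if anything slightly more explicit than the paper's about why the offloading terms and penalties cannot erode the ES's margin, and it is correct modulo a harmless equation-number slip (the ES expected utility is (13), not (14)).
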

\begin{proof}
The individual rationality of each ES and CS are proved respectively, as following: 

\noindent 
$\bullet$ \textbf{Individual rationality of ESs.} Note that we conduct risk analysis $ R_1^E $ to support that each ES only pre-signs forward contracts with an acceptable risk, otherwise, it will participate in spot market (e.g., lines 44-45, Algorithm 1). In addition, the payment offered by MU $ b_i $ can cover the service cost of $s^E_j$, while does not exceed its payment to a CS in serving this task (i.e., lines 16 and 36, Algorithm 1), which thus guarantees a non-negative expected utility.

\noindent 
$\bullet$ \textbf{Individual rationality of CSs.} Since we have $ p_{i,j,k}^{E\rightarrow C}\le c^C_{i,j,k} $ (e.g., constraint (30b)), making that the final payment from ES will definitely stay above the corresponding service cost of a CS, ensuring a non-negative expected utility. Moreover, our designed risk estimation and control, e.g., constraint (30c), encourages each CS $s_k^C$ to make sure the risk of resource requests exceeding its supply to be within a reasonable range (line 32, Algorithm 1).

As a summary, ESs and CSs are individual rational in our proposed ES-CS matching of OA-CLM.
\end{proof}
\begin{lem}
No blocking coalition can exist in the ES-CS matching of OA-CLM.
\end{lem}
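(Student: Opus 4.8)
The plan is to prove Lemma 6 by contradiction, treating the two coalition types of Definition~6 separately and adapting the argument of Lemma~4 to the many-to-many (M2M) setting, where now the CS side decides over an entire candidate set under both a capacity bound $G_k^C$ and the probabilistic risk constraint $R^C$. Throughout I will rely on the fact that Phase~2 of Algorithm~1 is a Gale--Shapley procedure in which ESs are the proposers: each $s_j^E$ proposes to CSs strictly in the order of its preference list $L_{i,j}^E$ (Definition of preference list of ES), which is sorted in non-ascending order of $\overline{u^{E\leftrightarrow C}}$, and raises the per-task payment $p_{i,j,k}^{E\rightarrow C}$ up to the ceiling $p_j^{max}$, while each $s_k^C$ retains precisely the feasible subset of current proposers that maximizes $\overline{u^C}\left(s_k^C,\cdot\right)$ subject to $G_k^C$ and $R^C$ (line~32, Algorithm~1). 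Lemma~2 guarantees this terminates, and Lemma~5 supplies the individual rationality needed to rule out trivially empty matchings.

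First I would dispose of \textbf{Type~1} coalitions. Suppose, for contradiction, that under the output matching $\varphi$ a CS $s_k^C$ and an ES set $\mathbb{S}$ form a Type~1 blocking coalition, so that (24) and (25) hold simultaneously. Condition (25) says some $s_j^E\in\mathbb{S}$ could strictly raise $\overline{u^{E\leftrightarrow C}}$ by discarding $\widetilde{\varphi}\left(s_j^E\right)$ and engaging $s_k^C$; since this means $s_k^C$ ranks above the discarded CSs in $L_{i,j}^E$, the proposal order would have forced $s_j^E$ to propose to $s_k^C$ before settling on $\widetilde{\varphi}\left(s_j^E\right)$. As $s_k^C\notin\varphi\left(s_j^E\right)$, that proposal must have been rejected, and by Condition~2 of Step~4 the payment had already climbed to $p_j^{max}$, so no incentive-compatible increase remains. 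The same holds for every ES in $\mathbb{S}$, hence all of them were \emph{available} candidates of $s_k^C$ at their ceiling payments. But $s_k^C$ keeps the feasible utility-maximizing subset among its proposers, so $\overline{u^C}\left(s_k^C,\varphi\left(s_k^C\right)\right)\ge\overline{u^C}\left(s_k^C,\mathbb{S}\right)$, directly contradicting (24).

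Next I would handle \textbf{Type~2} coalitions, where (26) and (27) are assumed. Inequality (27) requires $s_j^E$ to strictly benefit from appending $s_k^C$ \emph{without} relinquishing any matched CS, which is possible only if $s_j^E$ still has an offloadable task not associated with any CS. Following Phase~2, $s_j^E$ would then have kept proposing to $s_k^C$ for that task, raising its payment to $p_j^{max}$; the only reasons for $s_k^C$ to decline are saturated capacity $G_k^C$ or a binding risk constraint $R^C$. In either case $s_k^C$ cannot additionally accommodate the ESs in $\mathbb{S}$, so again $\overline{u^C}\left(s_k^C,\varphi\left(s_k^C\right)\right)\ge\overline{u^C}\left(s_k^C,\mathbb{S}\right)$, contradicting (26). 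If instead $s_j^E$ has no unmatched task, there is nothing extra to offload to $s_k^C$ and (27) cannot hold, which already rules out the coalition.

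The main obstacle, and the place where this proof genuinely departs from Lemma~4, is the two-sided coupling intrinsic to the M2M structure: the CS accepts or rejects over a \emph{set} under both the deterministic capacity $G_k^C$ and the \emph{probabilistic} risk $R^C$ (through $\vartheta_k$ in (21)), so a CS may legitimately reject an ES even while VMs sit idle. The argument must therefore explicitly split the rejection into the capacity-saturated case and the risk-saturated case and show that in both the blocking inequality on $\overline{u^C}$ fails. A second delicate point is that the ES's payment ceiling $p_j^{max}=\min\left\{p_{i,j}^{U\rightarrow E}\mid b_i\in\omega\left(s_j^E\right)\right\}$ is determined cross-layer by the MU--ES payments fixed in Phase~1; I would note that this ceiling is already fixed before Phase~2 runs, so it behaves as a constant in the present analysis and the monotone payment-raising argument goes through unchanged. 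Combining the two contradictions certifies that neither coalition type can occur, completing the proof.
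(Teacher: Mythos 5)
Your proposal is correct and follows the same overall skeleton as the paper's proof: contradiction, split into Type~1 and Type~2 coalitions, anchored on the Gale--Shapley dynamics of Phase~2 and the payment ceiling $p_j^{max}$. Where you diverge is in which half of the blocking definition you contradict. For Type~1, the paper works entirely on the ES side: it notes that if $s_j^E$ ended unmatched with $s_k^C$ its last payment must equal $p_j^{max}$, asserts the resulting inequality (64), chains it with the subset-monotonicity step (65) to obtain (66), and thereby contradicts the ES-side condition (25). You instead contradict the CS-side condition (24), arguing that every ES in $\mathbb{S}$ must have proposed to $s_k^C$ at its ceiling payment and that $s_k^C$'s retained set is utility-maximizing over its proposers, so $\overline{u^C}\left(s_k^C,\varphi\left(s_k^C\right)\right)\ge\overline{u^C}\left(s_k^C,\mathbb{S}\right)$. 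Both routes are legitimate at the paper's level of rigor; the paper's ES-side chain avoids relying on the CS's selection being optimal over the entire proposal history (a property of deferred acceptance you invoke implicitly), while your CS-side route is the more natural dual argument. For Type~2 your proof is actually slightly more careful than the paper's: the paper claims the \emph{only} reason $s_k^C$ can reject a ceiling-price proposal is lack of surplus VMs, whereas Algorithm~1 (line~32) also rejects when the risk constraint $R^C$ binds; your explicit split into the capacity-saturated and risk-saturated cases closes that small gap, and your observation that $p_j^{max}$ is fixed by Phase~1 before Phase~2 runs correctly justifies treating it as a constant, something the paper leaves unsaid.
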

\begin{proof}
We offer the proof to show there is no blocking coalition of either Type 1 or Type 2, as following: 

\noindent 
$\bullet$ \textbf{There is no Type 1 blocking coalition in the ES-CS matching of OA-CLM.} We offer the proof by analyzing the cases of contradiction. 

Under a given matching $ \varphi $, CS $ s_k^C $ and an ES set $ \mathbb{S}$ ($ \mathbb{S}\subseteq\bm{S^{C}}$) form a Type 1 blocking coalition $ \left(s_k^C;\mathbb{S}\right) $, as shown by (24) and (25).
If ES $ s_j^E $ does not sign a forward contract with $ s_k^C $, the payment offered by $ s_j^E $ per task during the last round can only be set by the maximum payment $ p^{max}_j $, as given by the following (63) and (64).
\begin{equation}\label{key}{\small
		\begin{aligned}
		p_{i,j,k}^{E\rightarrow C}\left\langle n \right\rangle = p^{max}_j,	
	\end{aligned}}
\end{equation}
\begin{equation}\label{key}{\small
	\begin{aligned}
	&\overline{u^{E\leftrightarrow C}}\left(s_j^E,\varphi\left(s_j^E\right)\backslash\widetilde{\varphi^\prime}\left(s_j^E\right)\cup\left\{s_k^C\right\}\right)<\\&\overline{u^{E\leftrightarrow C}}\left(s_j^E,\varphi\left(s_j^E\right)\right).
\end{aligned}}
\end{equation}

If $ s_j^E $ selects $ s_k^C $ as a possible resource provider, we have $ p_{j,k}^{E\rightarrow C}\left\langle n^{*} \right\rangle \leq p_{j,k}^{E\rightarrow C}\left\langle n \right\rangle = p^{max}_j $ and the following (65)
\begin{equation}\label{key}{\small
	\begin{aligned}
		&\overline{u^{E\leftrightarrow C}}\left(s_j^E,\varphi\left(s_j^E\right)\backslash\widetilde{\varphi^\prime}\left(s_j^E\right)\cup\left\{s_k^C\right\}\right) \geq\\& \overline{u^{E\leftrightarrow C}}\left(s_j^E,\varphi\left(s_j^E\right)\backslash\widetilde{\varphi^{\prime\prime}}\left(s_j^E\right)\cup\left\{s_k^C\right\}\right),\\
	\end{aligned}}
\end{equation}
where $ 
\widetilde{\varphi^{\prime\prime}}\left(s_j^E\right) \subseteq \widetilde{\varphi^{\prime}}\left(s_j^E\right) $. From (64) and (65), we can get
\begin{equation}\label{key}{\small
	\begin{aligned}
		&\overline{u^{E\leftrightarrow C}}\left(s_j^E,\varphi\left(s_j^E\right)\right)>\\& \overline{u^{E\leftrightarrow C}}\left(s_j^E,\varphi\left(s_j^E\right)\backslash\widetilde{\varphi^{\prime\prime}}\left(s_j^E\right)\cup\left\{s_k^C\right\}\right),
	\end{aligned}}
\end{equation}
which is contrary to (25), and can thereby prove the inexistence of Type 1 blocking coalition.

\noindent 
$\bullet$ \textbf{There is no Type 2 blocking coalition in the ES-CS matching of OA-CLM.}
Similarly, we also conduct the proof upon having contradiction.

 Under a given matching $ \varphi $, CS $ s_k^C $ and an ES set $ \mathbb{S}$ ($ \mathbb{S}\subseteq\bm{S^{E}}$) form a Type 2 blocking coalition $ \left(s_k^C;\mathbb{S}\right) $, as given by (26) and (27).
If $ s_k^C $ rejects ES $ s_j^E $, the expense offered by $ s_j^E $ during the last round should be $ p_{i,j,k}^{E\rightarrow C} = p_j^{max} $, where the only reason of the rejection is that $s_k^C$ has no surplus resources. However, the coexistence of (26) and (27) shows that CS $ s_k^C $ has adequate VMs to serve ESs, which contradicts the aforementioned assumption. Therefore, our proposed matching does not allow any Type 2 blocking coalition.

All in all, there is no blocking coalitions in our proposed ES-CS matching of OA-CLM.
\end{proof}
\begin{thm}
(Strong stability of OA-CLM) OA-CLM is strongly stable.
\end{thm}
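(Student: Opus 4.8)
The plan is to establish strong stability by directly assembling Lemmas~1--6, since Definition~9 (strong stability of OA-CLM) decomposes exactly into four requirements: individual rationality and the absence of blocking structures, separately for each of the two constituent matchings $\omega$ and $\varphi$. First I would invoke the convergence results (Lemmas~1 and~2) to guarantee that Phase~1 and Phase~2 of Algorithm~1 terminate within finitely many rounds, so that the returned assignments $\omega(\cdot)$ and $\varphi(\cdot)$ are well-defined matchings in the sense of Definitions~1 and~3; without this, ``stability of the output'' would not be a meaningful statement.

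Next I would dispatch the MU-ES side. Lemma~3 supplies individual rationality of $\omega$: every matched ES attains a positive expected utility $\overline{u^{U\leftrightarrow E}}(s_j^E,\omega(s_j^E))>0$, its capacity and risk constraints (29e)--(29g) hold, and each MU's risk constraints (28c)--(28d) hold so that no MU is coerced into an unfavourable contract. Lemma~4 then rules out both Type~1 and Type~2 blocking pairs, certifying the first half of Definition~9. Symmetrically, for the ES-CS side I would invoke Lemma~5 for individual rationality of $\varphi$ (positive expected utility for every matched ES and CS, with (29f) and (30c) respected) and Lemma~6 to exclude Type~1 and Type~2 blocking coalitions, certifying the second half. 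Combining these four certifications with Definition~9 yields the claim.

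The main obstacle is the cross-layer coupling introduced in Phase~3, since the two matchings are not independent: the number of MUs admitted by an ES in Phase~1 fixes its cloud-resource demand in Phase~2, while the cloud capacity an ES actually secures in Phase~2 feeds back as its effective supply in Phase~1. In particular, when an ES fails to procure enough cloud capacity it re-optimises its MU set $\mathbb{W}(s_j^E)$ over its overbooked supply $(1+\tau)(G_j^E+\sum_{s_k^C\in\varphi(s_j^E)}|\mathbb{B}_{j,k}|)$ (lines~42--43). The careful step is to argue that this final re-selection cannot manufacture a new blocking pair or coalition: because each ES re-selects precisely the MUs maximising its own expected utility subject to the updated capacity and risk constraints, the optimality conditions underlying Lemma~4 (no ES prefers to swap in a rejected MU) and Lemma~6 (no ES prefers a different CS bundle) still hold at the fixed point returned by Algorithm~1. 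I would close by noting that if any participant's risk remains intolerable at this fixed point, lines~44--45 remove it to the spot market, so every retained contract is simultaneously individually rational and unblocked; invoking Definition~9 then delivers strong stability.
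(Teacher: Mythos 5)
Your proposal is correct and takes essentially the same route as the paper, whose proof of this theorem is a one-line assembly of Lemmas~3--6 under Definition~9. Your additional invocation of the convergence results (Lemmas~1--2) and your explicit handling of the Phase~3 cross-layer re-selection are sound elaborations that the paper leaves implicit, but they do not change the underlying argument.
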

\begin{proof}
	Since the matching result of Algorithm 1 holds Lemma 3, Lemma 4, Lemma 5, and Lemma 6, according to Definition
	9, our proposed OA-CLM in the futures market is strongly stable.
\end{proof}

\begin{thm}
	(Competitive equilibrium associated with resource trading between MUs and ESs in OA-CLM) The trading between MUs and ESs can reach a competitive equilibrium.
\end{thm}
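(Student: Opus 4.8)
The plan is to verify, one at a time, the three conditions of Definition 10, relying on the deferred-acceptance structure of Phase 1 of Algorithm 1 together with the individual rationality and no-blocking-pair results already established in Lemmas 3 and 4. Starting with Condition 1 (cost coverage on the ES side): when each ES $s_j^E$ assembles its temporarily accepted set $\mathbb{W}\left(s_j^E\right)$ so as to maximize $\overline{u^{U\leftrightarrow E}}\left(s_j^E,\omega\left(s_j^E\right)\right)$, any MU $b_i$ with $p_{i,j}^{U\rightarrow E}<c_{i,j}^E$ would contribute a negative term $\alpha_i\left(1-\lambda_{i,j}\right)\left(p_{i,j}^{U\rightarrow E}-c_{i,j}^E\right)$ to the ES's expected utility. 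A self-interested ES therefore never admits such a MU, which is exactly constraint (29c); hence $b_i\in\omega\left(s_j^E\right)$ forces $c_{i,j}^E\le p_{i,j}^{U\rightarrow E}$.

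Next I would treat Condition 2 (each MU matched to its utility-maximizing ES). In every round a MU $b_i$ proposes to the top entry of its preference list $L_i^U$, which is sorted by non-ascending expected utility $\overline{u^U}\left(b_i,s_j^E\right)$ under the acceptable-risk constraints (28c)--(28d), and by the Gale--Shapley mechanism $b_i$ descends its list only after being rejected by more preferred ESs. Thus at termination $b_i$ is matched to the most preferred ES among those that accept it. To rule out that a strictly more preferred ES is still feasible, I would invoke the absence of blocking pairs (Lemma 4): were such an ES available, $\left(b_i;s_j^E\right)$ would form a Type 1 or Type 2 blocking pair, contradicting stability. Hence $b_i$ is indeed willing to trade with the ES yielding its maximum expected utility.

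Finally I would verify Condition 3 (unmatched MUs sit at the price ceiling). By the bidding rule of lines 15--18, a rejected MU raises its payment by $\mathrm{\Delta}p_i$ but never beyond $\min\left\{\text{E}[v_{i,j}],p_{i,j}^{max}\right\}$, where $p_{i,j}^{max}$ is the largest payment still satisfying $R_1^U\le\rho_1$. If $b_i$ remains unmatched when Phase 1 converges (Lemma 1), the only admissible explanation is that it can no longer increase its bid, so its final payment has reached either $\text{E}[v_{i,j}]$ or $p_{i,j}^{max}$, which is precisely Condition 3. Collecting the three verifications and appealing to Definition 10 then completes the proof.

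I expect the main obstacle to be Condition 2: because the overbooking-driven ES-side selection rejects MUs for reasons of capacity and risk, not merely price, one must argue carefully that such a rejection genuinely precludes any competitively priced match with a more preferred ES rather than reflecting a transient capacity fluctuation. This is exactly where the reduction to the no-blocking-pair property of Lemma 4 carries the real weight, since stability is what converts "best among those who accept" into "best attainable at an equilibrium price."
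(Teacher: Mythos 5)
Your proposal is correct and follows essentially the same route as the paper's own proof: verifying the three conditions of Definition 10 via constraint (29c) for cost coverage, the preference-list/Gale--Shapley proposal structure (line 5 of Algorithm 1) for utility maximization, and the capped bidding rule of lines 15--16 showing an unmatched MU's payment has reached $\min\left\{\text{E}[v_{i,j}],\,p_{i,j}^{max}\right\}$. Your additional appeal to the no-blocking-pair property (Lemma 4) to shore up Condition 2 is a sensible strengthening the paper omits, but it does not change the character of the argument.
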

\begin{proof}
	To prove this theorem, we discuss that the three conditions introduced by Definition 10 (given by Sec. 3.5) can be held in MU-ES trading. First, we set $ p_{i,j}^{U\rightarrow E} \geq c^E_{i,j} $, indicating that the service cost will be covered by the payment from MUs in each round (e.g., constraint (29c)). We demonstrate next that when MU $ b_i $ enters into a forward contract with an ES $ s_j^E $, $ b_i $ achieves maximum expected utility. This is attributed to the fact that $ b_i $ selects the ES based on its preference list $ L_i^U $ (e.g., line 5, Algorithm 1), ensuring the attainment of the maximum expected utility for $ b_i $. Then, if $ b_i $ is not matched to any ES $ s_j^E \in \bm{C_i} $, its payment can be equal to its expected valuation $ \text{E}[v_{i,j}] $ or the maximum payment $ p_{i,j}^{max} $ that $b_i$ can tolerate, under an accepted risk $ R_1^U $ (lines 15-16, Algorithm 1). 
	According to Definition 10, we can verify that the considered MU-ES trading in futures market can reach a competitive equilibrium.
\end{proof}
\begin{thm}
	(Competitive equilibrium associated with resource trading between ESs and CSs in OA-CLM) The trading between ESs and CSs can reach a competitive equilibrium.
\end{thm}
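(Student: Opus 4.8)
The plan is to verify, one by one, the three conditions of Definition 11, mirroring the argument already used for Theorem 2 but now working at the granularity of an individual task of each ES, since the ES--CS matching is many-to-many. Throughout I would rely on the structure of Phase 2 of Algorithm 1 together with the feasibility constraints of problems $\bm{\mathcal{F}^E}$ and $\bm{\mathcal{F}^C}$, and on the convergence and stability already established in Lemmas 2 and 6.

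First I would dispatch the cost-coverage condition. Whenever a CS $s_k^C$ is matched to an ES $s_j^E$ for some task, constraint (30b) of $\bm{\mathcal{F}^C}$ guarantees $c_{i,j,k}^C \le p_{i,j,k}^{E\rightarrow C}$; since every returned matching is feasible, this holds for every accepted task, and the CS-side acceptance step only admits tasks that raise its expected utility under the $G_k^C$ and $R^C$ limits, so no accepted task can undercut its service cost. This settles the first bullet of Definition 11. Next I would handle the ``each ES trades with the CS giving maximum expected utility'' condition. By Definition 6 the preference list $L_{i,j}^E$ orders candidate CSs in non-ascending order of the expected cross-tier utility $\overline{u^{E\leftrightarrow C}}$, and in Phase 2 each ES proposes, per task, to its most preferred available CS. Invoking convergence (Lemma 2) and the absence of blocking coalitions (Lemma 6), at termination every assigned task rests with a CS delivering the ES its maximal attainable expected utility, which is exactly the second bullet.

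For the third condition I would invoke the price-update rule of the ES-side decision step (lines 36--38 of Algorithm 1): a rejected task has its offer raised by $\Delta p_j$ toward the cap $p_j^{max}=\min\{p_{i,j}^{U\rightarrow E}\mid b_i\in\mathbb{W}(s_j^E)\}$, and otherwise the offer is frozen. Hence a task can remain unassigned only once its payment has already reached $p_j^{max}$, giving $p_{i,j,k}^{E\rightarrow C}=p_j^{max}$ for every unmatched task. Combining the three verified conditions and applying Definition 11 then yields the competitive equilibrium between ESs and CSs.

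The main obstacle I anticipate is the many-to-many, per-task bookkeeping. Unlike the M2O MU--ES setting, the ``agent'' whose equilibrium I must certify is a single task of an ES, yet the price cap $p_j^{max}$ is one scalar shared by all of $s_j^E$'s tasks and is itself inherited from Phase 1. The delicate step is to argue that decomposing the ES's M2M preference into independent per-task proposals preserves the ``maximum expected utility'' condition even when a single CS serves several tasks of the same ES, and that $p_j^{max}$ correctly plays the role that $\text{E}[v_{i,j}]$ and $p_{i,j}^{max}$ played in the MU--ES proof.
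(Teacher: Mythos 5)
Your proposal is correct and follows essentially the same route as the paper's proof: you verify the three conditions of Definition 11 via constraint (30b), selection by the task-specific preference list $L_{i,j}^E$, and the price-update rule that freezes accepted offers and raises rejected ones toward the cap $p_j^{max}$, so an unmatched task must have $p_{i,j,k}^{E\rightarrow C}=p_j^{max}$. Your additional appeals to Lemma 2 and Lemma 6, and your flagged concern about the per-task decomposition under a shared scalar cap, only make explicit what the paper leaves implicit in its (equally informal) argument.
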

\begin{proof}
		To prove this theorem, we also prove that the three conditions in Definition 11 (given in Sec. 3.5) can be held in ES-CS trading. First, we set $ p_{i,j,k}^{E\rightarrow C} \geq c^C_{i,j,k} $ to make sure that the service cost of the CS will be covered by its income paid from ESs in each round (i.e., constrain (29c)). We next demonstrate that when ES $ s_j^E $ engages in a forward contract for a task with a CS $ s_k^C $, $ s_j^E $ attains maximum expected utility. This is attributed to the fact that $ s_j^E $ selects the CS based on its task-specific preference list $ L_{i,j}^E $ (e.g., line 24, Algorithm 1), ensuring the achievement of the maximum expected utility for $ s_j^E $. Then, if a task of $ s_j^E $ has not been matched to any CS $ s_k^C $, the payment from $ s_j^E $ should be equal to its maximum payment $ p_j^{max} $ (lines 35-36, Algorithm 1). 
Based on Definition 11, we thereby verify that the considered ES-CS trading in futures market can achieve competitive equilibrium.
\end{proof}

\begin{thm}
(Weak Pareto optimality of OA-CLM) The proposed OA-CLM provides a weak Pareto optimality.
\end{thm}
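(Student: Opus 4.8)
The plan is to prove weak Pareto optimality by contradiction, leveraging the strong stability established in Theorem~1 together with the competitive-equilibrium characterizations of Theorems~2 and~3. First I would write the expected social welfare as the sum of the three parties' expected utilities, $W(\omega,\varphi)=\sum_{b_i\in\bm{B}}\overline{u^U}(b_i,\omega(b_i))+\sum_{s_j^E\in\bm{S^E}}\overline{u^E}(s_j^E,\omega(s_j^E),\varphi(s_j^E))+\sum_{s_k^C\in\bm{S^C}}\overline{u^C}(s_k^C,\varphi(s_k^C))$, and observe that every monetary transfer internal to the studied parties cancels pairwise: the payment $p_{i,j}^{U\rightarrow E}$ enters $\overline{u^U}$ negatively and $\overline{u^E}$ positively, and the penalties $q^{U\rightarrow E}$, $q^{E\rightarrow U}$ together with the payment $p_{i,j,k}^{E\rightarrow C}$ cancel likewise between ESs and MUs/CSs. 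After cancellation, $W$ reduces to an aggregate surplus built only from the expected valuations $\text{E}[v_{i,j}]$, the expected service costs $c_{i,j}^E$ and $c_{i,j,k}^C$, and the terms external to the studied market (the inherent-requestor payments and compensations of the CSs). The crucial consequence is that $W$ depends only on the matching structure, i.e., on which MUs are served by which ESs and which tasks are forwarded to which CSs, not on the negotiated prices.

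Next I would suppose, for contradiction, that the output $(\omega^{*},\varphi^{*})$ of Algorithm~1 is not weak Pareto optimal, so that some feasible matching $(\omega',\varphi')$ — one still satisfying the overbooking capacity (29e), the access limit $K_j^E$, and all risk constraints (28c), (28d), (29f), (29g), (30c) — attains strictly larger social welfare, $W(\omega',\varphi')>W(\omega^{*},\varphi^{*})$. Because $W$ is the price-independent surplus isolated above, this strict gain must originate from a structural difference between the two matchings: either some MU is served by a different ES, or some task is forwarded to a different CS, in a way that raises the net value valuation-minus-cost. I would localize the improvement to one such reassignment and then treat the two matching layers separately.

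The core step is to convert this aggregate welfare gain into a local profitable deviation that violates stability. Consider first the MU--ES layer: if moving $b_i$ to an ES $s_j^E$ increases $\text{E}[v_{i,j}]-c_{i,j}^E$ relative to the current assignment, then by the competitive-equilibrium price bounds $c_{i,j}^E\le p_{i,j}^{U\rightarrow E}\le\text{E}[v_{i,j}]$ (the conditions of Definition~10, guaranteed by Theorem~2) one can choose a feasible payment under which both $b_i$ and $s_j^E$ are strictly better off, so $(b_i;s_j^E)$ is a blocking pair in the sense of Definition~2 (Type~1 if $s_j^E$ must drop some MUs, Type~2 if it has spare overbooked capacity). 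An entirely analogous argument on the ES--CS layer, using the bound $c_{i,j,k}^C\le p_{i,j,k}^{E\rightarrow C}\le p_j^{max}$ from Theorem~3, produces a blocking coalition in the sense of Definition~4. Either outcome contradicts the strong stability proved in Theorem~1, and hence no Pareto improvement can exist.

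The hard part will be the rigorous translation of a global welfare increase into a \emph{single} profitable deviation while respecting the tight coupling between the two layers — the quantity of MUs matched in Phase~1 fixes each ES's resource demand in Phase~2, and the cloud resources acquired in Phase~2 feed back into the overbooked supply of Phase~1 through the Phase~3 interaction. I would address this by decomposing $W(\omega',\varphi')-W(\omega^{*},\varphi^{*})$ layer-by-layer, verifying that the improving matching remains feasible after isolating the deviation (so that the risk thresholds and the capacity $K_j^E$ are not breached), and checking that the penalty and compensation transfers cancel exactly in expectation, so that the surplus accounting underpinning the contradiction is airtight.
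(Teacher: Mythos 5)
Your overall skeleton coincides with the paper's: assume a Pareto improvement, convert it into a blocking pair (MU--ES layer, Definition~2) or a blocking coalition (ES--CS layer, Definition~4), and contradict the strong stability of Theorem~1 (Lemmas~4 and~6). However, the machinery you build on top of that skeleton contains a genuine flaw. Your pivotal claim --- that after pairwise cancellation of transfers the expected social welfare $W$ depends only on the matching structure and not on the negotiated prices --- is false in this model. The transfers $p_{i,j}^{U\rightarrow E}$, $p_{i,j,k}^{E\rightarrow C}$ and the penalties do cancel algebraically, but they reappear through the probability weights: the expected welfare contains terms of the form $a_i\bigl(1-\text{E}[\lambda_{i,j}]\bigr)\bigl(\text{E}[v_{i,j}]-c_{i,j}^E\bigr)$ and $\text{E}[\beta_{i,j,k}]\bigl(c_{i,j}^E-c_{i,j,k}^C\bigr)$, and by the derivations in Appendices~B.1 and~B.2 both $\text{E}[\lambda_{i,j}]$ and $\text{E}[\beta_{i,j,k}]$ are determined by ranking MUs (resp.\ contracts) according to the expected utility they bring the ES, namely $a_i\bigl(\text{E}[v_{i,j}]-p_{i,j}^{U\rightarrow E}\bigr)+(1-a_i)q^{U\rightarrow E}$, which is an explicit function of the prices. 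So which MUs get volunteered and which cloud contracts get broken --- and hence $W$ itself --- shifts with the payments, and your reduction ``welfare gain $\Rightarrow$ purely structural reassignment'' does not go through.

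The second problem is the step you yourself flag as hard and then do not execute: localizing a global welfare gain to a single profitable pairwise deviation. This step fails in general --- under capacity constraints a pairwise-stable matching need not maximize aggregate surplus, since a superior matching may differ by a cycle of reassignments none of which is individually improving or feasible, and the paper's blocking-pair conditions (22)--(23) are moreover evaluated at the ES's utility under the algorithm's prices, not at prices you are free to choose. The paper sidesteps all of this by arguing at the level of individual participants rather than aggregate surplus: in its proof, a Pareto improvement is read as some participant strictly preferring a different partner according to its preference list ($L_i^U$ or $L_{i,j}^E$), which immediately yields a blocking pair or coalition and contradicts Lemmas~4 and~6 via Theorem~1. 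That argument is admittedly informal, but it never needs price-independence of $W$ or a global-to-local surplus decomposition; your attempt, by trying to prove the stronger utilitarian statement from stability alone, rests on two claims of which one is false and the other unproven.
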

\begin{proof}
	Review the design of OA-CLM, each participant (e.g., MU, ES, CS) makes decisions according to its preference list. If the subsequent choice ranks higher in the participant's preference list, they will switch their matching target in the following round. Such a switch operation indicates that backing to the previous choice will not bring it with any larger expected utility. For a MU $ b_i $, if there exists an ES $ s_j^E $ that can offer a higher expected utility than its current matched ES, $ b_i $ and $ s_j^E $ are more inclined to establish a matching relationship, this, however, will form a blocking pair. Since Theorem 1 verifies that our proposed OA-CLM is stable while allowing no blocking pairs. There exists no Pareto improvement, when the procedure of MU-ES matching terminates. Similarly, we can infer that there is no Pareto improvement in ES-CS matching (e.g., Lemma 6 and Theorem 1). As a summary, our studied OA-CLM game is weak Pareto optimal.
\end{proof}

\section{Details of OS-CLM Game}
\subsection{Problem Formulation}
In the spot market, we are interested in the M2O matching $ \omega^{\prime}(.) $ between an ES and several MUs, and the M2M matching $ \varphi^{\prime}(.) $ between ESs and CSs. Note that each MU, ES, and CS aims to \textit{maximize its overall practical utility}, which can be mathematically formulated as the following optimization problems.

\noindent
$\bullet$ \textbf{Optimization of MUs' practical utility:} 
\begin{subequations}\label{key}{\small
	\begin{align}
		\bm{\mathcal{F}^{U{\prime}}}:~~&\underset{{\omega^{\prime}\left(b_i\right)}}{\max}~u^{U\prime}\left(b_i,\omega^{\prime}\left(b_i\right)\right)\tag{68}\\
		\text{s.t.}~~~~&
		\omega^{\prime}\left(b_i\right)=\left\{\left\{s_j^E\right\},\varnothing \right\},~\forall s_j^E\subseteq \bm{C_i^{\prime}}\tag{68a}\\
		&p_{i,j}^{U\prime\rightarrow E\prime}\le v_{i,j},~\text{if}~\omega^{\prime}\left(b_i\right)=\left\{s_j^E\right\}\tag{68b}
	\end{align}}
\end{subequations}

\noindent 
where constraint (68a) guarantees that $b_i$ can only be mapped to one ES, and constraint (68b) limits the payment of a MU to an ES within a certain range, to hold the individual rationality.

\noindent
$\bullet$ \textbf{Optimization of ESs' practical utility:}
\begin{subequations}\label{key}{\small
	\begin{align}
		\bm{\mathcal{F}^{E{\prime}}}:~~&\underset{{\omega^{\prime}\left(s_j^E\right),\varphi^{\prime}\left(s_j^E\right)}}{\max} u^{E\prime}\left(s_j^E,\omega^{\prime}\left(s_j^E\right),\varphi^{\prime}\left(s_j^E\right)\right)\tag{69}\\
		\text{s.t.}~~~~&
		\omega^{\prime}\left(s_j^E\right)\subseteq \bm{B^{\prime}},~
		\varphi^{\prime}\left(s_j^E\right)\subseteq \bm{S^{C{\prime}}}\tag{69a}\\
		&p_{i,j}^{U\prime\rightarrow E\prime}\geq c_{i,j}^E, ~\forall b_i \in \omega^{\prime}\left(s_j^E\right)\tag{69b}\\
		&p_{i,j,k}^{E\prime\rightarrow C\prime}\le p_{i,j}^{U\prime\rightarrow E\prime},~\forall s_k^C \in \varphi^{\prime}\left(s_j^E\right)\tag{69c}\\
		&|\omega^{\prime}\left(s_j^E\right)|+\sum_{b_i\in\omega\left(s_j^E\right)}\alpha_i\le\sum_{s_j^E\in\varphi\left(s_k^C\right)}\sum_{b_i\in \mathbb{B}_{j,k}}\beta_{i,j,k}\nonumber\\&+G_j^E+\sum_{s_j^E\in\varphi^\prime\left(s_k^C\right)}{|\mathbb{B}^\prime_{j,k}|}\le K_j^E\tag{69d}
	\end{align}}
\end{subequations}

\noindent 
where $ |\mathbb{B}^\prime_{j,k}| $ denotes the number of tasks offloaded from $ s_j^E $ to $ s_k^C $. In problem $\bm{\mathcal{F}^{E{\prime}}}$, constraint (69a) ensures that $ \omega^\prime\left(s_j^E\right) $ and $ \varphi^\prime\left(s_j^E\right) $ belong to set $ \bm{B^\prime} $ and $ \bm{S^{C{\prime}}} $; constraints (69b) and (69c) guarantee that ES $ s_j^E $ can obtain a non-negative utility; while constraint (69d) describes the limited number of MUs served by ES $ s_j^E $.

\noindent
$\bullet$ \textbf{Optimization of CSs' practical utility:} 
\begin{subequations}\label{key}{\small
	\begin{align}
		\bm{\mathcal{F}^{C^{\prime}}}:~~&\underset{{\varphi^\prime\left(s_k^C\right)}}{\max}~ u^{C\prime}\left(s_k^C,\varphi^{\prime}\left(s_k^C\right)\right)\tag{70}\\
		\text{s.t.}~~~~&
		\varphi^{\prime}\left(s_k^C\right){\subseteq\bm{{S}^{{E}\prime}}}\tag{70a}\\
		&p_{i,j,k}^{E\prime\rightarrow C\prime}\geq c_{i,j,k}^C,~\forall s_j^E \in \varphi^\prime\left(s_k^C\right)\tag{70b}\\
		&\sum_{s_j^E\in\varphi\left(s_k^C\right)}\sum_{b_i\in \mathbb{B}_{j,k}}{\beta_{i,j,k}}+\sum_{s_j^E\in\varphi^\prime\left(s_k^C\right)}{|\mathbb{B}^\prime_{j,k}|}+\varepsilon_k\le G_k^C\tag{70c}
	\end{align}}
\end{subequations}

\noindent 
where constraint (70a) shows that $ \omega^\prime\left(s_k^C\right) $ belongs to set $ \bm{{S}^{E\prime}} $, constraint (70b) ensures a non-negative utility for each CS, and constraint (70c) limits the number of requestors served by CS $ s_k^C $.

\subsection{Key Definitions}
Service provisioning in our designed spot market over CAMENs is modeled by an OS-CLM game. Similar to OA-CLM, it comprises of two matching types: \textit{i)} MU-ES matching, and \textit{ii)} ES-CS matching. We start with MU-ES matching, which is in the form of a M2O matching tailored to the characteristics of our designed spot market, based on a series of definitions given below. 
\begin{Defn}(Many-to-one matching in the spot market)
	A M2O matching $ \omega^\prime $ in the spot market constitutes a mapping between MU set $ \bm{B^\prime} $ and ES set $\bm{S^{E{\prime}}} $, while satisfying the following conditions:
	
	\noindent
	$\bullet$ for each MU $ b_i\in\bm{B^\prime} $, $ \omega^\prime\left(b_i\right){\subseteq\bm{C^\prime_i}} $, $\bm{C^\prime_i}\subseteq\bm{S^{E{\prime}}} $, $ \left|\omega^\prime\left(b_i\right)\right|=1 $;
	
	\noindent
	$\bullet$ for each ES $ s_j^E\in\bm{C_i^\prime} $, $ \omega^\prime\left(s_j^E\right)\subseteq\bm{B^\prime} $;
	
	\noindent
	$\bullet$ for MU $ b_i $ and ES $ s_j^E $, $ b_i\in\omega^\prime\left(s_j^E\right) $ if and only if $ s_j^E\in\omega^\prime\left(b_i\right) $.
\end{Defn}

As the middle layer of CAMENs, ESs need to trading with MUs and CSs, to the convenience of expression, we divide (38) into two parts: \textit{i)} $ u^{U\prime\leftrightarrow  E\prime}\left(s_j^E,\omega^\prime\left(s_j^E\right)\right) = \sum_{b_i\in\omega^{\prime}\left(s_j^E\right)}\left(p_{i,j}^{E\prime \rightarrow C\prime}-c_{i,j}^E\right) $;
 and \textit{ii)} $ u^{U\prime\leftrightarrow E\prime}\left(s_j^E,\varphi^\prime\left(s_j^E\right)\right)=-\sum_{s_k^C\in\varphi^{\prime}\left(s_j^E\right)}\sum_{b_i\in\mathbb{B}^\prime_{j,k}}\left(p_{i,j,k}^{E\prime\rightarrow C\prime}-c_{i,j}^E\right) $. We next define the concept of \textit{blocking pair}, representing a significant factor that may lead to instability of a M2O matching. 

\begin{Defn}(Blocking pairs of MU-ES matching in OS-CLM)
	Under a given M2O matching $ \omega^\prime $, MU $ b_i $ and ES $ s_j^E $ form a blocking pair $ \left(b_i;s_j^E\right) $, for which we consider two types:

\noindent 
	\textbf{Type 1 blocking pair}: Type 1 blocking pair satisfies the following condition:
	\begin{equation}\label{key}{\small
		\begin{aligned}
		&{u^{U\prime\leftrightarrow E\prime}}\left(s_j^E,\omega^\prime\left(s_j^E\right)\backslash \widetilde{\omega^\prime}\left(s_j^E\right)\cup \left\{b_i\right\}\right)
		\\&>{u^{U\prime\leftrightarrow E\prime}}\left(s_j^E, \omega^\prime\left(s_j^E\right)\right),
	\end{aligned}}
	\end{equation}
	which indicates that $ s_j^E $ can increase its utility by giving up some MUs, e.g., $ \widetilde{\omega^\prime}\left(s_j^E\right) $, while serving $ b_i $ instead.
	
	\noindent 
	\textbf{Type 2 blocking pair}: Type 2 blocking pair satisfies the following condition:
	\begin{equation}\label{key}{\small
			\begin{aligned}
				{u^{U\prime\leftrightarrow E\prime}}
			\left(s_j^E,\omega^\prime\left(s_j^E\right)\cup\left\{b_i\right\}\right)>{u^{U\prime\leftrightarrow E\prime}}\left(s_j^E,\omega^\prime\left(s_j^E\right)\right),	
		\end{aligned}}
	\end{equation}
	which makes a matching unstable since $ s_j^E $ can serve more MUs under its resource constraint, to improve its utility.
\end{Defn}

The ES-CS matching in the spot market can be formalized relying on the following definitions.
\begin{Defn}(Many-to-many matching in the spot market)
	A M2M matching $ \varphi^\prime $ in the spot market denotes a mapping between $ \bm{S^{E{\prime}}} $ and $ \bm{S^{C{\prime}}} $, while satisfying the following conditions:
	
	\noindent
	$\bullet$ for each CS $ s_k^C\in\bm{S^{C{\prime}}} $, $ \varphi^\prime\left(s_k^C\right){\subseteq\bm{S^{E{\prime}}}} $;
	
	\noindent
	$\bullet$ for each ES $ s_j^E\in\bm{S^{E{\prime}}} $, $ \varphi^\prime\left(s_j^E\right)\subseteq\bm{S^{C{\prime}}} $;
	
	\noindent
	$\bullet$ for CS $ s_k^C $ and ES $ s_j^E $, $ s_k^C\in\varphi^\prime\left(s_j^E\right) $ if and only if $ s_j^E\in\varphi^\prime\left(s_k^C\right) $.
\end{Defn}

We next introduce the concept of \textit{blocking coalition}, which is a significant factor that can make a M2M matching unstable.

\begin{Defn}(Blocking coalition of ES-CS matching in OS-CLM)
	Given a M2M matching $ \varphi^\prime $, CS $ s_k^C $ and ES set $ \mathbb{S^\prime}\subseteq\bm{S^{E{\prime}}}$ form a blocking pair $ \left(s_k^C;\mathbb{S^\prime}\right) $, for which we consider two types:
	
	\noindent	
	\textbf{Type 1 blocking coalition}: Type 1 blocking coalition can be incurred when the following conditions are met:
	\begin{equation}\label{key}{\small
			\begin{aligned}
			{u^{C\prime}}\left(s_k^C,\mathbb{S^\prime}\right)>{u^{C\prime}}\left(s_k^C,\varphi^\prime\left(s_k^C\right)\right)	
		\end{aligned}}
	\end{equation}
	\begin{equation}\label{key}{\small
		\begin{aligned}
		&u^{E\prime\leftrightarrow C\prime}\left(s_j^E,\varphi^\prime\left(s_j^E\right)\backslash\widetilde{\varphi^\prime}\left(s_j^E\right)\cup\left\{s_k^C\right\}\right)>\\&{u^{E\prime\leftrightarrow C\prime}}\left(s_j^E,\varphi^\prime\left(s_j^E\right)\right).	
		\end{aligned}}
	\end{equation}
	
	\noindent
	\textbf{Type 2 blocking coalition}: Type 2 blocking coalition can be incurred when the following conditions are met:
	\begin{equation}\label{key}{\small
			\begin{aligned}
			{u^{C\prime}}\left(s_k^C,\mathbb{S^\prime}\right)>{u^{C\prime}}\left(s_k^C,\varphi^\prime\left(s_k^C\right)\right).	
		\end{aligned}}
	\end{equation}
	\begin{equation}\label{key}{\small
			\begin{aligned}
					{u^{E\prime\leftrightarrow C\prime}}\left(s_j^E,\varphi^\prime\left(s_j^E\right)\cup\left\{s_k^C\right\}\right)>{u^{E\prime\leftrightarrow C\prime}}\left(s_j^E,\varphi^\prime\left(s_j^E\right)\right).
		\end{aligned}}
	\end{equation}
\end{Defn}
It can be construed that Type 1 blocking coalition leads to the unstability of a matching, since the ES is incentivized to choose a another set of CSs to achieve a higher utility. Similarly, Type 2 blocking coalition can also make a matching unstable, due to the ES can trade with more CSs to increase its utility.
\subsection{Algorithm Design}
\begin{algorithm*}[!t]
	{\small
	\setstretch{0.1} 
	\caption{Proposed OS-CLM in the spot market}
	\LinesNumbered 
	\textbf{Initialization: : $ m\leftarrow1 $, $ n\leftarrow1 $, $ p_{i,j}^{U\prime\rightarrow E\prime}\left\langle 1 \right\rangle\leftarrow p_{i,j}^{min}$, $ p_{j,k}^{E\prime\rightarrow C\prime}\left\langle 1 \right\rangle\leftarrow p_{i,j,k}^{min} $, $ flag_i\leftarrow 1 $, $ flag_j\leftarrow 1 $, for $\forall b_i\in\bm{B^{\prime}}$, $\forall s_j^E\in\bm{{S}^{{E}\prime}}$, $\forall s_k^C\in\bm{{S}^{{C}\prime}}$ }\\
	\textit{\textbf{\% Phase 1. MU-ES M2O matching game}}
	
	\While{$ {flag}_{i} $}{
		\textbf{Calculate} $ L_i^{U\prime}$
		
		\textbf{Determine} $ \mathbb{W}^\prime(b_i)\leftarrow $ choose the top ES in $ L_i^{U\prime} $
		
		\textbf{$ {flag}_{i} \leftarrow 0 $}
		
		\If{$ \forall\mathbb{W}^\prime\left( b_{i} \right) \neq \varnothing $}{
			\For{$ 
				s^E_{j} \leftarrow \mathbb{W}^\prime\left( b_{i} \right) $}{$ b_{i} $ sends a proposal including $ p_{i,j}^{U\prime\rightarrow E\prime}\left\langle m \right\rangle $ and $ r^U_{i} $ to $ s^E_j $}
			\While{$ \Sigma_{b_{i}\in\bm{B^\prime}}{flag}_{i} > 0 $}{
				Collect proposals from the MUs in $ \bm{B^\prime} $, e.g., using $ \widetilde{\mathbb{W}^\prime}\left(s_j^E\right) $ to include the MUs that send proposals to $ s_j^E $
				
				$ \mathbb{W}^\prime\left(s_j^E\right)\leftarrow $ choose MUs from $ \widetilde{\mathbb{W}^\prime}\left(s_j^E\right) $ to maximize its utility under limited $ \left(K_j^E-\sum_{b_i\in\omega\left(s_j^E\right)}\alpha_i\right) $ constraint
				
				$ s_j^E $ temporally accept the MUs in $ \mathbb{W}^\prime\left(s_j^E\right) $, and rejects the others
			}
			\For{
				$ b_i \in \mathbb{W}^\prime\left(s_j^E\right) $
			}{
				\If{$ b_i $ is rejected by $ s_j^E $ and $ p_{i,j}^{U\prime\rightarrow E\prime}\left\langle m \right\rangle<v_{i,j} $}{
					$ p_{i,j}^{U\prime\rightarrow E\prime}\left\langle m+1 \right\rangle\leftarrow \text{min}\left\{p_{i,j}^{U\prime\rightarrow E\prime}\left\langle m \right\rangle+\mathrm{\Delta}p_i,v_{i,j}\right\} $}
				\Else{$ p_{i,j}^{U\prime\rightarrow E\prime}\left\langle m+1 \right\rangle\leftarrow p_{i,j}^{U\prime\rightarrow E\prime}\left\langle m \right\rangle $}
			}
			
			\If{there exists $
				p_{i,j}^{U\prime\rightarrow E\prime}\left\langle m+1 \right\rangle\neq p_{i,j}^{U\prime\rightarrow E\prime}\left\langle m \right\rangle $, $\forall b_{i} \in \mathbb{W}^\prime\left(s_j^E\right) $}{
				$ {flag}_i\leftarrow 1 $,
				$ m\leftarrow m+1 $
			}
		}
	}
	\textit{\textbf{\% Phase 2. ES-CS M2M matching game}}
	
	\While{$ flag_j $}{
		\textbf{Calculate} $ L_{i,j}^{E\prime} $
		
		\textbf{Determine} $ \mathbb{Y}^\prime\left(s_j^E\right)\leftarrow $ choose $ s_j^E $'s interested CSs from $ L_{i,j}^{E\prime} $
		
		\textbf{$ {flag}_{j} \leftarrow 0 $}\
		
		\If{$ \forall \mathbb{Y}^\prime \left(s_k^C \right)\neq \varnothing $}{
			\For{$ s_k^C\in \mathbb{Y}^\prime\left(s_j^E\right) $}{$ s_j^E $ sends the payments $ p_{i,j,k}^{E\prime\rightarrow C\prime} $ and corresponding requested amount of resource $ r_{i}^U $ to $ s_k^C $}

			\While{$ \Sigma_{s^E_{j}\in\bm{S^{E{\prime}}}}{flag}_{j} > 0 $}{
				Collect proposals from the ESs in $ \bm{S^{E{\prime}}} $, e.g., using $ \widetilde{\mathbb{Y}^\prime} \left(s_k^C\right)$ to include ESs' tasks that send proposals to $ s_k^C $;
				
				$ \mathbb{Y}^\prime \left(s_k^C\right)\leftarrow $ choose CSs from $ \widetilde{\mathbb{Y}^\prime}  \left(s_k^C\right) $ to maximize the CS's utility under limited $\left(G_k^C-\sum_{s_j^E\in\varphi\left(s_k^C\right)}\sum_{b_i\in \mathbb{B}_{j,k}}{\beta_{i,j,k}}-\sum_{s_j^E\in\varphi^\prime\left(s_k^C\right)}{|\mathbb{B}^\prime_{j,k}|}-\varepsilon_k\right) $ VMs constraint.
				
				$ s_k^C $ temporally accept the ESs in $ \mathbb{Y}^\prime \left(s_k^C\right) $, and rejects the others
			}
			\For{
				$ s_j^E \in \mathbb{Y}^\prime \left(s_k^C\right) $
			}{
				\If{a task from $ s_j^E $ is rejected by $ s_k^C $ and $ p_{i,j,k}^{E\prime\rightarrow C\prime}\left\langle n \right\rangle<p_{i,j}^{U\prime\rightarrow E\prime}\left\langle m \right\rangle $}{
					$ p_{i,j,k}^{E\prime\rightarrow C\prime}\left\langle n+1 \right\rangle\leftarrow \text{min}\left\{p_{i,j,k}^{E\prime\rightarrow C\prime}\left\langle n \right\rangle+\mathrm{\Delta}p_j,p_{i,j}^{U\prime\rightarrow E\prime}\left\langle m \right\rangle \right\} $}
				\Else{$ p_{i,j,k}^{E\prime\rightarrow C\prime}\left\langle n+1 \right\rangle\leftarrow p_{i,j,k}^{E\prime\rightarrow C\prime}\left\langle n \right\rangle $}
			}
			
			\If{there exists $
				p_{i,j,k}^{E\prime\rightarrow C\prime}\left\langle n+1 \right\rangle\neq p_{i,j,k}^{E\prime\rightarrow C\prime}\left\langle n \right\rangle $, $\forall s_j^E \in \mathbb{Y}^\prime \left(s_k^C\right) $}{
				$ {flag}_j\leftarrow 1 $,
				$ n\leftarrow n+1 $
			}
		}
	}
	\% \textit{\textbf{Phase 3. Cross-layer interaction}}
	
	\If{$ |\omega^{\prime}\left(s_j^E\right)|+\sum_{b_i\in\omega\left(s_j^E\right)}\alpha_i>\sum_{s_j^E\in\varphi\left(s_k^C\right)}\sum_{b_i\in \mathbb{B}_{j,k}}\beta_{i,j,k}+G_j^E+\sum_{s_j^E\in\varphi^\prime\left(s_k^C\right)}{|\mathbb{B}^\prime_{j,k}|}$}{$ \mathbb{W}^\prime \left(s_j^E\right)\leftarrow $ $ s_j^E $ selects some MUs from $ \mathbb{W}^\prime\left(s_j^E\right) $ to maximize its utility based on ES's resource supply, i.e., $\sum_{s_j^E\in\varphi\left(s_k^C\right)}\sum_{b_i\in \mathbb{B}_{j,k}}\beta_{i,j,k}+G_j^E+\sum_{s_j^E\in\varphi^\prime\left(s_k^C\right)}{|\mathbb{B}^\prime_{j,k}|}-\sum_{b_i\in\omega\left(s_j^E\right)}\alpha_i $} 
$\omega^\prime\left(s_j^E\right)\leftarrow\mathbb{W}^\prime\left(s_j^E\right)$, $\omega^\prime(b_i)\leftarrow\mathbb{W}^\prime(b_i)$, $\varphi^\prime\left(s_j^E\right)\leftarrow\mathbb{Y}^\prime \left(s_j^E\right)$, $\varphi^\prime\left(s_k^C\right)\leftarrow\mathbb{Y}^\prime \left(s_k^C\right)$\\
\textbf{Return:} $\omega^\prime\left(s_j^E\right)$, $\omega^\prime(b_i)$, $\varphi^\prime\left(s_j^E\right)$, $\varphi^\prime\left(s_k^C\right)$
}
\end{algorithm*}
Our proposed OS-CLM facilitates two-way negotiations involving MUs, ESs, and CSs, where the trading between MUs and ESs, as well as between ESs and CSs, can significantly influence each other, resulting in an iterative and intricate trading process. To provide a clearer illustration of how the OS-CLM game operates, the entire matching procedure involves three key phases with multiple rounds. It specific details can be found in Algorithm 2. 

\textit{i) MU-ES matching game (Phase 1):} We first introduce the M2O matching between MUs and ESs and adopt the Gale-Shapley algorithm\cite{9667258,9687261} to construct a stable mapping in the considered spot market.

\noindent
\textbf{Step 1. Initialization:} At the beginning of each round, each MU $ b_i $ sets its payment as $ p_{i,j}^{U\prime\rightarrow E\prime}\left\langle 1 \right\rangle=p_{i,j}^{min}$ (line 1), where $ p_{i,j}^{min} $ denotes the initial payment from $ b_i $ to $ s_j^E $. In addition, each MU announces its requests to ESs according to its preference list (Definition 17).

\begin{Defn}(Preference list of MU) The preference list $ L_{i}^{U\prime} $ of a MU $ b_i $ regarding ESs represents a vector of $ s^E_j \in \bm{C_i^\prime} $, sorted by $ {u^{U\prime}}\left(b_i,s_j^E\right) $ under a non-ascending order:
	\begin{equation}\label{key}{\small
			\begin{aligned}
					L_{i}^{U\prime}= \left\{s_j^E~|~\text{{\rm non-ascending on }}  {u^{U\prime}}\left(b_i,s_j^E\right),\forall s_j^E\in \bm{C_i^\prime}\right\},	
		\end{aligned}}
	\end{equation}
\end{Defn}
\noindent
where we use $\mathbb{W}^\prime(b_i) \in L_{i}^{U\prime}$ to represent the favorite ES of $ b_i $ (e.g., $|\mathbb{W}^\prime(b_i) |=1$), and $\mathbb{W}^\prime\left(s_j^E\right)$ to indicate the set of MUs that are temporarily accepted by $ s_j^E $ during the matching procedure.

\noindent
\textbf{Step 2. Proposal of MUs:} At round $m$, each MU $ b_i $ chooses the top ES in $ L_{i}^{U\prime} $, and records it in $\mathbb{W}^\prime(b_i)$ (line 5). Then, $ b_i $ sends a proposal to ES in $\mathbb{W}^\prime(b_i)$ (denoted by $s^E_j$ for analytical simplicity), including its payments $ p_{i,j}^{U\prime\rightarrow E\prime}\left\langle m \right\rangle $ and the required amount of resource $ r_i^U $ (line 9).

\noindent
\textbf{Step 3. MU selection on ESs' side:} We use set $\widetilde{\mathbb{W}^\prime}\left(s_j^E\right)$ to collect the information from MUs, each ES $ s_j^E $ then determines a collection of temporarily MUs, as recorded
in set $\mathbb{W}^\prime\left(s_j^E\right)$, where $\mathbb{W}^\prime\left(s_j^E\right)\subseteq \widetilde{\mathbb{W}^\prime}\left(s_j^E\right)$ that enables the maximum utility under the maximum number of concurrently accessed users minus the number of contractual MUs (from OA-CLM) which have engaged in the current practical transaction (e.g., constraint (69d)). Then, each $ s_j^E $ informs MUs in set $\widetilde{\mathbb{W}^\prime}\left(s_j^E\right)$ about its determinations in the current round (lines 10-13).

\noindent
\textbf{Step 4. Decision-making on MUs' side:} After obtaining decisions from each ES $ s_j^E \in \mathbb{W}^\prime(b_i) $, $ b_i $ considers the following conditions:

\textbf{Condition 1.} The payment from $ b_i $ remains unchanged, when one of the following conditions is met (line 18): \textit{i)} $ b_i $ is accepted by an ES $ s_j^E $; \textit{ii)} $ b_i $'s current payment $ p_{i,j}^{U\prime\rightarrow E\prime}\left\langle m \right\rangle $ equals to its valuation $v_{i,j}$;

\textbf{Condition 2.} If $ b_i $ is rejected by an ES $ s_j^E $, its current payment $ p_{i,j}^{U\prime\rightarrow E\prime}\left\langle m \right\rangle $ stays below its valuation $ v_{i,j} $, $ b_i $ will put up its payment to $ s_j^E $ in the next round (line 16).

\noindent
\textbf{Step 5. Repeat:} If payments of all the MUs stay unchanged from the $ (m-1)^\text{th} $ round to the $ m^\text{th} $ round, the matching will be terminated at round $ m $ ($ \Sigma_{b_{i}\in\bm{B^\prime}}{flag}_{i} = 0 $, line 6). Otherwise, OS-CLM repeats the above steps (e.g., lines 3-20) in the next round.

\textit{ii) ES-CS matching game (Phase 2):} We next introduce the M2M matching between the ESs and CSs and adopt the Gale-Shapley algorithm\cite{9667258,9687261}, contributing to construct a stable mapping for purchasing cloud resources in our spot market, as follows:

\noindent
\textbf{Step 1. Initialization:} At the beginning of each round, each ES $ s_j^E $ sets its payment for a task as $ p_{i,j,k}^{E\prime \rightarrow C\prime}\left\langle 1 \right\rangle=p_{i,j,k}^{min}$, where $p_{i,j,k}^{min}$ denotes the initial payment from ES $ s_j^E $. We apply $\mathbb{Y}^\prime\left(s_j^E\right)$ to indicate the tasks that are temporarily accepted by CS $ s_j^E $ and $\mathbb{Y}^\prime\left(s_k^C\right)$ to represent CSs that are interested in the tasks of ES $s^E_j $, based on each task's (namely, MU's) preference list, according to the following definition.
\begin{Defn}(Preference list of ES) The preference list $ L_{i,j}^{E\prime} $ of an ES $s_j^E$ in handling the task of $b_i$ regarding CSs (e.g., which CS can the task of $b_i$ be offloaded to for further processing, from $s_j^E$) is a vector of $ s^C_k \in \bm{S^{C\prime}} $, sorted by $ {u^{E\prime\leftrightarrow C\prime}}\left (s_j^E,\varphi^\prime\left(s_j^E\right)\right ) $ and following a non-ascending order: 	
	\begin{equation}\label{key}{\small
		\begin{aligned}
			&L_{i,j}^{E\prime}=
			\\ & \left\{s_k^C~|~\text{{\rm non-ascending on }} {u^{E\prime\leftrightarrow C\prime}}\left (s_k^C,\varphi^\prime\left(s_j^E\right)\right ),\forall s^C_k \in \bm{S^{C\prime}}\right\}.
		\end{aligned}}
	\end{equation}
\end{Defn}

\noindent
\textbf{Step 2. Proposal of ESs:} At round $ n $, each ES that need to borrow cloud resource reports its information, including each task's payment $ p_{i,j,k}^{E\prime\rightarrow C\prime}\left\langle n \right\rangle $ and the corresponding required amount of resources $r_i^U $ to CSs in $\mathbb{Y}^\prime(s_j^E)$ (line 28). 

\noindent
\textbf{Step 3. ES selection on CSs' side:} After collecting the information from ESs in set $\widetilde{\mathbb{Y}^\prime}\left(s_k^C\right)$, each CS $ s_k^C $ determines a collection of temporarily ESs' tasks, as recorded by set $\mathbb{Y}^\prime\left(s_k^C\right)$, where $\mathbb{Y}^\prime\left(s_k^C\right)\subseteq \widetilde{\mathbb{Y}^\prime}\left(s_k^C\right)$, that enable the maximum utility under surplus resource supply (i.e., the resource supply of $s_k^C$ minus the practical demand of its contractual ESs (obtained from OA-CLM) and its inherent requesters during a practical transaction, denoted as $G_k^C-\sum_{s_j^E\in\varphi\left(s_k^C\right)}\sum_{b_i\in \mathbb{B}_{j,k}}{\beta_{i,j,k}}-\sum_{s_j^E\in\varphi^\prime\left(s_k^C\right)}{|\mathbb{B}^\prime_{j,k}|}-\varepsilon_k $) Then, each $ s_k^C $ informs ESs about its determinations in the current round (lines 29-32).

\noindent
\textbf{Step 4. Decision-making on ESs' side:} After obtaining decisions from CS $ s_k^C \in \mathbb{Y}^\prime\left(s_j^E\right) $, ES $ s_j^E $ considers the following conditions:

\textbf{Condition 1.} If the task from a MU associated with ES $ s_j^E $ is accepted by $ s_k^C $ or its current payment $ p_{i,j,k}^{E\prime\rightarrow C\prime}\left\langle n \right\rangle $ equals to its payment from $ b_i $ (i.e., $ p_{i,j}^{U\prime\rightarrow E\prime} $), the payment from $ s_j^E $ remains unchanged, i.e., $ p_{i,j,k}^{E\prime\rightarrow C\prime}\left\langle n+1 \right\rangle=p_{i,j,k}^{E\prime\rightarrow C\prime}\left\langle n \right\rangle $
(line 37);

\textbf{Condition 2.} If the task from a MU associated with ES $ s_j^E $ is rejected by $ s_k^C $ and its current payment $ p_{i,j,k}^{E\prime\rightarrow C\prime}\left\langle n \right\rangle $ is still below its payment from $b_i$, (i.e., $ p_{i,j}^{U\prime\rightarrow E\prime} $), $s_j^E$ increases its payment to $ s_k^C $ in the next round (line 35).

\noindent
\textbf{Step 6. Repeat:} If payments of all the ESs stay unchanged from the $ (n-1)^\text{th} $ round to the $ n^\text{th} $ round, the matching will be terminated at round $ n $ (e.g., $ \Sigma_{s^E_{j}\in\bm{S^{E{\prime}}}}{flag}_{j} = 0 $, line 25). Otherwise, OS-CLM repeats the above steps (e.g., lines 22-39) in the next round.

\textit{iii) Cross-layer interaction (Phase 3):} This step describes the mutual impacts between the above-discussed two phases. When ES $ s_j^E $ did not purchase sufficient resources from CSs in Phase 2, i.e., $ |\omega^{\prime}\left(s_j^E\right)|+\sum_{b_i\in\omega\left(s_j^E\right)}\alpha_i>\sum_{s_j^E\in\varphi\left(s_k^C\right)}\sum_{b_i\in \mathbb{B}_{j,k}}\beta_{i,j,k}+G_j^E+\sum_{s_j^E\in\varphi^\prime\left(s_k^C\right)}{|\mathbb{B}^\prime_{j,k}|}$, $ s_j^E $ will further choose some MUs from $ \mathbb{W}^\prime\left(s_j^E\right) $ to maximize its utility based on its resource supply, e.g., $\sum_{s_j^E\in\varphi\left(s_k^C\right)}\sum_{b_i\in \mathbb{B}_{j,k}}\beta_{i,j,k}+G_j^E+\sum_{s_j^E\in\varphi^\prime\left(s_k^C\right)}{|\mathbb{B}^\prime_{j,k}|}-\sum_{b_i\in\omega\left(s_j^E\right)}\alpha_i $ (lines 41-42), which makes the previous two phases mutually impact each other (i.e., the quantity of matched MUs for each ES during Phase 1 determines the resource demand for each ES in Phase 2, whereas the quantity of cloud resources acquired by an ES in Phase 2 directly influences the resource supply of that ES in Phase 1).
\subsection{Property Analysis}
We are interested in the following targets when designing OS-CLM mechanism. 

\begin{Defn}(Individual rationality of MU-ES matching in spot market) For both MUs and ESs, a matching $ \omega^\prime $ is individual rational when the following conditions are satisfied:
	
	\noindent
	$\bullet$ for MUs: each MU $ b_i\in\bm{B^\prime} $ receives a non-negative utility, i.e., constraint (68b) is satisfied.
	
	\noindent
	$\bullet$ for ESs: each ES $ s_j^E $ matched to a MU set $ \omega^\prime\left(s_j^E\right) $ can achieve a non-negative utility, i.e., constraint (69b) is satisfied, and then the number of MUs served by ES $ s_j^E $ should not exceed access $ K_j^E $, i.e., constraint (69d) is satisfied.
\end{Defn}

\begin{Defn}(Individual rationality of ES-CS matching) For both ESs and CSs, a matching $ \varphi^\prime $ is individual rational when the following conditions are satisfied:
	
	\noindent
	$\bullet$ for CSs: each CS $ s_k^C $ matched to an ES set $ \varphi^\prime\left(s_k^C\right) $ can achieve a non-negative utility, i.e.,
	\begin{equation}\label{key}{\small
			\begin{aligned}
				{u^{C\prime}}\left(s_k^C,\varphi^\prime\left(s_k^C\right)\right)>0,
		\end{aligned}}
	\end{equation}
	and its resource demand should not exceed its resource $ G_k^C $, i.e., constraint (70c) is satisfied.
	
	\noindent
	$\bullet$ for ESs: each ES $ s_j^E $ matched to a CS set $ \varphi^\prime\left(s_j^E\right) $ can achieve a non-negative utility, i.e., constraint (69c) is satisfied.
\end{Defn}

\begin{Defn}
	(Strong stability of OS-CLM) The proposed OS-CLM is strongly stable if MU-ES matching and ES-CS matching are individually rational and have no blocking pair or coalition.
\end{Defn}

\begin{Defn}(Competitive equilibrium associated with resource trading between MUs and ESs in OS-CLM) The trading between MUs and ESs reaches a competitive equilibrium if the following conditions are satisfied:
	
	\noindent
	$\bullet$ For each ES $ s_j^E \in \bm{S^{E{\prime}}} $, if $ s_j^E $ is associated with a MU $ b_i\in \bm{B^\prime} $, then $ c_{i,j}^E\leq p_{i,j}^{U\prime\rightarrow E\prime} $;
	
	\noindent
	$\bullet$ For each MU $ b_i\in \bm{ B^\prime }$, $ b_i $ is willing to trade with the ES that can bring it with the maximum utility;
	
	\noindent
	$\bullet$ For each MU $ b_i\in \bm{ B^\prime} $, if $ b_i $ is not associated with any ES, then the payment paid by $b_i$ is equal to it obtains valuation $ v_{i,j} $ through offloading the task to ES, i.e., $ v_{i,j} = p_{i,j}^{U\prime\rightarrow E\prime} $.
\end{Defn}

		\begin{Defn}(Competitive equilibrium associated with resource trading between ESs and CSs in OS-CLM) The trading between ESs and CSs reaches a competitive equilibrium if the following conditions are satisfied:
	
	\noindent
	$\bullet$ For each CS $ s_k^C \in \bm{ S^{C\prime}} $, if $ s_k^C $ is associated with an ES $ s_j^E\in \bm{S^{E{\prime}}} $, then $ c_{i,j,k}^C\leq p_{i,j,k}^{E\prime\rightarrow C\prime} $,
	
	\noindent
	$\bullet$ For each ES $ s_j^E\in \bm{S^{E{\prime}}} $, $ s_j^E $ is willing to trade with the CS that can bring it with the maximum utility,
	
	\noindent
	$\bullet$ For each ES $ s_j^E\in \bm{S^{E{\prime}}} $, if a task of $b_i$ associated with ES $ s_j^E $ is not trading with any CSs, then $ p_{i,j,k}^{E\prime\rightarrow C\prime} = p_{i,j}^{U\prime\rightarrow E\prime} $.
\end{Defn} 

\begin{Defn}(Weak Pareto optimality of OS-CLM) The proposed OS-CLM is weak Pareto optimal if there is no Pareto improvement.
\end{Defn}

We next examine the aforementioned property of OS-CLM, as outlined below:

\begin{lem}
	(Convergence of MU-ES matching of OS-CLM) MU-ES matching of Algorithm 2 converges within finite rounds.
\end{lem}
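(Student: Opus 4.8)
The plan is to reproduce the convergence argument used for Phase 1 of OA-CLM (Lemma 1), adapted to the spot-market setting where risk constraints are absent. Since the MU-ES stage of OS-CLM is a many-to-one matching driven by the deferred-acceptance (Gale-Shapley) mechanism, the essential observation is that each MU's offered payment $p_{i,j}^{U\prime\rightarrow E\prime}$ evolves monotonically and is bounded above, so only finitely many distinct price updates can occur before the loop guard fires.

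First I would establish monotonicity and boundedness of the payment sequence. By the update rule on line 16 of Algorithm 2, whenever MU $b_i$ is rejected and its current payment is still strictly below its valuation $v_{i,j}$, its payment rises by at least the step $\Delta p_i$, capped at $v_{i,j}$; otherwise (line 18) the payment is left unchanged. Hence the sequence $\{p_{i,j}^{U\prime\rightarrow E\prime}\langle m\rangle\}_m$ is non-decreasing and never exceeds the finite valuation $v_{i,j}$. Note that, unlike the futures case, the ceiling here is the raw valuation $v_{i,j}$ rather than $\text{E}[v_{i,j}]$ or a risk-limited maximum, which is the only substantive difference from Lemma 1.

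Next I would bound the number of rounds in which any payment can change. Because each effective increment is at least $\Delta p_i>0$ while the payment is capped by $v_{i,j}$, MU $b_i$ can raise its offer at most $\lceil (v_{i,j}-p_{i,j}^{min})/\Delta p_i\rceil$ times before it either gets accepted by some ES or its payment reaches $v_{i,j}$ and freezes. Summing over the finite set $\bm{B^\prime}$ of participating MUs yields a finite upper bound on the number of rounds in which any flag $flag_i$ can be reset to $1$. Finally I would invoke the termination test: once every MU's payment is identical across two consecutive rounds, the guard $\Sigma_{b_i\in\bm{B^\prime}}\,flag_i=0$ (line 6) is satisfied and Phase 1 halts, so the matching converges within finitely many rounds.

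Since this is a routine monotone-and-bounded argument, I expect no real obstacle; the one detail worth checking is that the cap at $v_{i,j}$ is genuinely an absorbing state, so that no further updates are generated once a payment attains its valuation. This follows directly from the $\min\{\cdot,v_{i,j}\}$ in line 16 together with Condition~1 of Step~4, which holds the payment fixed thereafter, guaranteeing that the finite count of increments cannot be circumvented by later reactivation.
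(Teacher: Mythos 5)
Your proof is correct and follows essentially the same route as the paper's: the paper likewise argues that, under the Gale-Shapley-style procedure of Algorithm 2, each MU's payment is either accepted or monotonically rises until it hits the valuation cap $v_{i,j}$, after which no further updates occur and the loop guard terminates the matching. Your version merely makes the paper's one-line argument quantitative by bounding the number of increments by $\lceil (v_{i,j}-p_{i,j}^{min})/\Delta p_i\rceil$ per MU, which is a sound elaboration rather than a different approach.
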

\begin{proof}
	As the MU-ES matching refers to a M2O matching, we utilize the Gale-Shapley algorithm to solve the matching game\cite{9667258,9687261}. After a finite number of rounds, each MU's payment can either be accepted or reach its obtains valuation $ v_{i,j} $ through offloading the task to ES $s_j^E$, which supports the property of convergence.
\end{proof}
\begin{lem}
	(Convergence of ES-CS matching of OS-CLM) ES-CS matching of Algorithm 2 converges within finite rounds.
\end{lem}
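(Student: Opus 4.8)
The plan is to reuse the bounded-monotone-price argument that already established Lemma~2 (convergence of ES-CS matching of OA-CLM) and the convergence of MU-ES matching of OS-CLM, since Phase~2 of Algorithm~2 is again a many-to-many (M2M) matching driven by a Gale-Shapley style deferred-acceptance procedure over the per-task prices $p_{i,j,k}^{E\prime\rightarrow C\prime}$. First I would note that each outer round is well defined: the CS-side selection loop collects the finite set of proposing tasks $\widetilde{\mathbb{Y}^\prime}(s_k^C)$ at every CS $s_k^C$ and returns a utility-maximizing subset under the residual-VM constraint, so the tentative assignment within a round is produced after a bounded number of operations.

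The core of the argument is the monotonicity and boundedness of the prices across rounds. Whenever a task of $s_j^E$ is rejected by $s_k^C$, its price is updated (line~35) to $\min\{p_{i,j,k}^{E\prime\rightarrow C\prime}\langle n\rangle+\mathrm{\Delta}p_j,\,p_{i,j}^{U\prime\rightarrow E\prime}\langle m\rangle\}$, and otherwise it is left unchanged (line~37); hence every price forms a non-decreasing sequence that is capped by the fixed ceiling $p_{i,j}^{U\prime\rightarrow E\prime}$, namely the payment $s_j^E$ itself receives from $b_i$ in Phase~1. Because each strict increase is by the constant positive step $\mathrm{\Delta}p_j$, any single price can rise at most $\lceil(p_{i,j}^{U\prime\rightarrow E\prime}-p_{i,j,k}^{min})/\mathrm{\Delta}p_j\rceil$ times before hitting its ceiling, which is finite.

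Combining these observations yields termination. After finitely many rounds each per-task price is either accepted by some CS or has reached its ceiling, so line~35 stops firing for every $s_j^E$, no price differs between two consecutive rounds, and the termination test $\Sigma_{s^E_{j}\in\bm{S^{E{\prime}}}}{flag}_{j}=0$ is satisfied, whereupon the outer while loop of Phase~2 halts. The only point I expect to require care is justifying that the ceiling $p_{i,j}^{U\prime\rightarrow E\prime}$ is \emph{constant} while Phase~2 runs --- unlike the OA-CLM case where the cap $p_j^{max}$ is a fixed constant by construction, here the cap is inherited from Phase~1 --- but this is guaranteed by the sequential phase ordering of Algorithm~2, so the ceiling is frozen before Phase~2 begins and the standard bounded-monotone-sequence reasoning for Gale-Shapley convergence applies verbatim.
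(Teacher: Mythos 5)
Your proposal is correct and follows essentially the same route as the paper's proof, which likewise argues that each per-task payment $p_{i,j,k}^{E\prime\rightarrow C\prime}$ either gets accepted or monotonically rises until it reaches the ceiling $p_{i,j}^{U\prime\rightarrow E\prime}$, forcing the Gale-Shapley style loop to terminate. Your added details --- the explicit bound $\lceil(p_{i,j}^{U\prime\rightarrow E\prime}-p_{i,j,k}^{min})/\mathrm{\Delta}p_j\rceil$ on price increments and the observation that the ceiling is frozen by the sequential ordering of Phases 1 and 2 --- merely make rigorous what the paper states in one line.
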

\begin{proof}
	Since the ES-CS matching refers to as a M2M matching, we adopt the Gale-Shapley algorithm to solve the matching game \cite{9667258,9687261}. After a finite number of rounds, the payment of $s_j^E$'s each task can be either accepted or reaches the payment from $ b_i $ to $ s_j^E $ (i.e., $ p_{i,j}^{U\prime\rightarrow E\prime} $), and thus supports the property of convergence.
\end{proof}
\begin{lem}
	(Individual rationality of MU-ES matching of OS-CLM) All the MUs and ESs are individual rational in the spot market.
\end{lem}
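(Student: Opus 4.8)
The plan is to follow the structure of the proof of Lemma 3 (individual rationality of MU-ES matching of OA-CLM), but to work directly with the deterministic practical utilities $u^{U\prime}$ and $u^{U\prime\leftrightarrow E\prime}$ instead of their expectations, and to drop the risk-related arguments, which makes the reasoning considerably cleaner. By Definition 22 it suffices to verify three things once the matching has terminated (which is guaranteed by Lemma 7): that constraint (68b) holds for every matched MU, that $p_{i,j}^{U\prime\rightarrow E\prime}\ge c_{i,j}^E$ (constraint (69b)) for every retained MU, and that the access bound (69d) is respected.

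First I would establish the MUs' side. In Algorithm 2 each MU initializes its offer at $p_{i,j}^{min}$ (line 1) and only ever raises it through the update on line 16, where the next-round payment is $\min\{p_{i,j}^{U\prime\rightarrow E\prime}\langle m\rangle+\mathrm{\Delta}p_i, v_{i,j}\}$. Hence the final payment never exceeds the valuation $v_{i,j}$, so constraint (68b) holds by construction, and therefore $u^{U\prime}(b_i,\omega^\prime(b_i))=v_{i,j}-p_{i,j}^{U\prime\rightarrow E\prime}\ge 0$ for every matched MU.

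Next I would handle the ESs. For cost coverage I would argue from the selection rule in the MU-selection step (lines 10--13): each $s_j^E$ keeps the subset $\mathbb{W}^\prime(s_j^E)$ that maximizes $u^{U\prime\leftrightarrow E\prime}(s_j^E,\cdot)=\sum_{b_i}(p_{i,j}^{U\prime\rightarrow E\prime}-c_{i,j}^E)$. Because this objective is additive across MUs, any MU contributing a negative term $p_{i,j}^{U\prime\rightarrow E\prime}-c_{i,j}^E<0$ would strictly lower the ES's utility and is therefore never retained, so every $b_i\in\omega^\prime(s_j^E)$ satisfies constraint (69b) and the ES enjoys a non-negative utility. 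For the capacity condition, the same selection step is carried out under the residual access limit $K_j^E-\sum_{b_i\in\omega(s_j^E)}\alpha_i$, and the Phase 3 trimming (lines 41--42) further ensures that the number of served MUs stays within the ES's residual resource supply, so (69d) holds. Invoking Definition 22 then yields individual rationality of the MU-ES matching of OS-CLM.

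The step I expect to be the main obstacle is the ES cost-coverage claim. Unlike the MU cap, which is written explicitly into line 16, constraint (69b) is only \emph{implied} by the utility-maximizing selection; I must therefore argue rigorously that the ES objective is separable over MUs and monotone in each accepted payment, so that admitting a loss-making MU can never be part of an optimal choice. Once this separability and monotonicity are stated precisely, the remaining conditions follow immediately from the algorithm's feasibility constraints, exactly mirroring the OA-CLM case.
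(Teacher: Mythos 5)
Your proof is correct and follows the same overall architecture as the paper's proof: the MU side is handled identically, via the capped price update $p_{i,j}^{U\prime\rightarrow E\prime}\left\langle m+1 \right\rangle\leftarrow \min\left\{p_{i,j}^{U\prime\rightarrow E\prime}\left\langle m \right\rangle+\mathrm{\Delta}p_i,\,v_{i,j}\right\}$ on line 16 of Algorithm 2, which forces the final payment below the valuation and hence $u^{U\prime}\ge 0$. Where you genuinely diverge is the ES cost-coverage step: the paper simply \emph{asserts} that $c_{i,j}^E\le p_{i,j}^{U\prime\rightarrow E\prime}$ holds because it is imposed as constraint (69b) of the formulation (and it additionally invokes (69c), the cap $p_{i,j,k}^{E\prime\rightarrow C\prime}\le p_{i,j}^{U\prime\rightarrow E\prime}$, for the CS-payment side), whereas you \emph{derive} (69b) from the optimality of the ES's selection in lines 10--13, using the additivity of $u^{U\prime\leftrightarrow E\prime}$ over MUs and the fact that the cardinality constraint is only an upper bound, so a loss-making MU is never retained. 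That derivation is sound --- since the empty selection is feasible and dropping any MU with $p_{i,j}^{U\prime\rightarrow E\prime}-c_{i,j}^E<0$ strictly increases the additive objective, no optimal selection contains one --- and it is arguably more rigorous than the paper's bare assertion, at the cost of the separability observation you yourself flagged. For the capacity condition your treatment is in fact slightly more complete: the paper cites only the residual access limit in line 12, while you also invoke the Phase 3 trimming (lines 41--42), which is the step that actually enforces the resource-supply part of (69d). Two small corrections: the definition you should invoke is Definition 19 (individual rationality of MU-ES matching in the spot market), not Definition 22, which states competitive equilibrium --- the three conditions you list are exactly those of Definition 19, so your argument's content is unaffected; and your omission of (69c) is defensible under Definition 19, which only requires (69b) and (69d), even though the paper's own proof mentions it.
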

\begin{proof}	
	We offer the analysis on proving the individual rationality of MUs and ESs.
	
	\noindent 
	$\bullet$ \textbf{Individual rationality of MUs.} Since we have $ p_{i,j}^{U\prime\rightarrow E\prime}\left\langle m+1 \right\rangle\leftarrow \text{min}\left\{p_{i,j}^{U\prime\rightarrow E\prime}\left\langle m \right\rangle+\mathrm{\Delta}p_i,v_{i,j}\right\} $ in Algorithm 2, the final payment of each MU will definitely be lower than or equal to its valuation $ v_{i,j} $ through offloading the task to ES, which guarantees a non-negative utility. 

\noindent 
$\bullet$ \textbf{Individual rationality of ESs.} We set $ c_{i,j}^E\leq p_{i,j}^{U\prime\rightarrow E\prime} $ and $ p_{i,j,k}^{E\prime\rightarrow C\prime}\leq p_{i,j}^{U\prime\rightarrow E\prime} $ (e.g., (69b) and (69c)), ensuring that ES $ s_j^E $ obtains payment $p_{i,j}^{U\prime \leftarrow E\prime}$ from MU $ b_i $ dose exceed its service cost $c^E_{i,j}$ and dose exceed its payment $p_{i,j,k}^{E\prime \leftarrow C\prime}$ that offloaded task of $b_i$ to CS $ s_k^C $, which guarantees a non-negative utility for $ s_j^E $. And the line 12 in Algorithm 2, which ensures that the resource demand of $ s_j^E $ within a reasonable range.

As a summary, MUs and ESs are individual rationality in our proposed MU-ES matching of OS-CLM.
\end{proof}
\begin{lem}
 No blocking pair can exist in the MU-ES matching of OS-CLM.
\end{lem}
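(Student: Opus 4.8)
The plan is to mirror the structure of the proof of Lemma~4 (its OA-CLM counterpart), arguing separately that neither a Type~1 nor a Type~2 blocking pair of Definition~14 can survive, each by contradiction. The only conceptual change is that OS-CLM operates on \emph{realized} quantities rather than expectations: there is no risk term, so the offer of a rejected MU is driven up to the single cap $v_{i,j}$ (rather than $\min\{\text{E}[v_{i,j}],p_{i,j}^{max}\}$ as in the futures market), and the ES compares realized utilities $u^{U\prime\leftrightarrow E\prime}$ instead of $\overline{u^{U\leftrightarrow E}}$. This makes the spot-market argument a direct, slightly simpler specialization of the futures-market one.

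For the \textbf{Type~1} case, I would suppose $(b_i;s_j^E)$ forms a Type~1 blocking pair with $b_i\notin\omega^\prime(s_j^E)$. First I would observe that, by Condition~2 of the decision-making step (line~16 of Algorithm~2), a rejected MU keeps raising its offer until $p_{i,j}^{U\prime\rightarrow E\prime}=v_{i,j}$; hence at termination $b_i$ is presenting its maximal admissible payment to $s_j^E$. Next, because in every round each ES chooses $\mathbb{W}^\prime(s_j^E)$ to maximize $u^{U\prime\leftrightarrow E\prime}$ subject to its capacity (line~12), and it still declined to swap out any subset $\widetilde{\omega^\prime}(s_j^E)$ in favour of $b_i$ even at this top payment, the realized utility satisfies $u^{U\prime\leftrightarrow E\prime}(s_j^E,\omega^\prime(s_j^E))\ge u^{U\prime\leftrightarrow E\prime}(s_j^E,\omega^\prime(s_j^E)\backslash\widetilde{\omega^\prime}(s_j^E)\cup\{b_i\})$. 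Since $u^{U\prime\leftrightarrow E\prime}$ is nondecreasing in each $p_{i,j}^{U\prime\rightarrow E\prime}$, the payment used in any earlier round is no larger than $v_{i,j}$, so the same inequality persists at termination. This contradicts the defining inequality of a Type~1 blocking pair, ruling it out.

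For the \textbf{Type~2} case, I would instead suppose $(b_i;s_j^E)$ satisfies the Type~2 condition, i.e.\ $s_j^E$ could strictly improve by \emph{adding} $b_i$ without dropping anyone. Again, the rejection of $b_i$ at the capped payment $v_{i,j}$ can have only one cause: $s_j^E$ has exhausted its admissible capacity, i.e.\ the number of MUs it serves has reached the bound $K_j^E-\sum_{b_i\in\omega(s_j^E)}\alpha_i$ enforced in line~12 (constraint~(69d)). But the Type~2 inequality presumes $s_j^E$ can accommodate $b_i$ on top of $\omega^\prime(s_j^E)$, i.e.\ that a free VM/subcarrier exists, which contradicts the exhausted-capacity conclusion. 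Hence no Type~2 blocking pair exists either, and combining the two cases gives the claim.

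The step I expect to be the main obstacle is making the greedy-optimality argument of the Type~1 case airtight: I must justify that the per-round, Gale--Shapley-style selection of $\mathbb{W}^\prime(s_j^E)$ yields a terminal assignment in which \emph{no single swap} can raise the ES's realized utility, and that monotonicity of $u^{U\prime\leftrightarrow E\prime}$ in the offered payment licenses the comparison of the terminal payment $v_{i,j}$ against the intermediate payments present when the rejection occurred (mirroring the role of equations~(60)--(62) in the OA-CLM proof). The capacity bookkeeping in the Type~2 case is a secondary subtlety, since the relevant spot-market bound is $K_j^E$ net of the contractual MUs already engaged from OA-CLM, and I would verify that this is exactly the quantity tracked in line~12.
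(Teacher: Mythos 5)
Your proposal is correct and follows essentially the same route as the paper's proof of this lemma: a contradiction argument for each blocking-pair type, with the rejected MU's payment driven to the single cap $v_{i,j}$ (mirroring equations (79)--(82) of the paper, which are the spot-market specialization of the OA-CLM argument you cite), the ES's per-round utility-maximizing selection ruling out Type~1, and exhausted capacity (the bound of line~12 / constraint~(69d)) contradicting the Type~2 condition. The monotonicity and capacity-bookkeeping points you flag as obstacles are handled implicitly in the paper in exactly the way you describe, so no genuinely different idea is involved.
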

\begin{proof}
	We offer the proof to show there is no blocking pair of either Type 1 or Type 2, as following:
	
	\noindent 
	$\bullet$ \textbf{There is no Type 1 blocking pair in the MU-ES matching of OS-CLM.} We offer the proof by considering contradiction.
	
	 Under a given matching $ \omega^\prime $, MU $ b_i $ and ES $ s_j^E $ form a Type 1 blocking pair $ \left(b_i;s_j^E\right) $.
	If MU $ b_i $ does not trading with ES $ s_j^E $, the payment of MU $ b_i $ during the last round can only be its valuation $v_{i,j}$, as given by (79) and (80).
	\begin{equation}\label{key}{\small
		\begin{aligned}
			p_{i,j}^{U\prime\rightarrow E\prime}\left\langle m \right\rangle = v_{i,j},
		\end{aligned}}
	\end{equation}
	\begin{equation}\label{key}\small{
		\begin{aligned}
		{u^{U\prime\leftrightarrow E\prime}}\left(s_j^E,\omega^\prime\left(s_j^E\right)\backslash \widetilde{\omega^\prime}\left(s_j^E\right)\cup \left\{b_i\right\}\right)<{u^{U\prime\leftrightarrow E\prime}}\left(s_j^E,\omega^\prime\left(s_j^E\right)\right).
	\end{aligned}}
	\end{equation}
	
	If ES $ s_j^E $ selects MU $ b_i $, we have $ p_{i,j}^{U\prime\rightarrow E\prime}\left\langle m^{*} \right\rangle\leq p_{i,j}^{U\prime\rightarrow E\prime}\left\langle m \right\rangle = v_{i,j} $ and the following (81)
	\begin{equation}\label{key}{\small
		\begin{aligned}
			& {u^{U\prime\leftrightarrow E\prime}}\left(s_j^E,\omega^\prime\left(s_j^E\right)\backslash \widetilde{\omega^\prime}\left(s_j^E\right)\cup \left\{b_i\right\}\right) \geq\\& {u^{U\prime\leftrightarrow E\prime}}\left(s_j^E,\omega^\prime\left(s_j^E\right)\backslash \widetilde{\omega^{\prime\prime}}\left(s_j^E\right)\cup \left\{b_i\right\}\right),\\
		\end{aligned}}
	\end{equation}
	where $ 
	\widetilde{\omega^{\prime\prime}}\left(s_j^E\right) \subseteq \widetilde{\omega^{\prime}}\left(s_j^E\right) $. From (80) and (81), we can get
	\begin{equation}\label{key}{\small
		\begin{aligned}
			&{u^{U\prime\leftrightarrow E\prime}}\left(s_j^E,\omega^\prime\left(s_j^E\right)\right)>\\& {u^{U\prime\leftrightarrow E\prime}}\left(s_j^E,\omega\left(s_j^E\right)\backslash \widetilde{\omega^{\prime\prime}}\left(s_j^E\right)\cup \left\{b_i\right\}\right),
		\end{aligned}}
	\end{equation}
	which is contrary to (70), which thus ensures the inexistence of Type 1 blocking pair.
	
	\noindent 
	$\bullet$ \textbf{There is no Type 2 blocking pair in the MU-ES matching of OS-CLM.}
	We conduct the proof by considering cases of contradiction. 
	
	Under a given matching $ \omega^\prime $, MU $ b_i $ and ES $ s_j^E $ form a Type 2 blocking pair $ \left(b_i;s_j^E\right) $, as shown by (71).
	If ES $ s_j^E $ rejects MU $ b_i $, the payment of $ b_i $ during the last round can be set by $ p_{i,j}^{U\prime\rightarrow E\prime}\left\langle m \right\rangle = v_{i,j} $, where the only reason of such a rejection is that the overall resource demand exceeds its supply. However, the coexistence of (71) shows that ES $ s_j^E $ has an adequate resource to serve MUs, which contradicts our previous assumption. Therefore, we prove that there is no Type 2 blocking pair.
	
	As a summary, no blocking pair can exist in our proposed MU-ES matching in OS-CLM. 
\end{proof}

\begin{lem}
	(Individual rationality of ES-CS matching of OS-CLM) All the ESs and CSs are individual rational in the spot market.
\end{lem}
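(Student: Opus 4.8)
The plan is to mirror the two-part structure already used for the futures-market counterpart (Lemma 5), proving individual rationality separately for the ESs and the CSs, but exploiting the key simplification that the spot market operates \emph{after} the uncertainties $\alpha_i$, $\gamma_{i,j}$, and $\varepsilon_k$ have been realized. Consequently no risk constraints appear in Definition 19, and I only need to certify that each side's realized utility is non-negative and that the CS resource budget is respected. This makes the argument strictly easier than Lemma 5, reducing it to a sequence of substitutions into the spot-market utilities (38) and (39) together with the feasibility conditions enforced by Algorithm 2.

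First I would establish individual rationality of the ESs. The decision rule in lines 35--37 of Algorithm 2 caps the per-task payment offered to a CS at the payment received from the corresponding MU, i.e., $p_{i,j,k}^{E\prime\rightarrow C\prime}\le p_{i,j}^{U\prime\rightarrow E\prime}$, which is exactly constraint (69c); combined with (69b), giving $p_{i,j}^{U\prime\rightarrow E\prime}\ge c_{i,j}^E$. Substituting these into the ES utility (38), I would argue term-by-term: a locally processed task contributes $p_{i,j}^{U\prime\rightarrow E\prime}-c_{i,j}^E\ge 0$, while an offloaded task in $\mathbb{B}^\prime_{j,k}$, appearing in both sums of (38), contributes $(p_{i,j}^{U\prime\rightarrow E\prime}-c_{i,j}^E)-(p_{i,j,k}^{E\prime\rightarrow C\prime}-c_{i,j}^E)=p_{i,j}^{U\prime\rightarrow E\prime}-p_{i,j,k}^{E\prime\rightarrow C\prime}\ge 0$ by (69c). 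Summing over all tasks yields a non-negative ES utility, so the ES clause of Definition 19 holds.

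Next I would treat the CSs. Constraint (70b), enforced because $s_k^C$ never accepts a task whose offered payment falls below $c_{i,j,k}^C$, gives $p_{i,j,k}^{E\prime\rightarrow C\prime}\ge c_{i,j,k}^C$ for every served task; substituting into the CS utility (39) makes each summand non-negative, hence $u^{C\prime}\left(s_k^C,\varphi^\prime\left(s_k^C\right)\right)>0$. The resource-budget condition (70c) follows from the admission rule in line 31 of Algorithm 2, where each $s_k^C$ selects ES tasks only up to its residual capacity $G_k^C-\sum_{s_j^E\in\varphi\left(s_k^C\right)}\sum_{b_i\in\mathbb{B}_{j,k}}\beta_{i,j,k}-\sum_{s_j^E\in\varphi^\prime\left(s_k^C\right)}|\mathbb{B}^\prime_{j,k}|-\varepsilon_k$; invoking Definition 19 then closes the argument.

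The main obstacle is the careful bookkeeping of residual resources in the CS case: unlike the futures proof, the spot admission budget must simultaneously subtract the VMs already committed to fulfilled forward contracts ($\beta_{i,j,k}$), the realized inherent demand ($\varepsilon_k$), and the spot allocations $|\mathbb{B}^\prime_{j,k}|$ themselves, so I must verify that line 31 implements precisely the left-hand side of (70c) without double-counting the tasks shared across the Phase 3 cross-layer coupling. Once this accounting is confirmed consistent, the non-negativity claims reduce to the routine substitutions into (38) and (39) described above.
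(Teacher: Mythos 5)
Your proposal is correct and takes essentially the same route as the paper's proof: ES rationality follows from the payment cap $p_{i,j,k}^{E\prime\rightarrow C\prime}\le p_{i,j}^{U\prime\rightarrow E\prime}$ enforced by the update rule of Algorithm 2 (your term-by-term substitution into (38), using (69b) and (69c), is just a more explicit version of the paper's one-line argument), and CS rationality follows from (70b) together with the residual-capacity admission rule in line 31 of Algorithm 2, which enforces (70c). The only slip is cosmetic: the governing definition for the ES-CS side is Definition 20, not Definition 19.
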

\begin{proof}
	The individual rationality of each ES and CS are proved respectively, as following:
	
\noindent 
$\bullet$ \textbf{Individual rationality of ESs.} Note that we have $ p_{j,k}^{E\prime\rightarrow C\prime}\left\langle n+1 \right\rangle\leftarrow \text{min}\left\{p_{i,j,k}^{E\prime\rightarrow C\prime}\left\langle n \right\rangle+\mathrm{\Delta}p_j,p_{i,j}^{U\prime\rightarrow E\prime}\left\langle m \right\rangle \right\} $ in Algorithm 2, the final payment of a task from ES $ s_j^E $ will definitely be lower than or at least equal to it obtains payment from $ b_i $ (i.e., $ p_{i,j}^{U\prime\rightarrow E\prime} $), which guarantees a non-negative utility. 

\noindent 
$\bullet$ \textbf{Individual rationality of CSs.} Since we set $ c_{i,j,k}^C\leq p_{i,j,k}^{E\prime\rightarrow C\prime} $, which ensures that CS $ s_k^C $ obtains payment from ES $ s_j^E $ stays above its corresponding service cost. This guarantees a non-negative utility for $ s_k^C $. And the line 31 in Algorithm 2, which ensures that the resource demand of $ s_k^C $ does not exceed its supply. 

As a result, our proposed M2M matching of OS-CLM in the futures market is individual rational for both ESs and CSs.
\end{proof}
\begin{lem}
	No blocking coalition can exist in the ES-CS matching of OS-CLM.
\end{lem}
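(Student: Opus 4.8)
The plan is to mirror the argument that proved the absence of blocking coalitions in the ES--CS matching of OA-CLM (Lemma 6), but with the expected utilities replaced by the practical utilities $u^{C\prime}$ and $u^{E\prime\leftrightarrow C\prime}$, and with the maximum payment $p_j^{max}$ replaced by the MU-to-ES payment $p_{i,j}^{U\prime\rightarrow E\prime}$, which now caps every bid an ES submits to a CS in OS-CLM (line 35, Algorithm 2). As in Definition 19, I would treat Type 1 and Type 2 blocking coalitions separately, each by contradiction.

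For the Type 1 case, I would suppose that under a matching $\varphi^\prime$ a CS $s_k^C$ together with an ES set $\mathbb{S^\prime}\subseteq\bm{S^{E\prime}}$ forms a Type 1 blocking coalition, i.e.\ that (74) and (75) both hold. Since $s_j^E$ is not matched to $s_k^C$, the Gale--Shapley price-raising rule terminates only when $s_j^E$'s bid on the relevant task has climbed to its ceiling, so $p_{i,j,k}^{E\prime\rightarrow C\prime}\langle n\rangle=p_{i,j}^{U\prime\rightarrow E\prime}$ and, correspondingly, $u^{E\prime\leftrightarrow C\prime}\left(s_j^E,\varphi^\prime\left(s_j^E\right)\backslash\widetilde{\varphi^\prime}\left(s_j^E\right)\cup\left\{s_k^C\right\}\right)<u^{E\prime\leftrightarrow C\prime}\left(s_j^E,\varphi^\prime\left(s_j^E\right)\right)$. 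Because each ES selects CSs from its preference list $L_{i,j}^{E\prime}$ so as to maximize $u^{E\prime\leftrightarrow C\prime}$, any admissible replacement $\widetilde{\varphi^{\prime\prime}}\left(s_j^E\right)\subseteq\widetilde{\varphi^\prime}\left(s_j^E\right)$ can only lower the attainable utility, which yields $u^{E\prime\leftrightarrow C\prime}\left(s_j^E,\varphi^\prime\left(s_j^E\right)\right)>u^{E\prime\leftrightarrow C\prime}\left(s_j^E,\varphi^\prime\left(s_j^E\right)\backslash\widetilde{\varphi^{\prime\prime}}\left(s_j^E\right)\cup\left\{s_k^C\right\}\right)$ and directly contradicts (75). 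This chain is the exact OS-CLM counterpart of inequalities (63)--(66).

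For the Type 2 case, I would assume (76) and (77) hold. If $s_k^C$ rejects $s_j^E$, the sole cause permitted by Algorithm 2 is that $s_k^C$ has no surplus VMs, i.e.\ the capacity constraint (70c) is tight (line 31). Yet (76)--(77) assert that $s_k^C$ can admit the extra ES while strictly increasing its own utility $u^{C\prime}$, which presupposes spare capacity --- a contradiction. Combining the two cases, no blocking coalition of either type can exist, which proves the claim.

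I expect the Type 1 step to be the main obstacle, since it requires pinning down price-monotonicity at termination together with the utility-dominance of the preference-list-maximal replacement set $\widetilde{\varphi^{\prime\prime}}\left(s_j^E\right)$; the Type 2 step then follows quickly from the tightness of the capacity constraint (70c).
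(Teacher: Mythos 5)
Your proposal is correct and follows essentially the same route as the paper's proof of this lemma: for Type 1, the contradiction via the payment ceiling $p_{i,j,k}^{E\prime\rightarrow C\prime}\left\langle n \right\rangle = p_{i,j}^{U\prime\rightarrow E\prime}\left\langle m \right\rangle$ at termination combined with subset-monotonicity over $\widetilde{\varphi^{\prime\prime}}\left(s_j^E\right)\subseteq\widetilde{\varphi^{\prime}}\left(s_j^E\right)$ (the paper's chain (83)--(86)), and for Type 2, the observation that a rejection at the ceiling price can only stem from exhausted VMs, contradicting the blocking conditions' implicit spare capacity. The only discrepancies are equation-number offsets (the paper states Type 1 via (72)--(73) and Type 2 via (74)--(75)), which are immaterial since you wrote the argument blind.
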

\begin{proof}
We offer the proof to show there is no blocking coalition of either Type 1 or Type 2, as following:
	
	\noindent 
	$\bullet$ \textbf{There is no Type 1 blocking coalition in the ES-CS matching of OS-CLM.} We offer the proof by analyzing the cases of contradiction. 
	
	Under a given matching $ \varphi^\prime $, CS $ s_k^C $ and an ES set $ \mathbb{S}^\prime$, ($ \mathbb{S}^\prime\subseteq\bm{S^{E}}$) form a Type 1 blocking coalition $ \left(s_k^C;\mathbb{S}^\prime\right) $, as shown by (72) and (73).
	If a task of ES $ s_j^E $ does not trading with $ s_k^C $, the payment from ES $ s_j^E $ during the last round can only be equal to it obtains payment from $ b_i $, as given by (83) and (84).
	\begin{equation}\label{key}{\small
		\begin{aligned}
			p_{i,j,k}^{E\prime\rightarrow C\prime}\left\langle n \right\rangle = p_{i,j}^{U\prime\rightarrow E\prime}\left\langle m \right\rangle,
		\end{aligned}}
	\end{equation}
	\begin{equation}\label{key}{\small
		\begin{aligned}
		&{u^{E\prime\leftrightarrow C\prime}}\left(s_j^E,\varphi^\prime\left(s_j^E\right)\backslash\widetilde{\varphi^\prime}\left(s_j^E\right)\cup\left\{s_k^C\right\}\right)<\\&{u^{E\prime\leftrightarrow C\prime}}\left(s_j^E,\varphi^\prime\left(s_j^E\right)\right).
	\end{aligned}}
	\end{equation}
	
	If a task of $ s_j^E $ selects $ s_k^C $, we have $ p_{i,j,k}^{E\prime\rightarrow C\prime}\left\langle n^{*} \right\rangle \leq p_{i,j,k}^{E\prime\rightarrow C\prime}\left\langle n \right\rangle = p_{i,j}^{U\prime\rightarrow E\prime}\left\langle m \right\rangle $ and the following (85)
	\begin{equation}\label{key}{\small
		\begin{aligned}
			&{u^{E\prime\leftrightarrow C\prime}}\left(s_j^E,\varphi^\prime\left(s_j^E\right)\backslash\widetilde{\varphi^\prime}\left(s_j^E\right)\cup\left\{s_k^C\right\}\right) \geq\\& {u^{E\prime\leftrightarrow C\prime}}\left(s_j^E,\varphi^\prime\left(s_j^E\right)\backslash\widetilde{\varphi^{\prime\prime}}\left(s_j^E\right)\cup\left\{s_k^C\right\}\right),\\
		\end{aligned}}
	\end{equation}
	where $ 
	\widetilde{\varphi^{\prime\prime}}\left(s_j^E\right) \subseteq \widetilde{\varphi^\prime}\left(s_j^E\right) $. From (84) and (85), we can get
	\begin{equation}\label{key}{\small
		\begin{aligned}
			&{u^{E\prime\leftrightarrow C\prime}}\left(s_j^E,\varphi^\prime\left(s_j^E\right)\right)>\\&{u^{E\prime\leftrightarrow C\prime}}\left(s_j^E,\varphi^\prime\left(s_j^E\right)\backslash\widetilde{\varphi^{\prime\prime}}\left(s_j^E\right)\cup\left\{s_k^C\right\}\right),
		\end{aligned}}
	\end{equation}
	which is contrary to (73), and can thereby prove the inexistence of Type 1 blocking coalition.
	
	\noindent 
	$\bullet$ \textbf{There is no Type 2 blocking coalition in the ES-CS matching of OS-CLM.}
	Similarly, we also conduct the proof upon having contradiction.
	
	 Under a given matching $ \varphi^\prime $, CS $ s_k^C $ and an ES set $ \mathbb{S}^\prime$ ($ \mathbb{S}^\prime\subseteq\bm{S^{E{\prime}}}$) form a Type 2 blocking coalition $ \left(s_k^C;\mathbb{S}^\prime\right) $, as given by (74) and (75).
	If $ s_k^C $ rejects ES $ s_j^E $, the expense offered by $ s_j^E $ about task $b_i$ during the last round can only be $ p_{i,j,k}^{E\prime\rightarrow C\prime}\left\langle n \right\rangle = p_{i,j}^{U\prime\rightarrow E\prime}\left\langle m \right\rangle $, where the only reason of the rejection is that $s_k^C$ has no surplus resources. However, the coexistence of (74) and (75) shows that CS $ s_k^C $ has an adequate VMs to serve ESs, which contradicts the aforementioned assumption. Therefore, our proposed matching does not allow any Type 2 blocking coalition.
	
	All in all, there is no blocking coalition in our proposed ES-CS matching of OS-CLM.
\end{proof}
\begin{thm}
	(Strong stability of OS-CLM) OS-CLM is strongly stable.
\end{thm}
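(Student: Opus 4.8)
The plan is to obtain this theorem as a direct assembly of the four structural lemmas already established for OS-CLM, mirroring the route taken for Theorem~1 in the futures market. By Definition~21, the proposed OS-CLM is strongly stable precisely when both the MU-ES matching $\omega^\prime$ and the ES-CS matching $\varphi^\prime$ are individually rational and admit no blocking pair or blocking coalition. Hence my first step is to restate these four requirements and then discharge them one by one using the lemmas I am permitted to assume.

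For the MU-ES side I would invoke Lemma~9 to secure individual rationality (each MU pays no more than its valuation $v_{i,j}$, and each ES collects payments covering its service cost while respecting its access bound, i.e.\ constraints (68b), (69b), (69d)), and Lemma~10 to rule out both Type~1 and Type~2 blocking pairs. Symmetrically, for the ES-CS side I would call on Lemma~11 for individual rationality of ESs and CSs (constraints (69c), (70b), (70c)) and Lemma~12 to exclude Type~1 and Type~2 blocking coalitions. Since Algorithm~2 terminates in finitely many rounds by Lemmas~7 and~8, the returned matchings $\omega^\prime$ and $\varphi^\prime$ are well defined and simultaneously satisfy all four conditions, so Definition~21 immediately yields strong stability.

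The theorem statement itself is therefore a short bookkeeping argument; the genuine difficulty lives entirely in the supporting lemmas. If I had to point to the load-bearing step, it is the two no-blocking results (Lemmas~10 and~12), proved by contradiction: one supposes a blocking pair or coalition, uses the fact that a rejected MU's (resp.\ ES's) final price has already been driven up to its valuation $v_{i,j}$ (resp.\ to the received payment $p_{i,j}^{U\prime\rightarrow E\prime}$), and then shows that the purported utility-improving swap would contradict the capacity-constrained selection rule by which each ES/CS picks the utility-maximizing subset of proposers. Relative to OA-CLM this part is actually lighter, because OS-CLM reasons with deterministic practical utilities rather than expected utilities under risk constraints, so no probabilistic bounding (e.g.\ the Markov-inequality transformation used for the risk constraints of OA-CLM) is required at this stage.
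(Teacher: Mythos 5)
Your proposal is correct and follows essentially the same route as the paper: the paper's proof of this theorem likewise consists of the single observation that the matching returned by Algorithm~2 satisfies Lemmas~9, 10, 11, and 12, so Definition~21 yields strong stability. Your additional remarks (invoking Lemmas~7 and~8 for well-definedness, and locating the real work in the contradiction arguments of Lemmas~10 and~12) are consistent with the paper and accurately identify where the substance lies.
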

\begin{proof}
	Since the matching result of Algorithm 2 holds Lemma 9, Lemma 10, Lemma 11, and Lemma 12, according to Definition 21, our proposed OS-CLM in the spot market is strongly stable.
\end{proof}
\begin{thm}
	(Competitive equilibrium associated with resource trading between MUs and ESs in OS-CLM) The trading between MUs and ESs can reach a competitive equilibrium.
\end{thm}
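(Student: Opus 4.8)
The plan is to verify the three equilibrium conditions of Definition 23 directly against the terminal MU-ES matching $\omega^\prime$ produced by Algorithm 2, mirroring the argument of Theorem 2 but replacing every expected quantity by its practical counterpart and discarding the risk clauses, since the spot market is driven by the current network/market state. Because Lemma 9 guarantees that the MU-ES phase of Algorithm 2 halts after finitely many rounds, $\omega^\prime$ and the associated prices $p_{i,j}^{U\prime\rightarrow E\prime}$ are well defined, so it is enough to check that each of the three bullet conditions holds at termination.

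First I would dispatch the cost-covering and buyer-optimality conditions, which follow almost immediately from the design of the mechanism. For any ES $s_j^E$ matched to a MU $b_i$, constraint (69b) enforces $c_{i,j}^E \le p_{i,j}^{U\prime\rightarrow E\prime}$ throughout the iteration, giving the first condition. For the second, each MU $b_i$ orders its candidate ESs in non-ascending order of the practical utility $u^{U\prime}(b_i,s_j^E)$ via the preference list $L_i^{U\prime}$ of Definition 17 and always proposes to the top remaining ES (line 5, Algorithm 2); hence the ES with which $b_i$ eventually trades is the one yielding the maximum attainable practical utility among those willing to accept it, which is exactly the buyer-optimality condition.

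The remaining condition, concerning an unmatched MU, requires the payment-update rule. By line 16 of Algorithm 2 a rejected MU raises its offer through $p_{i,j}^{U\prime\rightarrow E\prime}\langle m+1\rangle = \min\{p_{i,j}^{U\prime\rightarrow E\prime}\langle m\rangle + \Delta p_i,\, v_{i,j}\}$, so the price is capped at the valuation $v_{i,j}$ and is strictly increased whenever the MU is both rejected and still strictly below $v_{i,j}$. Consequently, if $b_i$ is associated with no ES at termination, the stopping criterion $\Sigma_{b_i\in\bm{B^\prime}}\,flag_i = 0$ can only have been met because its price has saturated at $p_{i,j}^{U\prime\rightarrow E\prime} = v_{i,j}$; otherwise $flag_i$ would be reset and the iteration would continue. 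Invoking Definition 23 then completes the argument.

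I expect the delicate point to be this last condition: one must rule out that a MU is left unmatched for a capacity reason, e.g. constraint (69d) binding at every candidate ES, while its price is still strictly below $v_{i,j}$, in which case the third condition would fail. The clean resolution is to appeal to the no-blocking-pair property already established for OS-CLM in Lemma 11: an unserved MU cannot have been rejected by an ES that retains surplus capacity, so any rejection keeping the MU unmatched must be price-driven, forcing its offer upward until it reaches $v_{i,j}$ before the algorithm stabilizes. Tying these observations together confirms that the MU-ES trading in the spot market attains a competitive equilibrium.
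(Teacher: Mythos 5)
Your proposal is correct and follows essentially the same route as the paper's own proof: verifying the three equilibrium conditions via the price floor $c_{i,j}^E \leq p_{i,j}^{U\prime\rightarrow E\prime}$, the preference-list selection in line 5 of Algorithm 2, and the price-cap update rule in line 16 forcing an unmatched MU's payment to saturate at $v_{i,j}$. Your added care (invoking Lemma 9 for well-definedness and Lemma 11 to rule out capacity-driven unmatched MUs below valuation) goes slightly beyond the paper's terser argument but is harmless, and note only that the relevant definition is Definition 22, not 23, in the paper's numbering.
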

\begin{proof}
	To prove this theorem, we also prove that the three conditions in Definition 22 (given in Appendix D.4) can be held in MU-ES matching. First, we have $ p_{i,j}^{U\prime\rightarrow E\prime} \geq c^E_{i,j} $ to make sure that service cost of the CS will be covered by its income paid from MUs in each round. We demonstrate next that when MU $ b_i $ trades with an ES $ s_j^E $, $ b_i $ achieves maximum utility. This is attributed to the fact that $ b_i $ selects the ES based on its preference list $ L_i^{U\prime} $ (e.g., line 5, Algorithm 2), ensuring the attainment of the maximum utility for $ b_i $. Then, if $ b_i $ has not been matched to any ES $ s_j^E \in \bm{C_i^\prime} $, the payment of $b_i$ should be equal to its valuation $ v_{i,j} $ (line 16, Algorithm 2). 
Based on Definition 22, we thereby verify that the considered MU-ES matching in the spot market can own the property of competitive equilibrium.
\end{proof}
\begin{thm}
	(Competitive equilibrium associated with resource trading between ESs and CSs in OS-CLM) The trading between ESs and CSs can reach a competitive equilibrium.
\end{thm}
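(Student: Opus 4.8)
The plan is to confirm that the matching returned by Algorithm~2 satisfies all three conditions of Definition~23, following the same template used for Theorem~4 (the ES-CS counterpart in OA-CLM) and Theorem~7 (the MU-ES counterpart in OS-CLM). First I would dispatch the cost-covering condition: constraint (70b) in problem $\bm{\mathcal{F}^{C^{\prime}}}$ enforces $p_{i,j,k}^{E\prime\rightarrow C\prime}\geq c_{i,j,k}^C$ for every $s_j^E\in\varphi^\prime\left(s_k^C\right)$, so whenever a CS $s_k^C$ is associated with an ES $s_j^E$, its received payment covers its service cost. This establishes the first bullet of Definition~23 immediately.

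Second, I would argue that each ES trades with the utility-maximizing CS. In Phase~2 of Algorithm~2 every ES $s_j^E$ proposes to CSs strictly according to its task-specific preference list $L_{i,j}^{E\prime}$, which by Definition~19 is sorted in non-ascending order of the partial utility $u^{E\prime\leftrightarrow C\prime}$. Since the Gale-Shapley proposal mechanism lets an ES keep proposing down its list until a task is accepted or its payment reaches the cap, the CS finally serving a given task is the most preferred feasible one, which yields the second bullet. Third, for an unserved task I would read off the payment-update rule: line~35 raises $p_{i,j,k}^{E\prime\rightarrow C\prime}$ by $\Delta p_j$ but caps it at $p_{i,j}^{U\prime\rightarrow E\prime}$, and a task is rejected by every CS only once its offer has climbed to that cap. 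Hence if a task of $b_i$ associated with $s_j^E$ is not matched to any CS, then $p_{i,j,k}^{E\prime\rightarrow C\prime}=p_{i,j}^{U\prime\rightarrow E\prime}$, giving the third bullet, after which invoking Definition~23 completes the argument.

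The main obstacle I anticipate is the second condition, because the ``maximum-utility'' claim must survive the cross-layer Phase~3 interaction: after Phase~2, an ES may be forced in Phase~3 to drop some matched tasks when its purchased cloud capacity is insufficient, so I must check that the trades it retains are still its top preferences and that no profitable re-matching remains. Here I would lean on the already-established stability result (Lemma~12, no blocking coalition in the ES-CS matching of OS-CLM) together with Lemma~10 (finite convergence of the ES-CS matching) to rule out any residual profitable deviation, thereby certifying that the surviving matching is genuinely utility-maximizing from each ES's viewpoint and that the competitive-equilibrium conditions hold for the final, rather than merely intermediate, assignment.
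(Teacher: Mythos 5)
Your proposal is correct and follows essentially the same route as the paper's proof of this theorem: verifying the three conditions of Definition~23 via constraint (70b) for cost coverage, the task-specific preference list $L_{i,j}^{E\prime}$ (line~24 of Algorithm~2) for utility maximization, and the payment cap $p_{i,j,k}^{E\prime\rightarrow C\prime}=p_{i,j}^{U\prime\rightarrow E\prime}$ for unmatched tasks (lines~34--36). Your added care about the Phase~3 cross-layer adjustment, handled by invoking Lemma~12 and Lemma~10, goes slightly beyond what the paper writes but does not change the argument's structure.
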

\begin{proof}
		To prove this theorem, we also prove that three conditions in Definition 22 (given in Appendix D.4) can be held in ES-CS trading. First, we have $ p_{i,j,k}^{E\prime\rightarrow C\prime} \geq c^C_{i,j,k} $ to make sure that the service cost of the CS will be covered by its income paid from ESs in each round (i.e., constrain (70b)). We next demonstrate that when ES $ s_j^E $ chooses the trade with a CS $ s_k^C $, $ s_j^E $ attains maximum utility. This is attributed to the fact that $ s_j^E $ selects the CS based on its task-specific preference list $ L_{i,j}^{E\prime
		} $ (e.g., line 24, Algorithm 2), ensuring the achievement of the maximum utility for $ s_j^E $. Then, if a task of $ s_j^E $ has not been matched to any CS $ s_k^C $, the payment about the task from $ s_j^E $ (e.g., $ p_{i,j,k}^{E\prime\rightarrow C\prime} $ ) should be equal to it obtains payment from $ b_i $ (i.e., $ p_{i,j}^{U\prime\rightarrow E\prime} $) (line 34, Algorithm 2). 
Based on Definition 23, we thereby verify that the considered ES-CS trading in spot market can achieve competitive equilibrium.
\end{proof}

\begin{thm}
	(Weak Pareto optimality of OS-CLM) The proposed OS-CLM provides a weak Pareto optimalilty.
\end{thm}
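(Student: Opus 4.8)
The plan is to mirror the argument used for the weak Pareto optimality of OA-CLM, replacing expected utilities by practical utilities and invoking the strong stability of OS-CLM (the preceding theorem, established through Lemmas 9--12) together with Definition 25. The backbone is a contradiction: I would assume a Pareto improvement exists, i.e., the current matching pair $(\omega^{\prime},\varphi^{\prime})$ can be replaced by another feasible pair that strictly increases the practical social welfare (the summation of utilities of MUs, ESs, and CSs), and then show that such a replacement necessarily exhibits a blocking pair or a blocking coalition, contradicting strong stability.

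First I would recall that in Algorithm 2 every participant acts along its preference list (Definitions 17 and 19), each of which orders candidates by practical utility in non-ascending order, and that Lemmas 7 and 8 guarantee the two phases terminate after finitely many rounds at a fixed matching. Hence at termination each MU is served by the most preferred ES that has not rejected it, and each ES-task is matched to the most preferred CS that has not rejected it. Next I would translate a welfare-increasing reallocation into a local deviation: if a Pareto improvement existed, at least one participant---say a MU $b_i$---would obtain strictly higher utility under the alternative by being served by some ES $s_j^E \neq \omega^{\prime}(b_i)$. Because $b_i$ ranks $s_j^E$ higher on $L_i^{U\prime}$, and because the alternative is feasible (so $s_j^E$ can accommodate $b_i$, possibly by dropping a less profitable subset $\widetilde{\omega^{\prime}}(s_j^E)$), the pair $(b_i;s_j^E)$ satisfies the defining inequality of a Type 1 or Type 2 blocking pair in Definition 18. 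The analogous reasoning on the ES-CS layer yields a blocking coalition in the sense of Definition 20. Since strong stability of OS-CLM forbids both, no reallocation can raise welfare, which by Definition 25 makes the matching weak Pareto optimal.

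The hard part will be making the jump from a global increase in the summed social welfare to the existence of a single blocking pair or coalition watertight, because a Pareto improvement may simultaneously reassign several MUs, ESs, and CSs across both layers, and the cross-layer coupling of Phase 3 means that improving the MU-ES layer alters the resource supply feeding the ES-CS layer (and vice versa). I would handle this by arguing layer-by-layer: holding the ES-CS matching fixed, any welfare gain on the MU-ES side forces some MU onto a strictly preferred ES that can still host it, producing a blocking pair; then, holding the resulting MU-ES matching fixed, any residual gain on the ES-CS side forces some ES-task onto a strictly preferred CS, producing a blocking coalition. Each step contradicts the already-proven stability, and the competitive-equilibrium structure removes the remaining slack: by Definitions 22 and 23, an unmatched MU already pays exactly its valuation $v_{i,j}$ and an unmatched ES-task already pays exactly $p_{i,j}^{U\prime\rightarrow E\prime}$, so no feasible re-pairing of the currently unmatched participants can extract additional welfare either. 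With both the matched and unmatched cases closed, the decomposition completes the contradiction and establishes that OS-CLM is weak Pareto optimal.
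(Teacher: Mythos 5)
Your proposal is correct and follows essentially the same route as the paper's proof of Theorem 7: assume a Pareto improvement, observe that any participant strictly preferring an alternative partner (per its preference list $L_i^{U\prime}$ or $L_{i,j}^{E\prime}$) would constitute a blocking pair or coalition under Definitions 18 and 20, and contradict the strong stability established by Theorem 5 (via Lemmas 9--12), arguing the MU-ES and ES-CS layers separately. Your additional care with the layer-by-layer decomposition and the use of the competitive-equilibrium conditions (Definitions 22 and 23) for unmatched participants only tightens steps the paper treats informally, without changing the argument.
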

\begin{proof}
	Review the design of OS-CLM, each participant (e.g., MU, ES, CS) makes decisions according to its preference list. If the subsequent choice ranks higher in the participant's preference list, they will switch their matching target in the following round. Such a switch operation indicates that backing to the previous choice will not bring it with any larger expected utility. For a MU $ b_i $, if there exists an ES $ s_j^E $ that can offer a higher utility than its current matched ES, $ b_i $ and $ s_j^E $ are more inclined to establish a matching relationship, this, however, will form a blocking pair. Since Theorem 5 verifies that our proposed OS-CLM is stable while allowing no blocking pairs. There exists no Pareto improvement when the procedure of MU-ES matching terminates. Similarly, we can infer that there is no Pareto improvement in ES-CS matching (e.g., Lemma 12 and Theorem 5). As a summary, our studied OS-CLM game is weak Pareto optimal.
\end{proof}

\section{EUA Dataset}
The Melbourne central business district area of EUA datasets as our simulation area in Fig. 6.
\begin{figure}[t!] \centering 
	\vspace{-0cm}
	\subfigbottomskip=-1pt
	\subfigcapskip=-180cm
	\setlength{\abovecaptionskip}{-0cm}
	\setlength{\belowcaptionskip}{-0.5cm}
	\subfigure[] { 
		\includegraphics[width=0.85\columnwidth]{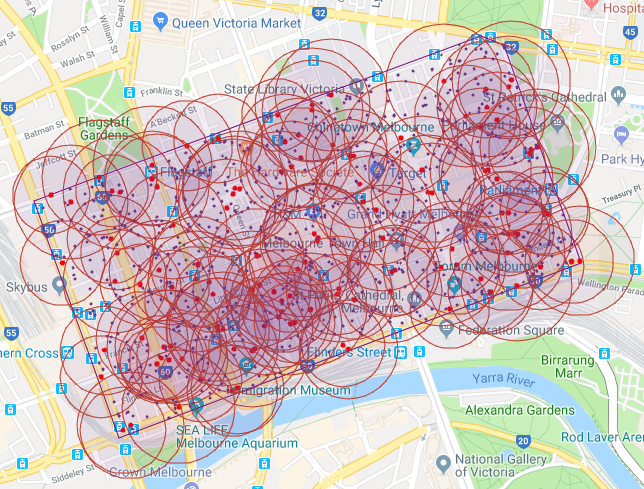} 
	} 
	
	\caption{Locations of base stations and mobile users within Melbourne central business district area in EUA Datasets.}  
	\label{fig} 
	\vspace{-0.3cm} 
\end{figure}

\section{Performance Evaluation on Utilities of MUs, ESs, and CSs}
\begin{figure} [b!]
	\centering    
	\vspace{-2.2cm}
	\subfigtopskip=2pt
	\subfigbottomskip=1pt
	\subfigcapskip=-2.0cm
	\setlength{\abovecaptionskip}{-1.6cm}
	\subfigure[] {
		\label{fig:a}     
		\includegraphics[width=0.495\columnwidth]{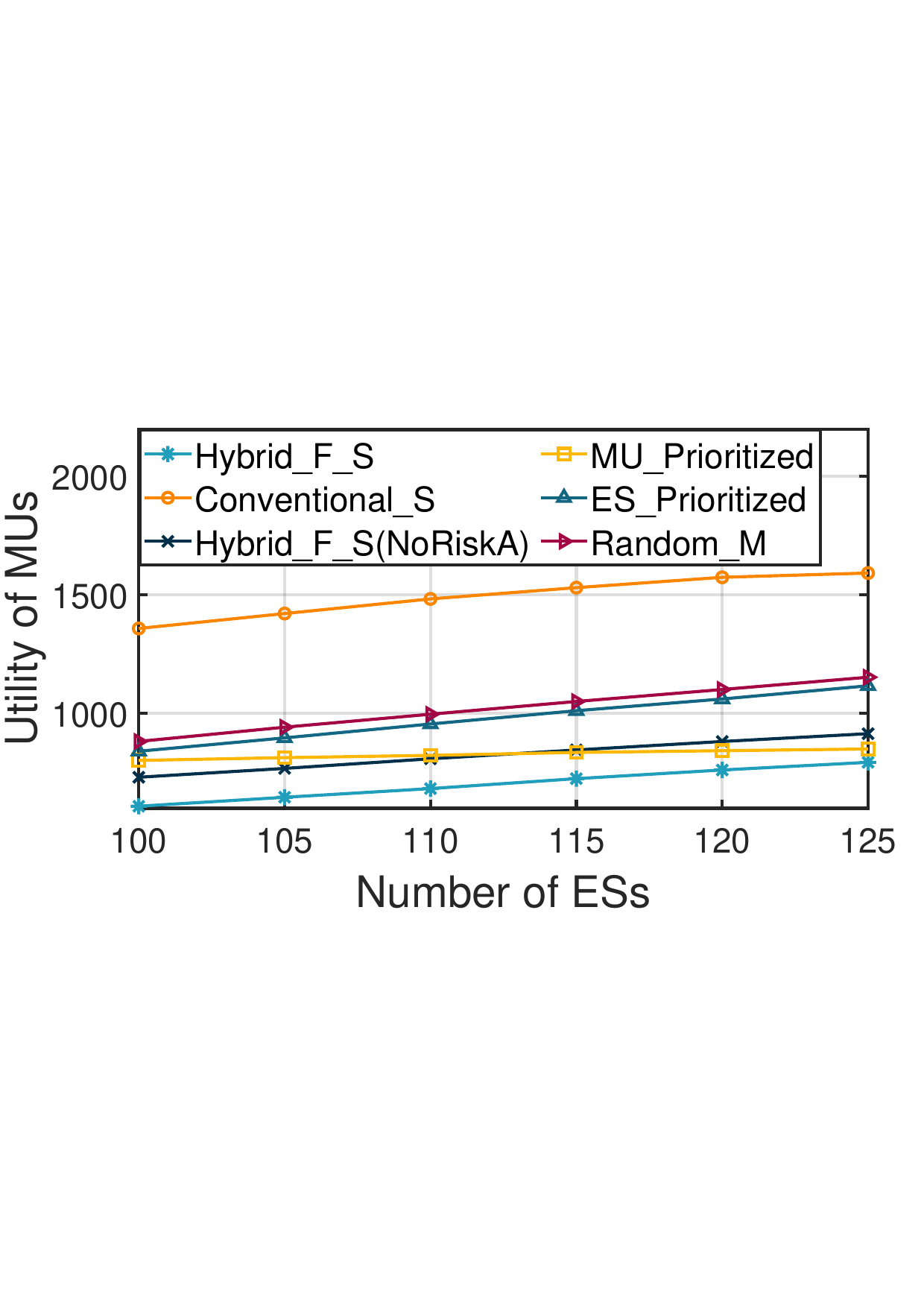}  
	}    \hspace{-6.0mm} \vspace{-35mm}
	\subfigure[] { 
		\label{fig:b}     
		\includegraphics[width=0.495\columnwidth]{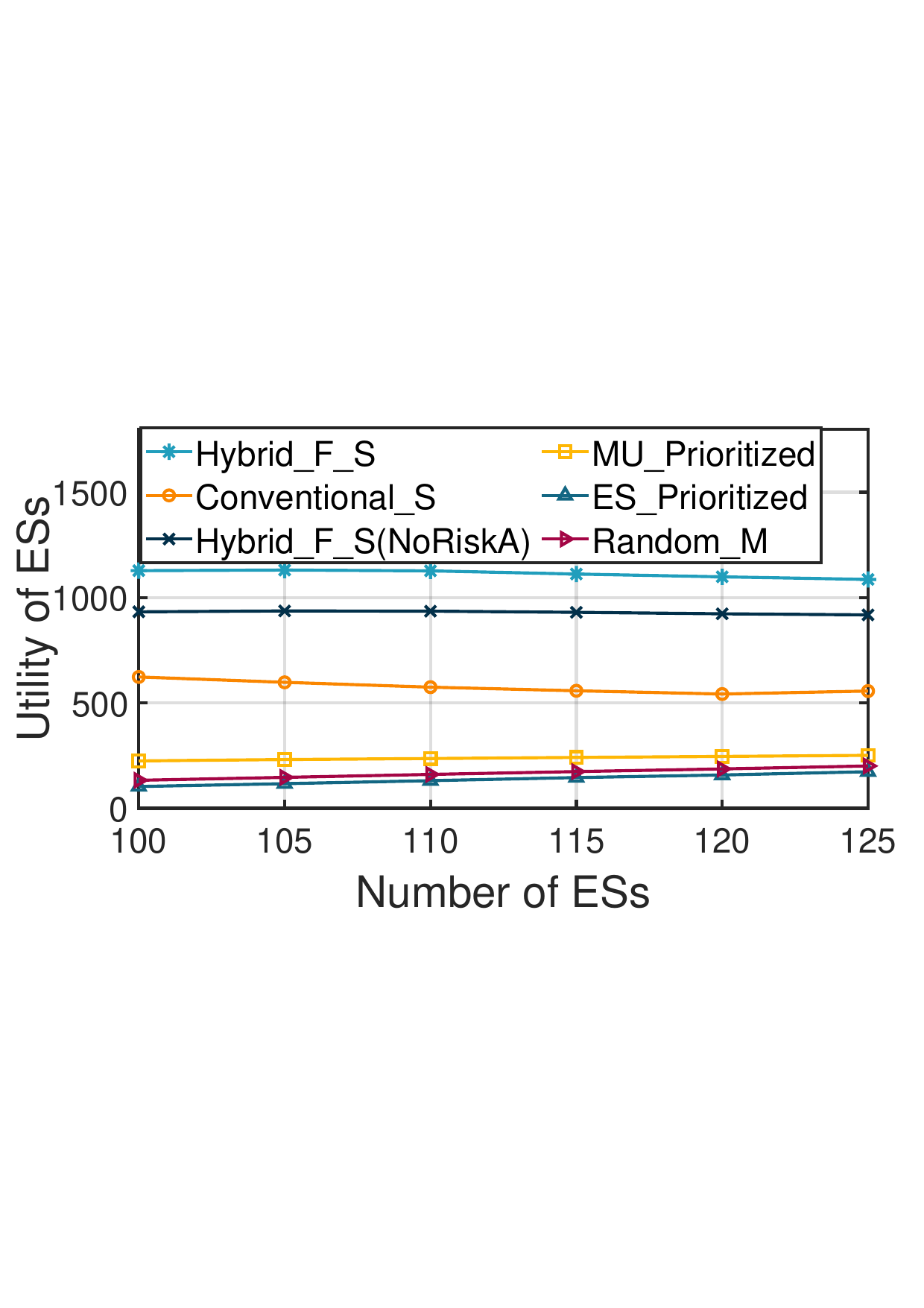}     
	}     
	\subfigure[] {
		\label{fig:c}     
		\includegraphics[width=0.495\columnwidth]{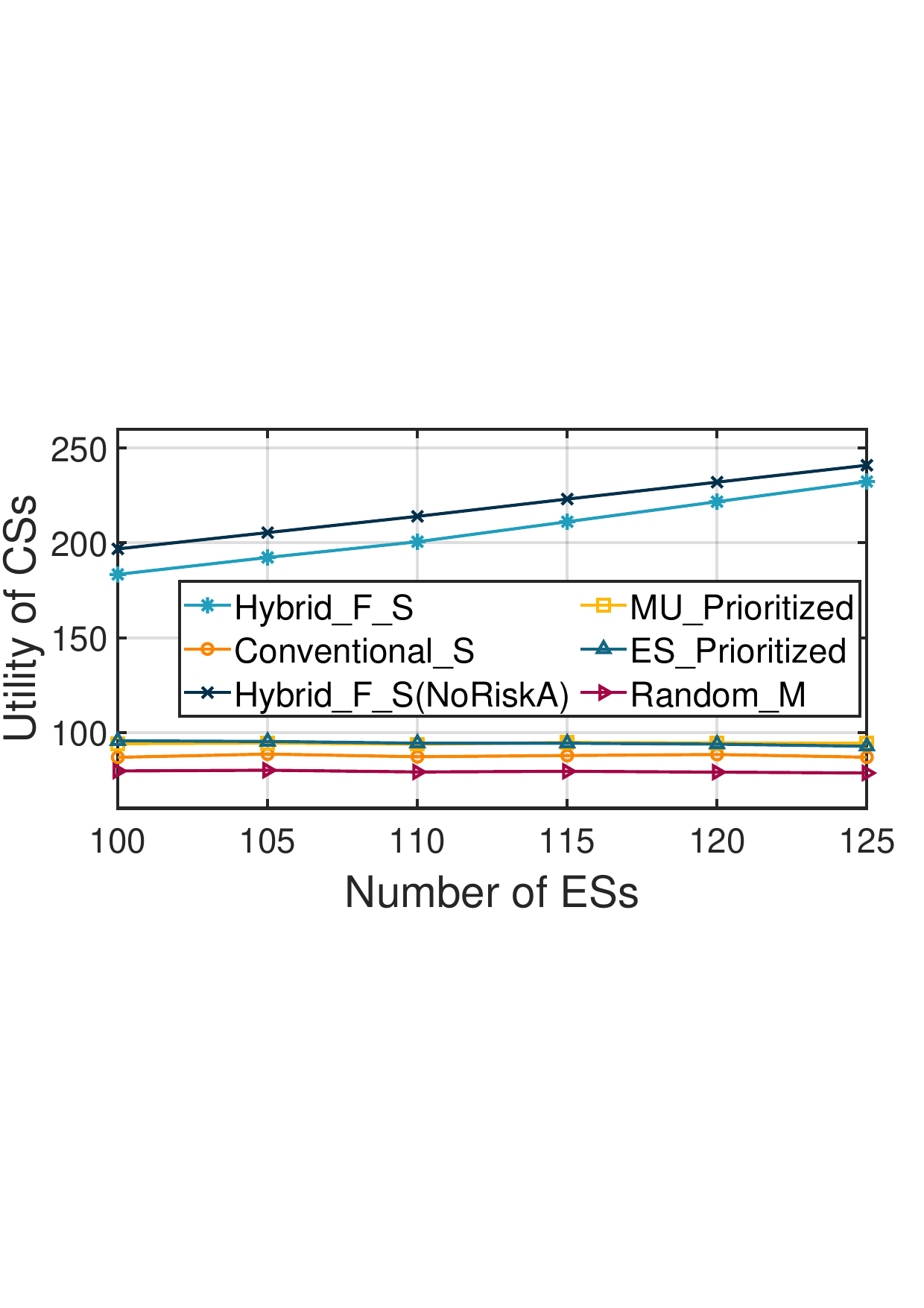}  
	}    \hspace{-6.0mm} \vspace{-35mm}
	\subfigure[] { 
		\label{fig:d}     
		\includegraphics[width=0.495\columnwidth]{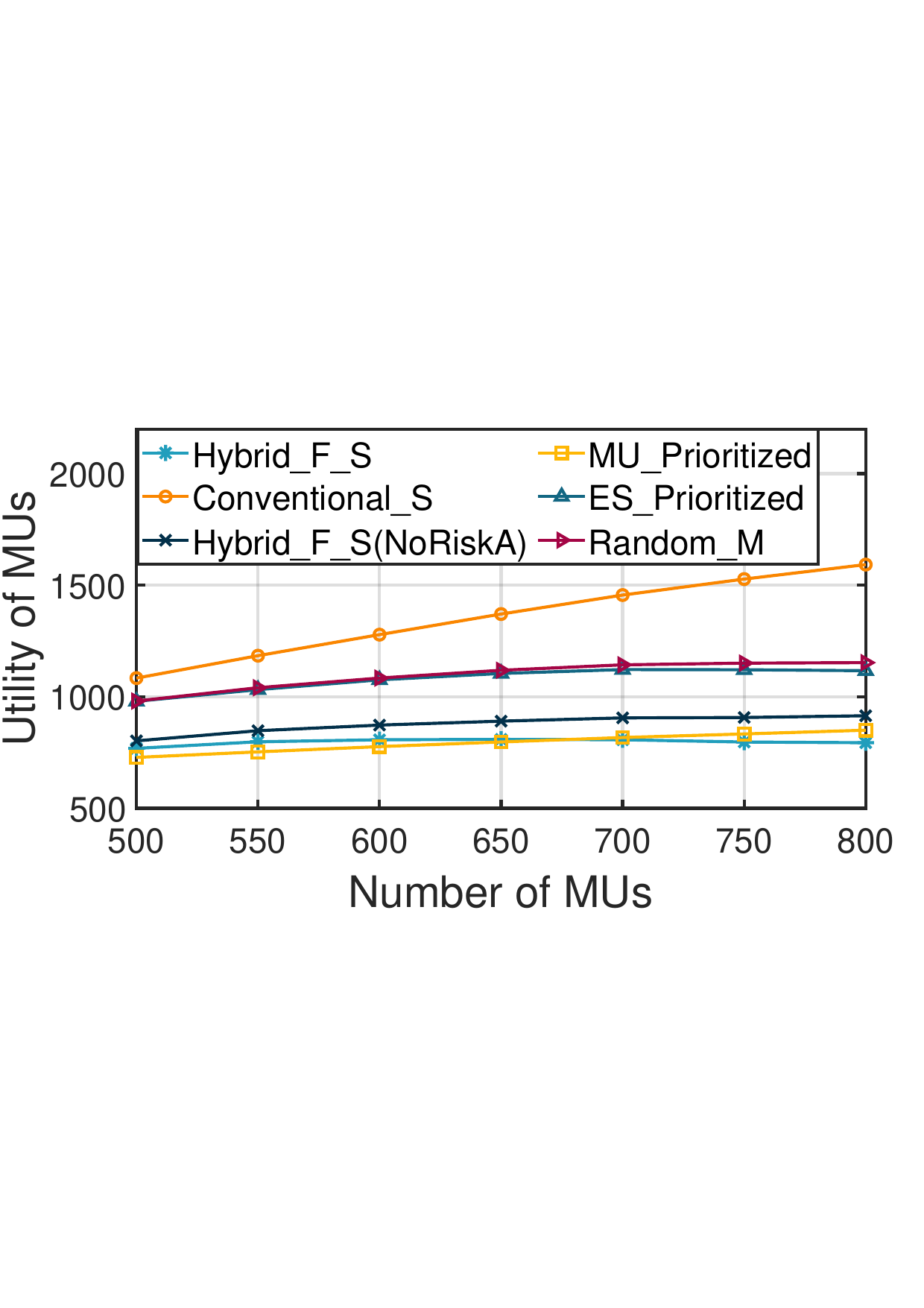}     
	}  
	\subfigure[] { 
		\label{fig:e}     
		\includegraphics[width=0.495\columnwidth]{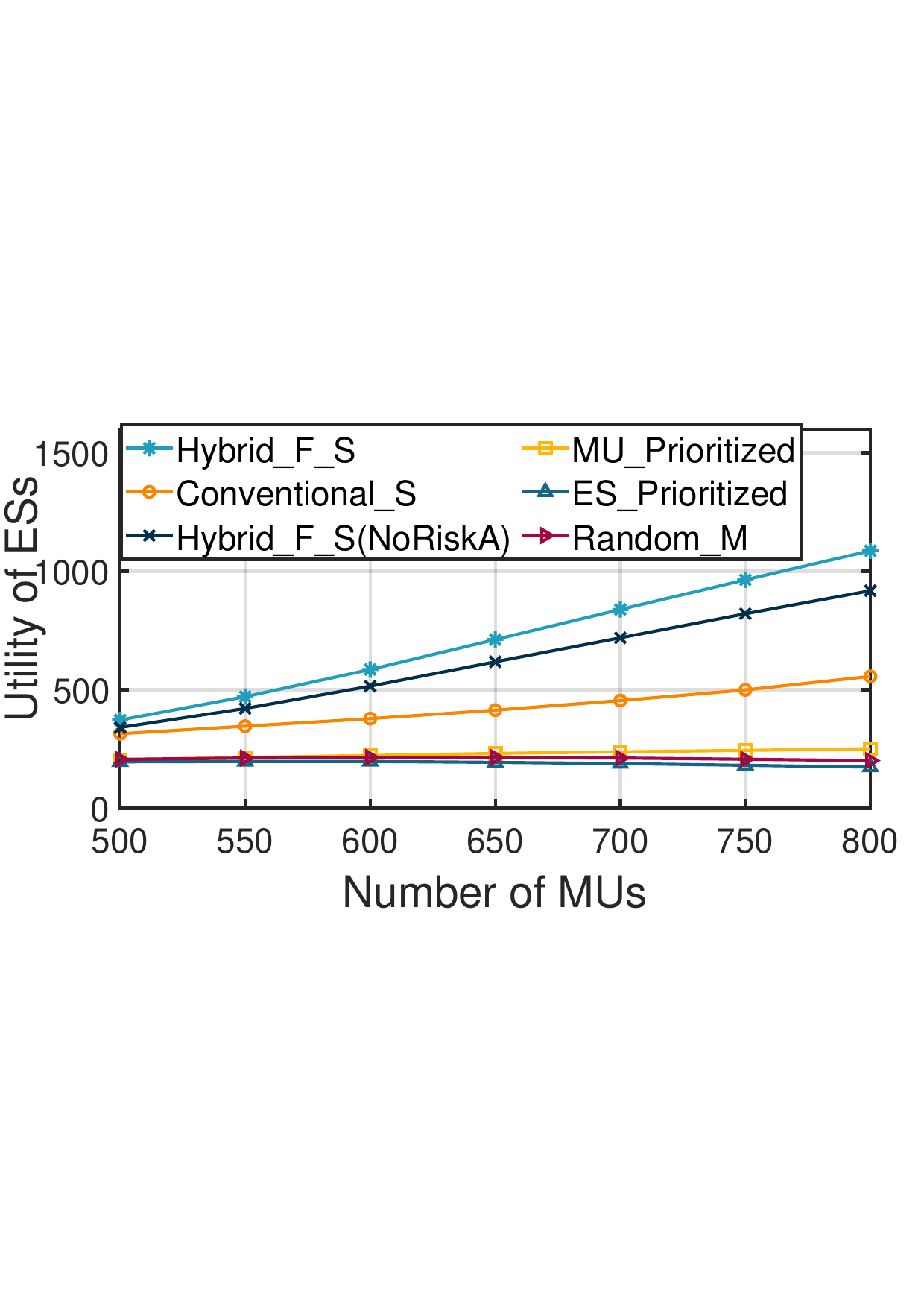}     
	}     \hspace{-6.0mm} 
	\subfigure[] {
		\label{fig:f}     
		\includegraphics[width=0.495\columnwidth]{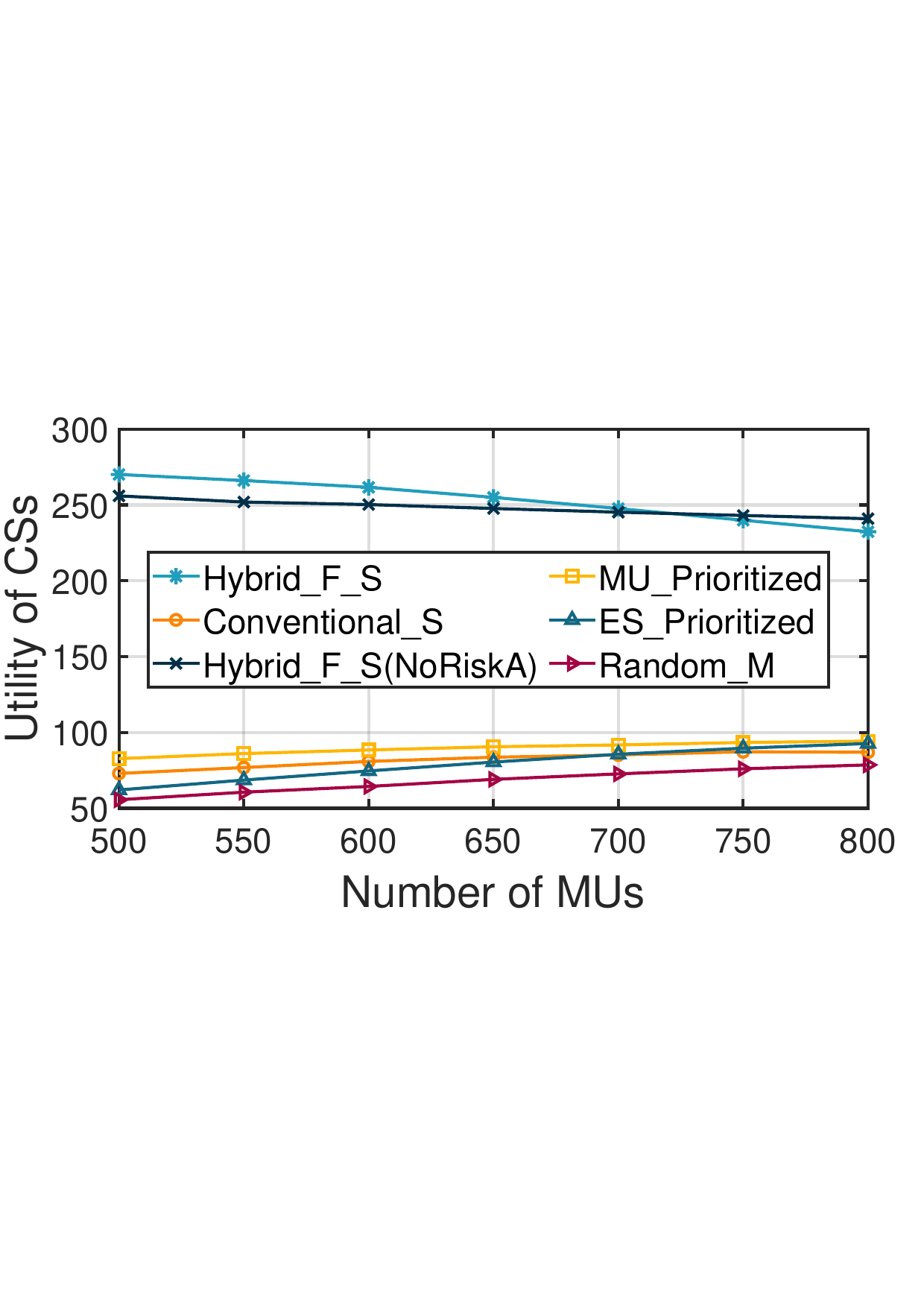}  
	}     
	\caption{Performance comparisons in terms of the utilities of MUs, ESs, CSs under different problem sizes, with (a)-(c) consider 800 MUs and 12 CSs, while (d)-(f) consider 125 ESs and 12CSs.}     
	\label{fig}   
	\vspace{-0.45 cm}  
\end{figure}

We study the individual utility of MUs, ESs, and CSs upon testing different numbers of ESs and MUs in Fig. 7, to better show how the social welfare are distributed among different parties.

Figs. 7(a)-(c) considers 800 MUs and 12 CSs, where in Fig. 7(a), the curves of all the methods show a rising trend since more ESs can bring more sufficient resources, while the demands of MUs can be met. Conventional\_S achieves the best performance on the utility of MUs due to its analysis of the current transaction conditions, which, in turn, suffers from excessive overhead (see Fig. 3). In our Hybrid\_F\_S, the forward contracts between MUs and ESs necessitates multiple rounds of negotiation to sign, which force MUs to continually increase their payment to enhance competitiveness, which, in turn, leads to the overall lowest utility of MUs. Even so, we introduce a good performance on social welfare (Fig. 2), while the risks of MUs have been well controlled although they may get rather low utilities. Namely, take part in our designed resource trading market can always get MUs with non-negative profits.
In Fig. 7(b), the curves of Conventional\_S, Hybrid\_F\_S, and Hybrid\_F\_S(NoRiskA) methods slightly decline. The underlying reason for this is that rather sufficient resources offers by the increasing number of ESs can weaken the competition between MUs, leading to a slight reduction of the utility of ESs. The proposed Hybrid\_F\_S outperforms other methods thanks to the well-designed overbooking and risk analysis. Besides, curves of ES\_Prioritized and Random\_M methods slightly increase with the number of ES raises, since they do not engage in bargaining. Then, the curve of MU\_Prioritized remains constant. This is because this method is puts MUS' interests at the first place, which will not impact the utility of ESs significantly.
As can be seen from Fig. 7(c), the utility of CSs for Hybrid\_F\_S(NoRiskA) remains higher than that of other methods, since Hybrid\_F\_S(NoRiskA) does not consider risk analysis, resulting in a high incidence of default by ESs, necessitating substantial compensation to CSs. Furthermore, the curve of our Hybrid\_F\_S increases with the number of ESs, owing to that the raising ESs can enable more ESs to sign forward contracts with CSs. Besides, the curves of other methods remain rather stable since they rely on a spot trading mode. Namely, under given demand of MUs, the resource requirements of CS resources can stay unchanged.

Figs. 7(d)-7(f) involves 125 ESs and 12 CSs. Particularly, Fig. 7(d) shows that curves of MUs' utility for all methods increase with a raising number of MUs, which is expected since the existence of more MUs implies a larger demand of resources. Conventional\_S achieves the best performance thanks to the analysis on current market conditions. However, in our proposed Hybrid\_F\_S, MUs and ESs need to conduct risk assessments, leading to more negotiations to reach a stable matching, as well as continuous competition among MUs. This finally results in lower utility of MUs in comparison with others. Similar to our analysis regarding Fig. 7(a), although we get non-ideal MUs' utility, our Hybrid\_F\_S outperforms baseline methods in other significant evaluation indicators (e.g., social welfare, time efficiency).
Fig. 7(e) reveals that the curves of Conventional\_S, Hybrid\_F\_S, and Hybrid\_F\_S(NoRiskA) methods show a raising trend. The reason is that these methods allow for bargains. In other words, a growing number of MUs can intensify competition among them, and thus bring more payments to ESs. In addition, the curves of other methods remain relatively stable due to the lack of bargain.
In Fig. 7(f), the curves of Conventional\_S, MU\_Prioritized, ES\_Prioritized, and Random\_M methods slightly increase with the increasing number of MUs, because larger resource demands encourage ESs to purchase services from CSs. Additionally, the utility of CSs of Hybrid\_F\_S and Hybrid\_F\_S(NoRiskA) slowly declines, due to that ESs have a wider selection of reliable MUs, and MUs are more likely to participate in practical transactions. This phenomenon, in turn, leads to the likelihood of ESs fulfilling their contracts with CSs. Furthermore, thanks to risk analysis, the penalty from ESs to CSs rapidly decreases with more MUs join in our considered market.
\end{document}